\documentclass[12pt,letter]{article}
\usepackage{titling}
\usepackage{bibunits}
\usepackage{amsmath}
\usepackage{amsfonts}
\usepackage{graphicx,psfrag,epsf}
\usepackage{enumerate}
\usepackage{natbib}
\usepackage{url} % not crucial - just used below for the URL 
\usepackage[breaklinks=true,bookmarksopen=true,colorlinks=true,citecolor=blue]{hyperref}
\usepackage{fontawesome}
\usepackage{setspace}
\usepackage{ragged2e}   
\usepackage{float}
 
\usepackage{booktabs}   
\usepackage{multirow}   
\usepackage{float}      

\usepackage{threeparttable}

\newcommand{\blind}{1}

\usepackage{tikz-network}

\usepackage{amsmath}
\usepackage{kbordermatrix}

\usepackage{mwe}
\usepackage{subfig}

\usepackage{tikz}
\usepackage{pgfplots}

\usepackage{pgfplotstable}
\usepgfplotslibrary{groupplots}

\usepackage{pdflscape}

\newtheorem{theorem}{Theorem}

\newtheorem{lemma}{Lemma}[section]

\newtheorem{proposition}{Proposition}

\newtheorem{assumption}{Assumption}

\newenvironment{proof}[1][Proof]{\noindent\textbf{#1.} }{\ \rule{0.5em}{0.5em}}
\renewcommand{\baselinestretch}{1.3}
\makeatletter
\renewcommand{\thetheorem}{\arabic{theorem}}
\makeatother

\numberwithin{equation}{section}
\makeatletter
\begin{document}
\begin{bibunit}[jpe]

\def\spacingset#1{\renewcommand{\baselinestretch}%
{#1}\small\normalsize} \spacingset{1}

%%%%%%%%%%%%%%%%%%%%%%%%%%%%%%%%%%%%%%%%%%%%%%%%%%%%%%%%%%%%%%%%%%%%%%%%%%%%%%

\if1\blind
{
\title{Regression with Observational Multilayered Network Data}
\author{Juan Estrada\thanks{Analysis Group Economic Consulting, Washington, DC, USA. \faEnvelopeO: \href{mailto:juan.estrada@analysisgroup.com}{juan.estrada@analysisgroup.com}.} \and Kim P. Huynh\thanks{Department of Economics, Indiana University, 100 S Woodlawn, Bloomington, IN 47405, USA. The Laboratoire d’\'Economie d’Orl\'eans, Universit\'e d'Orl\'eans, Orl\'eans, France. \faEnvelopeO: \href{mailto:kim@huynh.tv}{kim@huynh.tv}.} \and David T. Jacho-Ch\'{a}vez\thanks{Corresponding Author: Department of Economics, Emory University, Rich Building 306, 1602 Fishburne Dr., Atlanta, GA 30322-2240, USA. \faEnvelopeO: \href{mailto:djachocha@emory.edu}{djachocha@emory.edu}.} \and Leonardo S\'{a}nchez-Arag\'{o}n\thanks{Facultad de Ciencias Sociales y Human\'{i}sticas, Escuela Superior Polit\'{e}cnica del Litoral, ESPOL, Campus Gustavo Galindo Km. 30.5 V\'{i}a Perimetral, P.O. Box 09-01-5863, Guayaquil, Ecuador. \faEnvelopeO: \href{mailto:lfsanche@espol.edu.ec}{lfsanche@espol.edu.ec}.}}
  \maketitle
} \fi

\if0\blind
{
  \bigskip
  \bigskip
  \bigskip
  \begin{center}
    {\LARGE\bf Regression with Observational Multilayered Network Data}
\end{center}
  \medskip
} \fi

\bigskip
\begin{abstract}
\noindent  A novel method to estimate social effect coefficients in the popular so-called linear-in-means regression model in the Social Sciences is presented here that utilizes non-experimental multidimensional network data. The procedure can accommodate social interactions that correlate with the error in the model by making use of a different set of network links among the same observations that are exogenous in the traditional sense. In particular, the full observability of a two-layered \emph{multiplex} network data structure is assumed here to propose a new Generalized 3-Stage Least Squares (G3SLS) estimator that is consistent,  asymptotically normally distributed, and also easy to implement using widely-used existing statistical software because of its closed-form definition. The underlying assumptions are general enough to accommodate common problems with observational data such as measurement error, simultaneity, and unobserved heterogeneity. Monte Carlo exercises confirm the good small sample performance of the proposed G3SLS estimator in these scenarios. An empirical application finds positive and significant peer effects in citations among research articles published in top general-interest journals in economics. 
\end{abstract}

\noindent%
{\it Keywords:}  Instrumental Variables; Linear-in-Means Models; Multidimensional Networks; Multilayered Networks; Multiplex Networks
\\
\noindent%
{\it JEL code:} A1, C21, C31, C51, I23, J24.  
\vfill

\newpage
\spacingset{1.45} % DON'T change the spacing!
\section{Introduction}\label{sec:intro}
In social science research, understanding the causal mechanisms behind individual outcomes is a central challenge. With network data, the outcome of a unit, such as the performance of a firm or the academic achievement of a student, is often influenced by its own characteristics and those of its peers (those to whom they are connected). The \emph{linear-in-means model} is a standard tool for quantifying these interactions, which are broadly categorized as direct ($\boldsymbol{\gamma}$), peer ($\beta$), and contextual effects ($\boldsymbol{\delta}$). The model is formalized as follows.

\begin{equation}
\label{intro}
  y_{i}=\alpha+\beta\sum_{j \neq i}w_{i,j}y_{j}+\sum_{j \neq i}w_{i,j}\mathbf{x}_{j}^{\top}\boldsymbol{\delta}+ \mathbf{x}_{i}^{\top}\boldsymbol{\gamma}+v_{i}\text{,}
\end{equation}

\noindent where $i, j \in \{1, \dots, n\}$ represents the nodes within the network, $\mathbf{x}$ is a vector of attributes associated with the units, and $w_{i,j}$ takes a value of 1 if the nodes $i$ and $j$ are connected (indicating a link or edge) and 0 otherwise. The term $v_{i}$ accounts for an unobserved random error, and $n$ is the total number of nodes (or observations) in the network. The entire network structure is represented by an adjacency matrix $n \times n$, $\mathbf{W}$, where the element at position $(i,j)$ corresponds to $w_{i,j}$. The parameters in \eqref{intro} are straightforward to identify when the network structure is exogenous; that is, it is not correlated with the unobserved error term of the model. The error term $\mathbf{v} = [v_1, \dots, v_n]^{\top}$ has an expected value of zero conditional on the observed covariates and the network structure, i.e., $E[\mathbf{v}|\mathbf{X}, \mathbf{W}] = \mathbf{0}$, where $\mathbf{X} = [\mathbf{x}_1, \dots, \mathbf{x}_n]^{\top}$ is the matrix of covariates; see \cite{Paula2017} and references therein.

However, when the assumption of exogeneity does not hold—meaning that $E[\mathbf{v}|\mathbf{X}, \mathbf{W}] \neq \mathbf{0}$—the situation becomes more complex. This occurs when the network structure or the error term is correlated with the covariates, which introduces endogeneity. Endogeneity can arise from various sources, such as simultaneity bias (where outcomes and connections influence each other), errors in the measurement of network links, or homophily (the tendency of similar units to form connections). This challenge remains a focal area of study in social science research, as highlighted by works such as \cite{Johnsson2019} and \cite{Chan_et_al_social_effects}.

As in \cite{Chan_et_al_social_effects}, our paper uses a type of multilayered (multidimensional) network data structure known as \emph{multiplex} networks to consistently estimate and perform correct inference on the structural social parameters in \eqref{intro} with a potentially endogenous network structure $\mathbf{W}$. In particular, it is assumed that the researcher observes another set of social ties between the original observations $\{w_{0;i,j}\}_{j=1,j\neq i}^n$ in the form of an adjacency matrix $\mathbf{W}_0$ $n\times n$, which is exogenous in the usual sense, i.e., $E_{\mathbf{X},\mathbf{W}_0}[\mathbf{v}]=\mathbf{0}$, instead of leaving the correlation structure between the errors $\mathbf{v}$ and the original social structure $\mathbf{W}$ in \eqref{intro} unspecified. This type of data structure is becoming increasingly popular in the Social Sciences; see, e.g., \cite{Jackson_Multiplex} in Anthropology, \cite{manta2021} in Economics, \cite{Aldasoro_Alves_JFinStab} in Finance, \cite{An_et_al_2026} in Sociology, and \cite{PoliSci_Multiplex} in Political Science. See \cite{Boccaletti2014} and \cite{Kivela_multilayer_network_2014} for an up-to-date survey of the mathematical underpinning of this type of data.

Our proposed estimator of parameters in \eqref{intro} is performed in three steps; we shall hereafter refer to it as the Generalized Three-Stage least squares estimator (G3SLS). Using the linearity of the model and simple linear projection arguments in each step, the resulting estimation procedure is very simple to implement. It is readily available in \texttt{Stata}, i.e., \cite{netivreg_stata_journal}. Furthermore, the estimator is shown to be asymptotically normal at the standard root--$n$ rate of convergence. The estimation effect from the multistage procedure is fully characterized, and a consistent asymptotic variance-covariance estimator is proposed as well.

Our approach is related to a growing literature that uses instrumental variables for identification in linear-in-means models, notably \cite{Kelejian_et_al_2014}, \cite{Konig_et_al_2019}, and \cite{Lee_et_al_2021}. These papers generally address network endogeneity by constructing instruments derived from observable dyadic attributes or estimated link formation probabilities. In contrast, the G3SLS estimator proposed here leverages the multiplex data structure by assuming that the researcher observes a distinct exogenous layer $\mathbf{W}_0$ that correlates with the endogenous layer $\mathbf{W}$. Unlike \cite{Kelejian_et_al_2014}, who achieves identification by assuming that the elements of the endogenous weighting matrix are linear functions of observable exogenous dyadic variables (such as distance or border length) and then estimating these weights to construct the instrument matrix, our identification strategy utilizes a linear projection onto a distinct, observed exogenous network layer $\mathbf{W}_0$. Furthermore, unlike \cite{Konig_et_al_2019}, who addresses endogeneity by explicitly modeling the strategic network formation process to construct instruments based on predicted link probabilities, our method does not require a specific description of the network formation process, allowing it to remain robust to different sources of endogeneity, such as simultaneity and measurement error. Additionally, while \cite{Lee_et_al_2021} constructs instruments by weighting peers' attributes with predicted link probabilities derived from a logistic regression on exogenous dyadic characteristics, our method directly utilizes the observed connections in the exogenous network $\mathbf{W}_0$ for identification, thereby bypassing the need to estimate a link formation model.

The Monte Carlo exercise showcases the versatility of the proposed estimator by analyzing its performance in three separate data generating processes that consider different scenarios, such as omitted variable bias, measurement error, and unobserved homophily in network formation. The simulation results show that the proposed G3SLS estimator performs well in terms of bias and root mean squared error, with sample sizes as small as 50 observations. In addition to the simulated data, this paper also presents an application to real data on publication outcomes in Economics \citep{netivreg_g3sls}. The use of web scraping and existing data on authors' research fields, education, and employment history allows for the creation of two types of professional ties among scholars, namely co-authorship and alumni connections. These multiplex networks are then used to uncover positive and significant peer effects in terms of citations among articles published by these scholars, as well as significant positive effects of research teams that are gender diverse on the quality of a paper, measured in terms of citation outcomes after controlling for other articles' characteristics, such as the number of pages, number of bibliographic references, co-authoring with current and previous editors, and various network fixed effects.

The structure of the paper is as follows. Section \ref{background} defines multiplex networks with an example and provides various scenarios in which the required exogeneity requirement for one of the network layers can be naturally satisfied. Section \ref{indet} introduces the model and identification conditions for the parameters of interest. Section \ref{est} describes the multi-step estimation procedure, asymptotic properties, and valid asymptotic standard errors. Section \ref{mc} presents the small sample properties of the proposed estimator in various Monte Carlo exercises, while Section \ref{emp} discusses an empirical application. Finally, Section \ref{conclusion} concludes. 

\ref{Appendix_A} and \ref{Appendix_B} contain, respectively, all mathematical derivations of the main results and the conditions used to establish the asymptotic properties of the proposed estimator, while intermediate steps are collected in \ref{Appendix_C}. The supplementary material reports further numerical experiments evaluating the proposed estimator across various scenarios, see \ref{Appendix_D}. Finally, \ref{Appendix_E} and the references therein provide details on the real data application and several robustness checks.

\section{Background\label{background}}
\subsection{Multiplex Networks}
The cornerstone of the identification of social parameters in \eqref{intro} in this paper is the complete observability of more than one type of social interaction between economic agents. These data structures are known as \emph{multi-dimensional} or \emph{multi-layered} networks; see, e.g., \cite{Boccaletti2014} and \cite{Kivela_multilayer_network_2014} for up-to-date comprehensive surveys and references therein. Following the definition of \citeauthor{Boccaletti2014} (\citeyear{Boccaletti2014}) a multilayer network is a pair $\mathcal{M}=(\mathcal{G},\mathcal{C})$, where $\mathcal{G}=\{g_{m};\quad m \in \{1,\dots,M\}\}$ is a family of $M$ graphs, and $\mathcal{C}$ is the set of interconnections between nodes of different layers $g_{\alpha}$ and $g_{\beta}$ with $\alpha\neq \beta$. When the same nodes are in each layer and there are no connections between different nodes in different layers except with themselves, these networks are called \emph{multiplex}; see, e.g., \cite{Jackson_Multiplex}. The latter is the specific type of data structure used in this paper for identification.

Figure \ref{F1}(a) presents an example of a two layer multiplex network, where there are two different types of connections between the same four nodes, i.e., the blue and red edges. The network $G_{0}$ depicted with the blue edges represents a (possibly predetermined) network, while the red edges in $G$ represent another (possibly endogenous) network. In this example, $M=2$, $g_{\alpha}=\{(1,2),(1,4),(3,4)\}$, and $g_{\beta}=\{(1,2),(1,3),(3,4)\}$ are described in the corresponding adjacency matrices $\mathbf{W}$ and $\mathbf{W}_0$ in panel (c). The relationship between the networks is given by the different types of edges that are shared by the same nodes; see, i.e., the flattened version of the multiplex network in Figure \ref{F1}(b). In this example, many of the red connections exist where the blue connections also exist, showing an important relationship between the networks. The matrix product $\widetilde{\mathbf{W}}\equiv\mathbf{W}_{0}\mathbf{W}$ in panel (c) encapsulates a measure of correlation between the networks that is relevant for the identification and estimation results presented in Sections \ref{indet} and \ref{est} below, respectively. The $(i,i)$th element of  $\widetilde{\mathbf{W}}$ represents the number of nodes $j \neq i$ that are connected to node $i$ by both types of edges (blue and red), e.g., the position $(1,1)$ of $\widetilde{\mathbf{W}}$ equals one because only node $2$ is connected with node $1$ by both types of edges. Additionally, the $(i,j)$th element of $\widetilde{\mathbf{W}}$ contains the number of two length paths that start with a blue connection from $\mathbf{W}_{0}$ and are followed by a red connection from $\mathbf{W}$. These types of paths are called \textit{inter-layer intransitive triads} (`a friend's relative in the family layer is not a friend in the friendship layer') hereafter. For example, the position $(1,4)$ in $\widetilde{\mathbf{W}}$ is equal to one because there is an inter-layer intransitive triad that changes the edge colors connecting nodes 1 and 4 via node 3.

\begin{figure}[!htb]
\caption{Example of a Two-Layered Multiplex Network and Their Adjacency Matrices}\label{F1}
    \centering
\begin{minipage}{0.33\textwidth}

\vspace{7mm}

        \begin{tikzpicture}[multilayer=3d]

\Vertex[x=-0.1, y = 0.2, label = 1,layer=1, color = red!60]{1b}
\Vertex[x=0.5, y = 0.8, label = 2,layer=1, color = red!60]{2b}
\Vertex[x=1.7, y = 0.1, label = 3, layer=1, color = red!60]{3b}
\Vertex[x=1, y = -1, label = 4,layer=1, color = red!60]{4b}

\Edge[color = red](1b)(2b)
\Edge[color = red](3b)(4b)
\Edge[color = red](1b)(4b)

\Vertex[x=0, y = -0.4, label = 1,layer=2, color = blue!60]{1a}
\Vertex[x=0.5, y = 0.4, label = 2,layer=2, color = blue!60]{2a}
\Vertex[x=1.7, y = -0.2, label = 3, layer=2, color = blue!60]{3a}
\Vertex[x=1.1, y = -1.6, label = 4,layer=2, color = blue!60]{4a}

\Edge[color = blue](1a)(2a)
\Edge[color = blue](1a)(3a)
\Edge[color = blue](4a)(3a)

\SetLayerDistance{-3}
\Plane[x=-0.6,y=-1.7,width=3,height=3,color=red!40,layer=1]
\Plane[x=-1.7,y=-0.7,width=3,height=3, color=blue!40,layer=2]

\Edge[style=dashed](1a)(1b)
\Edge[style=dashed](2a)(2b)
\Edge[style=dashed](4a)(4b)
\Edge[style=dashed](3a)(3b)

\Text[x = -0, y = -2.2, layer = 1, color = red, opacity = 2, rotation=30]{$G$}
\Text[x = -0.3, y = -1.5, layer = 2, color = blue, opacity = 2, rotation=30]{$G_0$}
\end{tikzpicture} 
        
\begin{center}
Panel (a)
\end{center}
\end{minipage}
\hfill
\begin{minipage}{0.33\textwidth}
\flushright

\vspace{16mm}

     \begin{tikzpicture}
    \Vertex[x=0, y=0, color = white, label = 1]{1}
    \Vertex[x=1,y=1, color = white, label = 2]{2}
    \Vertex[x=2, y=0, color = white, label = 3]{3}
    \Vertex[x=1, y=-1, color = white, label = 4]{4}
    
    \Edge[color = blue, bend = -30](1)(2)
    \Edge[color = blue](1)(3)
    \Edge[color = blue, bend = -30](3)(4)

    \Edge[color = red, bend = 30](1)(2)
    \Edge[color = red](1)(4)
    \Edge[color = red, bend = -30](4)(3)
    
    \Text[x = -0.5 , y = -0.2, layer = 1, color = red]{$\mathbf{G}$}
    \Text[y = 2.3, x = 0.5, layer = 1, color = blue]{$\mathbf{G}_{0}$}
\end{tikzpicture}
     
\vspace{18mm}

\begin{center}
Panel (b)
\end{center}
\end{minipage}%
\hfill
\begin{minipage}{0.30\textwidth}

\vspace{5mm}

\centering
\footnotesize

         \[
            \textcolor{red}{\mathbf{W}}=
            \begin{bmatrix}
            0 & \textcolor{red}{1} & 0 & \textcolor{red}{1}  \\
            \textcolor{red}{1} & 0 & 0 & 0  \\
             0 & 0 & 0 & \textcolor{red}{1} \\
            \textcolor{red}{1} & 0 & \textcolor{red}{1} & 0
            \end{bmatrix}
            \]
            
         \[
            \textcolor{blue}{\mathbf{W}_{0}}=
            \begin{bmatrix}
            0 & \textcolor{blue}{1} & \textcolor{blue}{1} & 0  \\
            \textcolor{blue}{1} & 0 & 0 & 0  \\
             \textcolor{blue}{1} & 0 & 0 & \textcolor{blue}{1} \\
            0 & 0 & \textcolor{blue}{1} & 0
            \end{bmatrix}
            \]

         \[
           \textcolor{blue}{\mathbf{W}_{0}}\textcolor{red}{\mathbf{W}}=
            \begin{bmatrix}
            1 & 0 & 0 & 1  \\
            0 & 1 & 0 & 1  \\
             1 & 1 & 1 & 1 \\
            0 & 0 & 0 & 1
            \end{bmatrix}
            \]
\normalsize

\vspace{5mm}
\begin{center}
Panel (c)
\end{center}
\end{minipage}

\vspace{0.3cm}

\begin{minipage}{1\textwidth}
\footnotesize
Note: Panel (a) displays an undirected two-layered \emph{multiplex} network structure among four nodes. The red edges showcase a possibly endogenous network, while the blue edges display exogenous connections among these agents -- with associate adjacency matrices in Panel (c). The figure in panel (b) displays the flattened version of the multiplex network. It facilitates the visualization of the inter-layer intransitive triads such as those between nodes 1 and 4 or nodes 2 and 4, i.e., matrix $\widetilde{\mathbf{W}}$.
\end{minipage}
\end{figure}

\subsection{Some Examples\label{some_examples}}

In economics, \cite{Goldsmith-Pinkham2013} and \cite{Chan_et_al_social_effects} have used multidimensional network data structures for causal inference. \cite{Goldsmith-Pinkham2013} used previously observed sets of connections of the same type as explanatory variables in their endogenous network formation model, while \cite{Chan_et_al_social_effects} used different types of peers, i.e., roommates, classmates, seatmates, studymates, and friends, to study peer effects on academic performance. The identification strategy put forward in this paper utilizes the latter type of multiplex networks by simply pointing out that some of these types of connections are exogenous in nature (blue edges in Figure \ref{F1}(b)), e.g., `roommates' \citep{Sacerdote_QJE}, `ethnic groups' \citep{Reza2019}, and `classmates' \citep{Ammermueller_Pischke_JLO}, since they are not directly chosen by the units of observation, while others are likely endogenous (red edges in Figure \ref{F1}(b)), e.g., `friends' \citep{Fruehwirth_RevStat} or `studymates' \citep{Chan_et_al_social_effects} due to potential homophily in unobserved characteristics.

Similarly, in applications about publication outcomes in Academia, see, e.g., \cite{Newman_2004a} in Physics or \cite{AoAS_coauthorship} in Statistics, co-authorship connections are potentially endogenous (red edges in Figure \ref{F1}(b)). Information relating to authors who received their Ph.D. from the same institution can be considered pre-determined (blue edges in Figure \ref{F1}(b)) and therefore exogenous. The empirical application in Section \ref{emp} below utilizes this observation in an application where the outcome variable in \eqref{intro} is the natural logarithm of citations of a peer-reviewed research article published in the top journals of general-interest in economics between 2002 and 2006.

Finally, it should  be noted that the best way to guaranty the exogeneity of the network $G_{0}$ would be by (quasi) randomizing individuals into groups. This has been done before. For example, \cite{carrell2013} randomly assigned freshmen students at the United States Air Force Academy to peer groups (squadrons), while \cite{falk2006} randomly assigned workers to groups in the lab. However, as pointed out by \cite{carrell2013}, these types of experimental designs do not consider the endogenous process of individuals sorting into more granular peer groups such as friends or study partners based on the intrinsic characteristics of the individuals. If the network of interest (for example, the friendship network) is observed by the researcher and can be represented by $G$, the proposed method here can be used to find the causal social effects generated by the network of interest, $G$.

\section{The Model\label{indet}}
This section presents the basic model used to quantify the peer effects in a linear-in-means regression model in \eqref{intro}. As is often the case with observational data, a set of sufficient conditions on the joint data generating process is presented here that guaranties the social parameters are uniquely recovered (identification) from the estimating sample. Let $\mathbf{W}$ be an $n \times n$ stochastic adjacency matrix corresponding to a random network. The matrix $\mathbf{W}$ may or may not be row normalized; the identification results apply to both types of adjacency matrices. The structural model \eqref{intro} in matrix form is given by:

\begin{equation}
\label{E1}
\mathbf{y}=\alpha\boldsymbol{\iota}+\beta\mathbf{Wy}+\mathbf{WX}\boldsymbol{\delta}+\mathbf{X}\boldsymbol{\gamma}+\mathbf{v}%
\text{,}%
\end{equation}

\noindent where $\alpha$ and $\beta$ are  scalar structural parameters, $\boldsymbol{\gamma}$ and $\boldsymbol{\delta}$ are $k \times 1$ vectors of structural parameters, $\mathbf{y}$ is an $n\times1$ vector of outcomes, $\boldsymbol{\iota}$ is a $n\times1$ vector of ones, and $\mathbf{X}$ is an $n\times k$ matrix of exogenous covariates. Under the assumption that $E_{\mathbf{X},\mathbf{W}}[\mathbf{v}]=\mathbf{0}$, regressors $\mathbf{X}$ and $\mathbf{WX}$ are exogenous, but $\mathbf{Wy}$ is not because of simultaneity. Since \eqref{E1} can be thought of as a spatial autoregressive model, \cite{Kelejian1998,Kelejian_Prucha_1999_ER}, \cite{Lee2003}, and \cite{Bramoulle2009}, as well as \cite{BL:2011}, proposed using $[\mathbf{X},\mathbf{WX},\mathbf{W}^{2}\mathbf{X}]$ as the matrix of instruments for the matrix of \emph{endogenous} regressors $[\mathbf{X},\mathbf{WX},\mathbf{Wy}]$ in a Generalized Two-Stage Least Squares (G2SLS) procedure. The use of powers of the adjacency matrix of the form $\mathbf{W}^{2}\mathbf{X}$ as instruments for $\mathbf{W}\mathbf{y}$ is common in the social science literature. The rationale behind the validity of those instruments is that the existence of intransitive triads (contained in $\mathbf{W}^{2}$) guaranties that if nodes $i$ and $j$ are connected, and nodes $j$ and $k$ (but not $i$ and $k$) are connected, then $\mathbf{x}_{k}$ affects $\mathbf{y}_{i}$ but only through its effect on $\mathbf{y}_{j}$. Notice that only one variable of this matrix is endogenous, i.e., $\mathbf{Wy}$, since $E_{\mathbf{X}}[\mathbf{v}]=\mathbf{0}$, and $E_{\mathbf{X},\mathbf{W}}[\mathbf{v}]=\mathbf{0}$ by the tower property of conditional expectations.

This paper assumes instead that $E_{\mathbf{X},\mathbf{W}_{0}}[\mathbf{v}]=\mathbf{0}$, where $\mathbf{W}_{0}$ is an exogenous undirected adjacency matrix that may or may not be row normalized as well. In this way, $\mathbf{W}$ is allowed to be \emph{endogenous}, i.e., $E_{\mathbf{X},\mathbf{W}}[\mathbf{v}]\neq\mathbf{0}$, which invalidates the instruments previously proposed, rendering the G2SLS estimator inconsistent. Notice that there is a set of $k+1$  endogenous regressors in equation \eqref{E1}, i.e., $\mathbf{WX}$ and $\mathbf{Wy}$, given that $E_{\mathbf{X}}[\mathbf{v}]=\mathbf{0}$ by the law of iterated expectations applied to $E_{\mathbf{X},\mathbf{W}_{0}}[\mathbf{v}]=\mathbf{0}$, but $E_{\mathbf{X},\mathbf{W}}[\mathbf{v}]\neq\mathbf{0}$ by the tower property of conditional expectations. Assumptions \ref{A1} and \ref{A2} below state the necessary conditions for $\mathbf{W}_{0}$ to be a valid instrument.

\begin{assumption}
There exists an $n \times n$ adjacency matrix $\mathbf{W}_{0}$ such that $E_{\mathbf{X},\mathbf{W}_{0}}[\mathbf{v}]=\mathbf{0}$.\label{A1}
\end{assumption}

From the original condition, notice that by applying the law of iterated expectations to $E_{\mathbf{X},\mathbf{W}_{0}}[\mathbf{v}]=\mathbf{0}$, the expression $E_{\mathbf{W}_{0}}[\mathbf{v}]=\mathbf{0}$ is obtained. Then, for all $i$ and $E_{\mathbf{W}_{0}}[v_i]=0$, this consequently implies that $E[\mathbf{w}_{0;i}v_i]=0$ by the conditioning theorem, where $\mathbf{w}_{0;i}$ represents the $i$th row of $\mathbf{W}_{0}$. Additionally, Assumption \ref{A1} also implies $E[\mathbf{x}_{i}v_i]=\mathbf{0}$ and $E[\mathbf{W}_{0}\mathbf{x}_{i}v_i]=\mathbf{0}$ by the same logic. The latter implication corresponds to Assumption 1 in \cite{Chan_et_al_social_effects}, and therefore Assumption \ref{A1} above is stronger. Note that equation \eqref{E1}, Assumption \ref{A1}, and the multiplex data structure together imply an exclusion restriction on the adjacency matrix $\mathbf{W}_{0}$. The exogenous adjacency matrix should only affect the outcome $y_{i}$ through its correlation with the adjacency matrix of interest, $\mathbf{W}$. A setup where individuals are (quasi-) randomized into groups is likely to guaranty the validity of the exclusion restriction on $\mathbf{W}_0$. In these settings, individuals are randomly selected into groups that do not necessarily have the potential to create network effects \citep[see, e.g.,][]{carrell2013}. On the other hand, in observational studies, a potential exclusion restriction argument could be based on predetermined networks with respect to the outcome. A network formed in the past is likely to affect outcomes only indirectly through its correlation with contemporaneous relevant networks, as in our real data application.

Consider the regressors formed with the endogenous matrix $\mathbf{W}$, i.e., $\mathbf{W}\mathbf{y}$ and $\mathbf{W}\mathbf{X}$. Let $\mathbf{S}$ be a $n \times (k+1)$ matrix given by $\mathbf{S}\equiv[\mathbf{y} \quad \mathbf{X}]$, and $\boldsymbol{\theta}\equiv(\beta,\boldsymbol{\delta}^{\top})^{\top}$ be a $(k+1) \times 1$ vector of parameters such that $\beta\mathbf{Wy}+\mathbf{WX}\boldsymbol{\delta}=\mathbf{WS}\boldsymbol{\theta}$. Therefore, equation \eqref{E1} can be written as 

\begin{equation}
\label{E2}
\mathbf{y}=\alpha\boldsymbol{\iota}+\mathbf{WS}\boldsymbol{\theta}+\mathbf{X}\boldsymbol{\gamma}+\mathbf{v}%
\text{,}%
\end{equation}

\noindent where, given Assumption \ref{A1},  the endogenous adjacency matrix $\mathbf{W}$, with $i$th row denoted as $\mathbf{w}_{i}$, can be instrumented by $\mathbf{w}_{0;i}$. The above implies that instruments can be constructed by combining the predetermined instrumental matrix $\mathbf{W}_{0}$ and the regressor defined by $\mathbf{S}$ in equation \eqref{E2}, i.e., this is formalized in the following assumption.

\begin{assumption}
\label{A2}
Let $\mathbf{\Pi}$ be the $(k+1)\times(k+1)$ matrix of coefficients from the system regression

\begin{equation}
\label{E3}
\mathbf{WS}  =\mathbf{W}_{0}\mathbf{S\Pi}+\mathbf{U}\text{,}
\end{equation}

\noindent where the $n \times(k+1)$ matrix of system errors $\mathbf{U}$ is such that $E_{\mathbf{W}_{0}\mathbf{y}, \mathbf{W}_{0},\mathbf{X}}[\mathbf{U}]=\mathbf{O}$ (a matrix of zeros), $E[\mathbf{S}^{\top}\mathbf{w}_{0;i}\mathbf{w}_{0;i}^{\top}\mathbf{S}]$ is positive definite, and \emph{rank}$(\mathbf{\Pi})=k+1$. Furthermore, the first row of $\mathbf{\Pi}$, $\boldsymbol{\pi}_{1}$, is such that $\boldsymbol{\pi}^{\top}_{1}\boldsymbol{\theta}< 1/\lambda_{\textup{max}}$, where $\lambda_{\textup{max}}$ is the largest eigenvalue of $\mathbf{W}_{0}$ and $\boldsymbol{\theta}\equiv(\beta,\boldsymbol{\delta}^{\top})^{\top}$.
\end{assumption}

Note that Assumption \ref{A2} implies that  $\mathbf{\Pi}$ is uniquely determined by the joint probability distribution of $(\mathbf{S}^{\top}\mathbf{w}_{i},\mathbf{S}^{\top}\mathbf{w}_{0;i})$. The rank condition is necessary for identification. Given that rank$(\mathbf{\Pi})\leq \min \{\text{rank}(E[\mathbf{S}^{\top}\mathbf{w}_{0;i}\mathbf{w}_{0;i}^{\top}\mathbf{S}]^{-1}),\text{rank}(E[\mathbf{S}^{\top}\mathbf{w}_{0;i}\mathbf{w}_{i}^{\top}\mathbf{S}])\}$, a necessary condition for rank$(\mathbf{\Pi})=k+1$ is that $\text{rank}(E[\mathbf{S}^{\top}\mathbf{w}_{0;i}\mathbf{w}_{i}^{\top}\mathbf{S}])=k+1$, which would be equivalent to the \emph{relevance} condition in the classical Instrumental Variable literature. Similarly, as is the case in this literature, equations \eqref{E2} and \eqref{E3} create a natural relationship between $\mathbf{\Pi}$ and structural parameters that is later used for estimation. 

As in Section \ref{background}, the condition that $E[\mathbf{S}^{\top}\mathbf{w}_{0;i}\mathbf{w}_{0;i}^{\top}\mathbf{S}]$ is positive definite imposes restrictions on the product matrix $\widetilde{\mathbf{W}}\equiv\mathbf{W}_{0}\mathbf{W}$ for a large enough sample size. Note that if the matrix $\widetilde{\mathbf{W}}=\mathbf{O}$ does not have full rank, this rank condition fails. Thus, identification requires the networks $G$ and $G_{0}$ to be somewhat correlated so that the rank condition imposed by Assumption \ref{A2} holds in the population. This correlation condition effectively requires the existence of edges that overlap in both layers and inter-layer intransitive triads. The second part of Assumption \ref{A2} is likely to hold when $\mathbf{W}_{0}$ is right-stochastic since $0<\lambda_{\textup{max}}<1$ is satisfied in this case. Given the reduced form relation in \eqref{E3}, a reduced form for $\mathbf{y}$ can be constructed by substituting equations \eqref{E3} into \eqref{E2}.

\begin{equation}
\label{E4}
\mathbf{y}=\alpha\boldsymbol{\iota}+\mathbf{W}_{0}\mathbf{S}\mathbf{\Pi}\boldsymbol{\theta}+\mathbf{X}\boldsymbol{\gamma}+\mathbf{e}
\text{,}%
\end{equation}

\noindent where $\mathbf{e}\equiv\mathbf{U}\boldsymbol{\theta}+\mathbf{v}$. Note that in \eqref{E4},  $E[\mathbf{S}^{\top}\mathbf{W}_{0}\mathbf{e}]\neq \mathbf{0}$ because of the simultaneity of $\mathbf{W}_{0}\mathbf{y}$ that still persists. To see that, note $E[\mathbf{S}^{\top}\mathbf{W}_{0}\mathbf{e}]=E[\mathbf{S}^{\top}\mathbf{W}_{0}\mathbf{U}\boldsymbol{\theta}]+E[\mathbf{S}^{\top}\mathbf{W}_{0}\mathbf{v}]=E[\mathbf{S}^{\top}\mathbf{W}_{0}E_{\mathbf{S},\mathbf{W}_{0}}[\mathbf{U}]\boldsymbol{\theta}]+E[\mathbf{S}^{\top}\mathbf{W}_{0}\mathbf{v}] =E[\mathbf{S}^{\top}\mathbf{W}_{0}\mathbf{v}]$, where $E[\mathbf{W}_{0}\mathbf{X}^{\top}\mathbf{v}]=\mathbf{0}$ by Assumption \ref{A1}; however, $E[\mathbf{W}_{0}\mathbf{y}^{\top}\mathbf{v}]\neq \mathbf{0}$, which implies $E[\mathbf{S}^{\top}\mathbf{W}_{0}\mathbf{e}]\neq\mathbf{0}$. Therefore, finding the reduced form for $\mathbf{y}$ requires decomposing the matrix $\mathbf{S}$ and the vector $\boldsymbol{\theta}$. Equation \eqref{E4} can be written as

\begin{equation}
\label{E5}
\mathbf{y}[\mathbf{I}-(\boldsymbol{\pi}_{1}^{\top}\boldsymbol{\theta})\mathbf{W}_{0}]=\alpha\boldsymbol{\iota}+[\gamma_1 \mathbf{I}+ (\boldsymbol{\pi}^{\top}_{2}\boldsymbol{\theta})\mathbf{W}_{0}] \mathbf{x}_1+\dots+[\gamma_k \mathbf{I}+ (\boldsymbol{\pi}^{\top}_{k+1}\boldsymbol{\theta})\mathbf{W}_{0}] \mathbf{x}_k+\mathbf{e}
\text{,}%
\end{equation}

\noindent where $\boldsymbol{\pi}_{j}$ is a $(k+1) \times 1$ vector containing the $j$th row of the coefficient matrix $\mathbf{\Pi}$, such that $\boldsymbol{\pi}^{\top}_{j}\boldsymbol{\theta}$ is a scalar representing the inner product of the two vectors of parameters. The parameter $\gamma_j$ is the $j$th entry of the $k \times 1$ vector $\boldsymbol{\gamma}$, the matrix $\mathbf{I}$ represents the $n \times n$ identity matrix, and $\mathbf{x}_l$ is an $n \times 1$ vector containing the $j$th column of the matrix $\mathbf{X}$. The second part of Assumption \ref{A2} implies that $\mathbf{I}-(\boldsymbol{\pi}_{1}^{\top}\boldsymbol{\theta})\mathbf{W}_{0}$ is invertible. Then, the reduced form for $\mathbf{y} $ is given by:
\begin{equation}
\label{E6}
\mathbf{y}=[\mathbf{I}-(\boldsymbol{\pi}^{\top}_{1}\boldsymbol{\theta})\mathbf{W}_{0}]^{-1}\{\alpha\boldsymbol{\iota}+[\gamma_1 \mathbf{I}+ (\boldsymbol{\pi}^{\top}_{2}\boldsymbol{\theta})\mathbf{W}_{0}] \mathbf{x}_1+\dots+[\gamma_k \mathbf{I}+ (\boldsymbol{\pi}^{\top}_{k+1}\boldsymbol{\theta})\mathbf{W}_{0}] \mathbf{x}_k+\mathbf{e}\}
\text{,}%
\end{equation}

\noindent where now $E[\mathbf{x}_{k}^{\top}\mathbf{e}]=E[\mathbf{x}_{k}^{\top}\mathbf{U}\boldsymbol{\theta}]+E[\mathbf{x}_{k}^{\top}\mathbf{v}]=0$ and $E[\mathbf{x}_{k}^{\top}\mathbf{W}_{0}\mathbf{e}]=E[\mathbf{x}_{k}\mathbf{W}_{0}^{\top}\mathbf{U}\boldsymbol{\theta}]+E[\mathbf{x}_{k}^{\top}\mathbf{W}_{0}^{\top}\mathbf{v}]=0$ for all $k$. The second part of Assumption \ref{A2} also implies that \eqref{E6} can be expressed as

\begin{align}
\label{E7}
    \mathbf{y}=&\sum_{r=0}^{\infty}(\boldsymbol{\pi}^{\top}_{1}\boldsymbol{\theta})^{r}\mathbf{W}_{0}^{r}\left\{\alpha\boldsymbol{\iota}+[\gamma_1 \mathbf{I}+ (\boldsymbol{\pi}^{\top}_{2}\boldsymbol{\theta})\mathbf{W}_{0}] \mathbf{x}_1+\dots+[\gamma_k \mathbf{I}+ (\boldsymbol{\pi}^{\top}_{k+1}\boldsymbol{\theta})\mathbf{W}_{0}] \mathbf{x}_k+\mathbf{e}\right\} \\
\label{E8}
    =& [\mathbf{I}-(\boldsymbol{\pi}^{\top}_{1}\boldsymbol{\theta})\mathbf{W}_{0}]^{-1}\alpha \boldsymbol{\iota}+\gamma_1 \mathbf{x}_1+[\gamma_1(\boldsymbol{\pi}^{\top}_{1}\boldsymbol{\theta})+\boldsymbol{\pi}^{\top}_{2}\boldsymbol{\theta}]\sum_{r=0}^{\infty}(\boldsymbol{\pi}^{\top}_{1}\boldsymbol{\theta})^{r}\mathbf{W}_{0}^{r+1}\mathbf{x}_1+\dots+\gamma_k \mathbf{x}_k \nonumber \\ 
    &+[\gamma_k(\boldsymbol{\pi}^{\top}_{1}\boldsymbol{\theta})+\boldsymbol{\pi}^{\top}_{k+1}\boldsymbol{\theta}]\sum_{r=0}^{\infty}(\boldsymbol{\pi}^{\top}_{1}\boldsymbol{\theta})^{r}\mathbf{W}_{0}^{r+1}\mathbf{x}_k+\sum_{r=0}^{\infty}(\boldsymbol{\pi}^{\top}_{1}\boldsymbol{\theta})^{r}\mathbf{W}_{0}^{r}\mathbf{e}\text{.}
\end{align}

When $k=1$, we have $\boldsymbol{\theta}=(\beta,\delta)^{\top}$, $\boldsymbol{\gamma}=\gamma$, and $\mathbf{\Pi}$ reduces to a $2 \times 2$ matrix. For that particular case, \eqref{E8} reduces to 

\begin{align}
\label{E9}
\mathbf{y}=&\alpha/(1-\pi_{1,1}\beta-\pi_{1,2}\delta) \boldsymbol{\iota}+[\gamma(\pi_{1,1}\beta+\pi_{1,2}\delta)+\pi_{2,1}\beta+\pi_{2,2}\delta]\sum_{r=0}^{\infty}(\pi_{1,1}\beta+\pi_{1,2}\delta)^{r}\mathbf{W}_{0}^{r+1}\mathbf{x} \nonumber \\ 
&+\gamma \mathbf{x} +\sum_{r=0}^{\infty}(\pi_{1,1}\beta+\pi_{1,2}\delta)^{r}\mathbf{W}_{0}^{r}\mathbf{e}\text{.}
\end{align}

\noindent From \eqref{E9}, note that $\gamma (\pi_{1,1}\beta+\pi_{1,2}\delta)+\pi_{2,1}\beta+\pi_{2,2}\delta \neq 0$ is a relevant condition to ensure that $\mathbf{W}_{0}^{2}\mathbf{X}$ is a valid instrument for $\mathbf{W}_{0}\mathbf{y}$ in the case when $k=1$. For the general case of $k$ covariates, the condition is generalized to $\beta(\gamma_k\pi_{1,1}+\pi_{k,1})+\sum_{l=1}^{k}\delta_l(\gamma_l\pi_{1,l+1}+\pi_{k,l+1}) \neq 0$ for all $k$. This is now formalized in the following main identification result:

\begin{theorem}
\label{T1}
Let Assumptions \ref{A1}, \ref{A2} hold and $\beta(\gamma_k\pi_{1,1}+\pi_{k,1})+\sum_{l=1}^{k}\delta_l(\gamma_l\pi_{1,l+1}+\pi_{k,l+1}) \neq 0$ for all $k$. If the matrices $\mathbf{I}$, $\mathbf{W}_{0}$, and $\mathbf{W}_{0}^{2}$ are linearly independent, then the parameters $\alpha$, $\beta$, $\boldsymbol{\gamma}$, and $\boldsymbol{\delta}$ in \eqref{E1} are identified.
\end{theorem}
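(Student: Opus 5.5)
We will prove identification in two layers: first recover the reduced-form objects (the projection matrix $\mathbf{\Pi}$ and the coefficients of a linear-in-means model driven by the exogenous layer $\mathbf{W}_0$), and then invert the map from structural to reduced-form parameters.

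\textbf{Step 1: recover $\mathbf{\Pi}$.} By Assumption \ref{A2}, $E_{\mathbf{W}_0\mathbf{y},\mathbf{W}_0,\mathbf{X}}[\mathbf{U}]=\mathbf{O}$ in \eqref{E3}, so row by row
\[
E[\mathbf{S}^{\top}\mathbf{w}_{0;i}\mathbf{w}_{i}^{\top}\mathbf{S}]=E[\mathbf{S}^{\top}\mathbf{w}_{0;i}\mathbf{w}_{0;i}^{\top}\mathbf{S}]\mathbf{\Pi}.
\]
Since the second moment matrix on the right is positive definite, $\mathbf{\Pi}$ is the unique solution and is a functional of the observed joint distribution. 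Hence $\mathbf{\Pi}$, and in particular its rows $\boldsymbol{\pi}_1,\dots,\boldsymbol{\pi}_{k+1}$, are identified.

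\textbf{Step 2: identify the reduced-form coefficients of \eqref{E5}.} Write \eqref{E5} as
\[
\mathbf{y}=\alpha\boldsymbol{\iota}+\phi\,\mathbf{W}_0\mathbf{y}+\mathbf{X}\boldsymbol{\gamma}+\mathbf{W}_0\mathbf{X}\boldsymbol{\psi}+\mathbf{e},
\]
where $\phi\equiv\boldsymbol{\pi}_1^{\top}\boldsymbol{\theta}$ and $\psi_l\equiv\boldsymbol{\pi}_{l+1}^{\top}\boldsymbol{\theta}$. This is a standard linear-in-means model with an exogenous network $\mathbf{W}_0$, and $\mathbf{X}$ and $\mathbf{W}_0\mathbf{X}$ are orthogonal to $\mathbf{e}$ by the calculations following \eqref{E6}.

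We will then follow the argument of \cite{Bramoulle2009}. Expand \eqref{E8}, which is valid by the eigenvalue condition in Assumption \ref{A2}, and compute $E[\mathbf{W}_0\mathbf{y}\mid\mathbf{X},\mathbf{W}_0]$. Its coefficient on $\mathbf{W}_0^2\mathbf{x}_l$ is $\gamma_l\phi+\psi_l$. The displayed non-vanishing condition in the theorem says exactly that this coefficient is nonzero for every covariate. Combined with linear independence of $\mathbf{I},\mathbf{W}_0,\mathbf{W}_0^2$, this implies that $\mathbf{W}_0\mathbf{y}$ is not spanned by $(\boldsymbol{\iota},\mathbf{X},\mathbf{W}_0\mathbf{X})$. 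Consequently $[\boldsymbol{\iota},\mathbf{X},\mathbf{W}_0\mathbf{X},\mathbf{W}_0^2\mathbf{X}]$ are valid and relevant instruments for $[\boldsymbol{\iota},\mathbf{W}_0\mathbf{y},\mathbf{X},\mathbf{W}_0\mathbf{X}]$.

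Concretely, we will argue by contradiction. Suppose two parameter vectors $(\alpha,\phi,\boldsymbol{\gamma},\boldsymbol{\psi})$ and $(\alpha',\phi',\boldsymbol{\gamma}',\boldsymbol{\psi}')$ generate the same conditional mean of $\mathbf{y}$. Multiplying through by $\mathbf{I}-\phi\mathbf{W}_0$ and matching the coefficients of $\mathbf{I}$, $\mathbf{W}_0$ and $\mathbf{W}_0^2$ applied to each $\mathbf{x}_l$ yields $\phi=\phi'$, $\boldsymbol{\gamma}=\boldsymbol{\gamma}'$, $\boldsymbol{\psi}=\boldsymbol{\psi}'$ and $\alpha=\alpha'$. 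This step uses the nonzero coefficient on $\mathbf{W}_0^2$ to pin down $\phi$.

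\textbf{Step 3: recover $\boldsymbol{\theta}$.} Stacking the identified scalars gives
\[
(\phi,\psi_1,\dots,\psi_k)^{\top}=\mathbf{\Pi}\boldsymbol{\theta}.
\]
Since $\mathrm{rank}(\mathbf{\Pi})=k+1$, we get $\boldsymbol{\theta}=\mathbf{\Pi}^{-1}(\phi,\boldsymbol{\psi}^{\top})^{\top}$, which identifies $\beta$ and $\boldsymbol{\delta}$. The parameters $\alpha$ and $\boldsymbol{\gamma}$ were already identified in Step 2.

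\textbf{Main obstacle.} The delicate part is Step 2. The error $\mathbf{e}$ contains $\mathbf{U}\boldsymbol{\theta}$, so we must check that the moment conditions used (orthogonality of $\mathbf{X}$, $\mathbf{W}_0\mathbf{X}$ and $\mathbf{W}_0^2\mathbf{X}$ to $\mathbf{e}$) actually follow from Assumptions \ref{A1}–\ref{A2}. We will also need to be careful with the matching-coefficients argument when $\mathbf{W}_0$ is not row-normalized, because $\boldsymbol{\iota}$ then interacts with $\mathbf{W}_0\boldsymbol{\iota}$. We plan to handle the intercept by the same linear-independence argument applied to $\boldsymbol{\iota}$ alongside $\mathbf{X}$.
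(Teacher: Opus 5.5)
Your proposal is correct and shares the paper's skeleton. Both identify $\mathbf{\Pi}$ from the projection moments of Assumption \ref{A2}, clear denominators, and match coefficients on $\mathbf{I}$, $\mathbf{W}_0$, $\mathbf{W}_0^2$ as in Bramoull\'{e}, Djebbari and Fortin. The second half is decomposed differently, however.

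The paper writes the observational-equivalence conditions directly in the structural parameters, with entries of $\mathbf{\Pi}$ embedded, and obtains the scalar equations \eqref{PE3}--\eqref{PE5}. It then tries to force $(\beta',\boldsymbol{\delta}')=(\beta,\boldsymbol{\delta})$ in two ways: a scaling argument ($\beta'=\lambda\beta$, $\boldsymbol{\delta}'=\lambda\boldsymbol{\delta}$), and an appeal to $\mathrm{rank}(\mathbf{\Pi})=k+1$ to exclude $\boldsymbol{\pi}_1^{\top}\boldsymbol{\theta}=0$.

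You instead treat $(\phi,\boldsymbol{\psi}^{\top})^{\top}=\mathbf{\Pi}\boldsymbol{\theta}$ as the free parameters of an ordinary linear-in-means model on the exogenous layer. You identify them by the standard argument: the matched equations yield $\gamma_l=\gamma_l'$ and $(\phi'-\phi)(\gamma_l\phi+\psi_l)=0$. So the non-vanishing condition is used once and is needed for only one covariate. The full rank of $\mathbf{\Pi}$ enters only at the end, to invert.

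Your route buys validity as well as transparency, because the paper's direct route has steps that do not go through as written:
\begin{itemize}
\item a single scalar restriction such as \eqref{PE5} cannot make the vector $\boldsymbol{\theta}'$ proportional to $\boldsymbol{\theta}$;
\item full rank of $\mathbf{\Pi}$ does not imply that $\boldsymbol{\pi}_1^{\top}\boldsymbol{\theta}=0$ forces $\boldsymbol{\theta}=\mathbf{0}$;
\item a nonzero product need not be positive.
\end{itemize}
Your argument needs none of these. You also read the displayed condition correctly as $\gamma_l\boldsymbol{\pi}_1^{\top}\boldsymbol{\theta}+\boldsymbol{\pi}_{l+1}^{\top}\boldsymbol{\theta}\neq 0$; this is the intended meaning (compare \eqref{E9}), despite the index slips in the statement.

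A few small repairs are needed.
\begin{itemize}
\item Multiply by $(\mathbf{I}-\phi\mathbf{W}_0)(\mathbf{I}-\phi'\mathbf{W}_0)$, which commute, not by $\mathbf{I}-\phi\mathbf{W}_0$ alone.
\item Step 2 requires $E[\mathbf{e}\mid\mathbf{X},\mathbf{W}_0]=\mathbf{0}$, not mere orthogonality. This follows from Assumption \ref{A1} and the tower property applied to $E_{\mathbf{W}_0\mathbf{y},\mathbf{W}_0,\mathbf{X}}[\mathbf{U}]=\mathbf{O}$, since $\sigma(\mathbf{X},\mathbf{W}_0)\subseteq\sigma(\mathbf{W}_0\mathbf{y},\mathbf{W}_0,\mathbf{X})$. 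Because $\mathbf{\Pi}$ and $\mathbf{U}$ are functionals of observables, the same reduced form holds for any observationally equivalent structure.
\item The intercept is not an obstacle even without row normalization. Once $\phi=\phi'$, the equality $\alpha(\mathbf{I}-\phi\mathbf{W}_0)^{-1}\boldsymbol{\iota}=\alpha'(\mathbf{I}-\phi\mathbf{W}_0)^{-1}\boldsymbol{\iota}$ involves a nonzero vector, so $\alpha=\alpha'$ directly.
\item Like the paper, you implicitly need enough variation in $\mathbf{X}$ for equal conditional means to imply equal coefficient matrices. The competing $\phi'$ must also keep $\mathbf{I}-\phi'\mathbf{W}_0$ invertible.
\end{itemize}
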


Assumptions \ref{A1} and \ref{A2} do not require the layers in the multiplex networks to be undirected or unweighted. Thus, the results in Theorem \ref{T1} apply to the general case of potentially directed and weighted networks. Also, Theorem \ref{T1} is based on the existence of a set of regressors for which Assumption \ref{A1} applies, in particular, it requires the regressors in $\mathbf{x}_{i}$ to be exogenous. Note that only one such regressor is necessary to identify the peer effects parameter $\beta$. It is possible to control for other observable characteristics that are not orthogonal to the errors, as long as they are uncorrelated with the set of exogenous regressors.

The condition $\beta(\gamma_k\pi_{1,1}+\pi_{k,1})+\sum_{l=1}^{k}\delta_l(\gamma_l\pi_{1,l+1}+\pi_{k,l+1}) \neq 0$ on the structural parameters has a straightforward interpretation. From equation \eqref{E8} note that this condition involves the composed coefficients associated with the variables $\mathbf{W}_{0}^{r+1}\mathbf{x}_{l}$ for $r \in [0, \infty)$ and $l \in\left\{1,\cdots,k\right\}$. This variable can be interpreted as the reduced form social effects generated by the $l$th observable characteristic. Thus, this restriction on the parameters can be interpreted as a condition guaranteeing that the peer and contextual effects in the network space generated by $\mathbf{W}_{0}$ do not cancel each other out. The linear independence condition guaranties enough exclusion restrictions in the set of simultaneous equations generated by the linear model in \eqref{E4}. This condition can be numerically verified from the observed adjacency matrix $\mathbf{W}_{0}$. Section \ref{corr_effects} in the supplemental materials provides an extension of model \eqref{E1} that weakens the requirement that $\mathbf{W}_0$ is completely predetermined.

\subsection{Network Exogeneity Failure}\label{corr_effects}

In situations where it is difficult to find an adjacency matrix that is completely predetermined and, therefore, not correlated with the unobserved characteristics in the outcome equation, consider the following extension of model \eqref{E1}.

\begin{align}
\label{E25}
&\mathbf{y}=\alpha\boldsymbol{\iota}+\beta\mathbf{Wy}+\mathbf{WX}\boldsymbol{\delta}+\mathbf{X}\boldsymbol{\gamma}+\mathbf{v},\\ \nonumber
&\mathbf{v}=\boldsymbol{\lambda}_{0}+\mathbf{v}^{+}
\text{,}%
\end{align}

\noindent where $\boldsymbol{\lambda}_{0}$ represents the correlated effects in the exogenous network $\mathbf{W}_{0}$, the structure of $\boldsymbol{\lambda}_{0}$ is such that it allows for unobserved heterogeneity that is common for all individuals in the same group but varies across individuals in different groups. This specification explicitly accommodates scenarios where group assignment is non-random and correlated with unobservables—such as selection into schools or training programs based on latent ability—a challenge recently highlighted by \cite{Sheng_et_al_2025} in the context of endogenous group formation. The model in \eqref{E25} contains an important implicit assumption, i.e., the relevant characteristics that determine the structure of the exogenous network are common among individuals in the same group. 

The relevant assumption for the identification of the transformed model in \eqref{E25} is given by condition $E_{\mathbf{X},\mathbf{W}_{0},\boldsymbol{\lambda}_{0}}[\mathbf{v}^{+}]=\mathbf{0}$. However, $E_{\mathbf{X},\mathbf{W}_{0},\mathbf{W}}[\lambda_0]$ is allowed to be any function of the conditioning random variables. Note that the error $\mathbf{v}^{+}$ is allowed to contain unobserved heterogeneity that is correlated with $\mathbf{W}$ in any form. However, the correlation of $\mathbf{v}^{+}$ with $\mathbf{W}_{0}$ is only allowed through $\boldsymbol{\lambda}_{0}$. Applying group differences to equation \eqref{E25} eliminates the network-specific unobserved heterogeneity for the exogenous network. After applying the transformation, the structural model becomes

\begin{equation}
\label{E26}
(\mathbf{I}-\mathbf{W}_{0})\mathbf{y}=\beta(\mathbf{I}-\mathbf{W}_{0})\mathbf{W}\mathbf{y}+(\mathbf{I}-\mathbf{W}_{0})\mathbf{X}\boldsymbol{\gamma}+(\mathbf{I}-\mathbf{W}_{0})\mathbf{W}\mathbf{X}\boldsymbol{\delta}+(\mathbf{I}-\mathbf{W}_{0})\mathbf{v}^{+}.
\end{equation}

\noindent Analogously to the classic IV estimation in the panel data literature, equation \eqref{E3} is transformed following the same approach as in equation \eqref{E26} to obtain the expression

\begin{equation}
\label{E3s}
(\mathbf{I}-\mathbf{W}_{0})\mathbf{WS}  =(\mathbf{I}-\mathbf{W}_{0})\mathbf{W}_{0}\mathbf{S\Pi}+(\mathbf{I}-\mathbf{W}_{0})\mathbf{U}\text{.}
\end{equation}

\noindent Using the notation introduced earlier and considering the transformed model in \eqref{E3s}, the structural equation in \eqref{E26} can be re-written as

\begin{align}
\label{E27}
(\mathbf{I}-\mathbf{W}_{0})\mathbf{y}&=(\mathbf{I}-\mathbf{W}_{0})\mathbf{W}\mathbf{S}\boldsymbol{\theta}+(\mathbf{I}-\mathbf{W}_{0})\mathbf{X}\boldsymbol{\gamma}+(\mathbf{I}-\mathbf{W}_{0})\mathbf{v}^{+} \\
\label{E28}
&=(\mathbf{I}-\mathbf{W}_{0})\mathbf{W}_{0}\mathbf{S}\mathbf{\Pi}\boldsymbol{\theta}+(\mathbf{I}-\mathbf{W}_{0})\mathbf{X}\boldsymbol{\gamma}+(\mathbf{I}-\mathbf{W}_{0})\mathbf{e}^{+},
\end{align}

\noindent where $\mathbf{e}^{+}=\mathbf{U}\boldsymbol{\theta}+\mathbf{v}^{+}$. The next result establishes the identification in this setting through an application of Theorem \ref{T1} to the model described in equation \eqref{E28}.

\begin{proposition}
\label{P1}
Let $\beta(\gamma_k\pi_{1,1}+\pi_{k,1})+\sum_{l=1}^{k}\delta_l(\gamma_l\pi_{1,l+1}+\pi_{k,l+1}) \neq 0$ for all $k$ in model \eqref{E28}. Then the parameters $\alpha$, $\beta$, $\boldsymbol{\gamma}$, and $\boldsymbol{\delta}$ are identified if and only if the matrices $\mathbf{I}$, $\mathbf{W}_{0}$, $\mathbf{W}_{0}^{2}$, and $\mathbf{W}_{0}^{3}$ are linearly independent.
\end{proposition}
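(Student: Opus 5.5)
The plan is to mirror the proof of Theorem \ref{T1} on the transformed model \eqref{E28}, treating $(\mathbf{I}-\mathbf{W}_{0})$ as a fixed, known premultiplier. The intercept drops out because $\mathbf{W}_0$ is row normalized within groups, so $(\mathbf{I}-\mathbf{W}_0)\boldsymbol{\iota}=\mathbf{0}$. It is then recovered, as usual, from group means once the slopes are known.

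First, substitute the decomposition $\mathbf{S}=[\mathbf{y}\;\mathbf{X}]$ into \eqref{E28}, exactly as in \eqref{E5}. Collecting the $\mathbf{y}$ terms gives
\begin{equation*}
(\mathbf{I}-\mathbf{W}_{0})[\mathbf{I}-(\boldsymbol{\pi}_1^{\top}\boldsymbol{\theta})\mathbf{W}_0]\mathbf{y}=(\mathbf{I}-\mathbf{W}_{0})\sum_{l=1}^{k}[\gamma_l\mathbf{I}+(\boldsymbol{\pi}_{l+1}^{\top}\boldsymbol{\theta})\mathbf{W}_0]\mathbf{x}_l+(\mathbf{I}-\mathbf{W}_{0})\mathbf{e}^{+}.
\end{equation*}
The second part of Assumption \ref{A2} lets us invert $\mathbf{I}-(\boldsymbol{\pi}_1^{\top}\boldsymbol{\theta})\mathbf{W}_0$ and expand it as a Neumann series, as in \eqref{E7}--\eqref{E8}. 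Since all these matrices are polynomials in $\mathbf{W}_0$, they commute. The transformed outcome is therefore a power series in $\mathbf{W}_0$ applied to $(\mathbf{I}-\mathbf{W}_0)\mathbf{x}_l$, with coefficients $\gamma_l$ and $\gamma_l(\boldsymbol{\pi}_1^{\top}\boldsymbol{\theta})+\boldsymbol{\pi}_{l+1}^{\top}\boldsymbol{\theta}$.

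Orthogonality comes from $E_{\mathbf{X},\mathbf{W}_0,\boldsymbol{\lambda}_0}[\mathbf{v}^+]=\mathbf{0}$ together with the condition on $\mathbf{U}$. Together they make $(\mathbf{I}-\mathbf{W}_0)\mathbf{W}_0^{r}\mathbf{X}$ valid instruments. The regressors of \eqref{E28} are $(\mathbf{I}-\mathbf{W}_0)\mathbf{W}_0\mathbf{y}$, $(\mathbf{I}-\mathbf{W}_0)\mathbf{W}_0\mathbf{X}$ and $(\mathbf{I}-\mathbf{W}_0)\mathbf{X}$. The natural excluded instrument for the endogenous one is $(\mathbf{I}-\mathbf{W}_0)\mathbf{W}_0^2\mathbf{X}$. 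Expanded, the instrument and regressor spaces involve $\mathbf{X},\mathbf{W}_0\mathbf{X},\mathbf{W}_0^2\mathbf{X},\mathbf{W}_0^3\mathbf{X}$, one power higher than in Theorem \ref{T1}. The nonvanishing condition on the composed coefficients plays the same role as before: it guarantees that the reduced-form coefficient on $(\mathbf{I}-\mathbf{W}_0)\mathbf{W}_0^2\mathbf{x}_l$ in $E[(\mathbf{I}-\mathbf{W}_0)\mathbf{W}_0\mathbf{y}\mid\cdot]$ is nonzero, which gives relevance.

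For sufficiency, I would follow the Bramoull\'e--Djebbari--Fortin style argument. Suppose two parameter vectors produce the same conditional mean of $(\mathbf{I}-\mathbf{W}_0)\mathbf{y}$. Equating the two power-series representations yields a matrix polynomial identity of the form $c_0\mathbf{I}+c_1\mathbf{W}_0+c_2\mathbf{W}_0^2+c_3\mathbf{W}_0^3=\mathbf{O}$, where the $c_j$ are functions of the parameter differences. The cubic degree arises because the factor $(\mathbf{I}-\mathbf{W}_0)$ multiplies the quadratic relation that appeared in Theorem \ref{T1}. Linear independence of $\mathbf{I},\mathbf{W}_0,\mathbf{W}_0^2,\mathbf{W}_0^3$ forces every $c_j=0$. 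Using the nonvanishing condition and rank$(\mathbf{\Pi})=k+1$, we can solve successively for $\boldsymbol{\pi}_1^{\top}\boldsymbol{\theta}$, then $\boldsymbol{\gamma}$, then $\boldsymbol{\pi}_{l+1}^{\top}\boldsymbol{\theta}$, hence $\mathbf{\Pi}\boldsymbol{\theta}$, and finally $\boldsymbol{\theta}=(\beta,\boldsymbol{\delta}^\top)^\top$.

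For necessity, suppose $\mathbf{W}_0^3=a_0\mathbf{I}+a_1\mathbf{W}_0+a_2\mathbf{W}_0^2$. Then $(\mathbf{I}-\mathbf{W}_0)\mathbf{W}_0^2\mathbf{X}$ lies in the span of the included exogenous regressors, so there is no excluded instrument. I would then construct an explicit alternative parameter vector giving the same reduced form, as in Proposition 1 of Bramoull\'e et al.

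The main obstacle is this necessity direction and the exact degree bookkeeping, i.e.\ showing that dependence at degree three, rather than two, is exactly what destroys identification. I would handle it by first treating the case where $\mathbf{I},\mathbf{W}_0,\mathbf{W}_0^2$ are independent but $\mathbf{W}_0^3$ is dependent, and exhibiting the observationally equivalent parameter explicitly through the reduced-form coefficients.
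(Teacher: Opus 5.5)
Your sufficiency half is essentially the paper's argument. Clearing the inverses $(\mathbf{I}-a\mathbf{W}_0)^{-1}$ and $(\mathbf{I}-a'\mathbf{W}_0)^{-1}$ (with $a=\boldsymbol{\pi}_1^{\top}\boldsymbol{\theta}$, $b_k=\boldsymbol{\pi}_{k+1}^{\top}\boldsymbol{\theta}$) turns the quadratic relation $c_0\mathbf{I}+c_1\mathbf{W}_0+c_2\mathbf{W}_0^2=\mathbf{O}$ of Theorem \ref{T1} into that quadratic times $(\mathbf{I}-\mathbf{W}_0)$. Independence of $\mathbf{I},\mathbf{W}_0,\mathbf{W}_0^2,\mathbf{W}_0^3$ then gives $c_0=c_1=c_2=0$. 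It is this clearing of inverses, not term-by-term matching of the power series, that produces a cubic. Your finish is cleaner than the paper's proportionality step: $\gamma_k=\gamma_k'$, then $(a-a')(a\gamma_k+b_k)=0$, hence $a=a'$ and $b_k=b_k'$, and $\boldsymbol{\theta}=\boldsymbol{\theta}'$ by $\mathrm{rank}(\mathbf{\Pi})=k+1$. But the paper proves only this ``if'' direction; it offers nothing for ``only if''.

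That direction is where your proposal has a genuine gap. You use row normalization only to drop the intercept, but necessity hinges on it. Your step ``$\mathbf{W}_0^3=a_0\mathbf{I}+a_1\mathbf{W}_0+a_2\mathbf{W}_0^2$ puts $(\mathbf{I}-\mathbf{W}_0)\mathbf{W}_0^2\mathbf{X}$ in the span of $(\mathbf{I}-\mathbf{W}_0)\mathbf{X}$ and $(\mathbf{I}-\mathbf{W}_0)\mathbf{W}_0\mathbf{X}$'' is true only if $a_0+a_1+a_2=1$. That is, it needs $1$ to be an eigenvalue of $\mathbf{W}_0$; in the case you treat, that cubic is the minimal polynomial $m$ of $\mathbf{W}_0$. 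The eigenvalue condition holds because $\mathbf{W}_0\boldsymbol{\iota}=\boldsymbol{\iota}$, and without it the ``only if'' is false. Indeed, $c(\mathbf{W}_0)(\mathbf{I}-\mathbf{W}_0)=\mathbf{O}$ means $m\mid c(t)(1-t)$. So if $\gcd(m,1-t)=1$ and $\deg m=3$, then $c=0$, and the parameters are identified even though $\mathbf{I},\ldots,\mathbf{W}_0^3$ are dependent.

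With row normalization, here is the construction you still need:

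\begin{enumerate}
\item Write $m(t)=(t-1)q(t)$ with $q(t)=q_0+q_1t+q_2t^2$ of degree at most two. This also covers the case $\deg m\le 2$, which your plan skips.
\item Take $c^{(k)}=\mu_k q$, so that $c^{(k)}(\mathbf{W}_0)(\mathbf{I}-\mathbf{W}_0)=-\mu_k m(\mathbf{W}_0)=\mathbf{O}$.
\item For each $k$, the equations $c^{(k)}_0=\gamma_k-\gamma_k'$, $c^{(k)}_1=b_k-b_k'-a'\gamma_k+a\gamma_k'$, $c^{(k)}_2=ab_k'-a'b_k$ are linear in $(a',b_k',\gamma_k')$ with determinant $\pm(a\gamma_k+b_k)\neq0$. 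This is where the nonvanishing condition is used. They give $a'=a-\mu_k(q_2+aq_1+a^2q_0)/(a\gamma_k+b_k)$.
\item Since $a'$ must be common to all $k$, choose $\mu_k=\nu(a\gamma_k+b_k)$ with $\nu\neq0$ small enough that $\mathbf{I}-a'\mathbf{W}_0$ stays invertible. Then set $\boldsymbol{\theta}'=\mathbf{\Pi}^{-1}(a',b_1',\ldots,b_k')^{\top}$.
\end{enumerate}

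Finally, group means identify only $\alpha+\lambda_{0,g}$, so recovering $\alpha$ needs a normalization such as $E[\boldsymbol{\lambda}_0]=\mathbf{0}$. The paper avoids the issue by holding $\alpha$ fixed across the two parameter sets.
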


\subsection{Identification Failure}

In practice, even in situations where one has access to an exogenous network $\mathbf{W}_{0}$, identification can still fail. Specifically, identification requires $\widetilde{\mathbf{W}}\equiv\mathbf{W}_{0}\mathbf{W}\neq\mathbf{O}$, and this is guaranteed if there is an overlap between the two network layers through the existence of inter-layer intransitive triads. However, if $\mathbf{W}_0$ is extremely sparse, the number of these triads may be insufficient, leading to a problem where the rank of $\mathbf{\Pi}$ in the assumption \ref{A2} is technically full, but $\mathbf{\Pi}$ is numerically close to the zero matrix. Conversely, if $\mathbf{W}_0$ is very dense (e.g., a complete graph or a block-diagonal matrix with identical group sizes), the powers $\mathbf{I}$, $\mathbf{W}_0$, and $\mathbf{W}_0^2$ may become linearly dependent, failing the requirements of Theorem \ref{T1}.

If $\mathbf{W}$ represents close friendship ties in a school, for example, while $\mathbf{W}_0$ represents associations through participation in unrelated, non-overlapping extracurricular activities, the potential lack of shared edges or paths will likely cause $\widetilde{\mathbf{W}}$ to be the zero matrix, violating Assumption \ref{A2}. On the other hand, if individuals are assigned to groups of identical size with no connections between groups, $\mathbf{W}_0$ becomes a block-diagonal matrix where $\mathbf{W}_0^2$ is a linear combination of $\mathbf{I}$ and $\mathbf{W}_0$, precluding identification \citep[see, e.g.,][]{Lee2007}. This highlights a trade-off when applying the transformation proposed in Section \ref{corr_effects}. While highly regular group-based structures naturally motivate the condition $(\mathbf{I}-\mathbf{W}_{0})\boldsymbol{\lambda}_{0}=\mathbf{0}$ by allowing $\mathbf{W}_{0}$ to act as a perfect within-group averaging operator, this extreme regularity simultaneously undermines the linear independence conditions required by Theorem \ref{T1} and Proposition \ref{P1}. Consequently, successful identification in the presence of exogenous correlated effects requires a structural balance: the network must exhibit a group-based architecture to effectively differentiate $\boldsymbol{\lambda}_{0}$, but it must also possess sufficient irregularity—such as variation in group sizes or varied within-group connectivity—to maintain the linear independence of the network powers.

\section{Generalized Three-Stage Least Squares Estimation\label{est}}
Given the point identification in Theorem \ref{T1}, this section describes a multi-step procedure to estimate the parameters of interest in \eqref{E1}. An important feature of this proposed estimator is that it is already implemented in \texttt{Stata}  \citep[see, e.g.,][]{netivreg_stata_journal}. First, notice that \eqref{E4} can be re-written as

\begin{equation}
\label{En}
\mathbf{y}=\alpha\boldsymbol{\iota}+\mathbf{W}_{0}\mathbf{S}\boldsymbol{\theta}^{\ast}+\mathbf{X}\boldsymbol{\gamma}+\mathbf{e}
\text{,}%
\end{equation}

\noindent where $\boldsymbol{\theta}^{\ast}=\mathbf{\Pi}\boldsymbol{\theta}$. Note that \eqref{En} cannot be directly estimated due to the simultaneity in $\mathbf{W}_{0}\mathbf{y}$. Therefore, the idea of the G2SLS estimator of \cite{Kelejian1998,Kelejian_Prucha_1999_ER} is extended here to propose a Generalized Three-Stage least squares estimator (G3SLS), similar to \cite{BD:2015}, for the structural parameters $(\alpha,\boldsymbol{\gamma}^{\top},\boldsymbol{\theta}^{\top})$ as follows:\\

\noindent\underline{Step One:}

\noindent One starts by estimating the projection coefficient in equation \eqref{E3} by (system) Ordinary Least Squares, i.e.,

\begin{equation}
\label{E17}
    \widehat{\mathbf{\Pi}}=(\widehat{\boldsymbol{\pi}}_1,\widehat{\boldsymbol{\pi}}_2,\ldots,\widehat{\boldsymbol{\pi}}_{k+1})^{\top}=(\mathbf{S}^{\top}\mathbf{W}_{0}^{2}\mathbf{S})^{-1}\mathbf{S}^{\top}\widetilde{\mathbf{W}}\mathbf{S}\text{,}
\end{equation}

\noindent where $\widehat{\boldsymbol{\pi}}_1^{\top}=(\mathbf{S}^{\top}\mathbf{W}_{0}^{2}\mathbf{S})^{-1}\mathbf{S}^{\top}\widetilde{\mathbf{W}}\mathbf{y}$, and each $\widehat{\boldsymbol{\pi}}_j^{\top}=(\mathbf{S}^{\top}\mathbf{W}_{0}^{2}\mathbf{S})^{-1}\mathbf{S}^{\top}\widetilde{\mathbf{W}}\mathbf{x}_{j}$ for $j=1, \ldots ,k$. Lemma \ref{l1} in \ref{Appendix_C} states its consistency under the Assumptions listed in \ref{Appendix_B}. In this step, one saves the matrices of estimated coefficients, $\widehat{\mathbf{\Pi}}$, and residuals, $\widehat{\mathbf{U}}=\mathbf{WS}-\mathbf{W}_{0}\mathbf{S}\widehat{\mathbf{\Pi}}$. Matrices $\mathbf{W}_{0}^{2}$ and $\widetilde{\mathbf{W}}\equiv\mathbf{W}_{0}\mathbf{W}$ have a socioeconomic interpretation as explained in Section \ref{background}. The matrix $\mathbf{W}_{0}^{2}$ contains the   number of connections that individual $i$ has on the main diagonal, and the number of individuals that separate individuals $i$ and $j$ on the off-diagonal elements. Similarly, $\widetilde{\mathbf{W}}$ contains the number of connections that individual $i$ shares in both networks on the main diagonal, while the off-diagonal elements contain individual $i$'s number of connections from $\mathbf{W}_{0}$ that are connected with individual $j$ in $\mathbf{W}$ for all $i\neq j$. Hence, the matrix $\widetilde{\mathbf{W}}$ represents a measure of association between the endogenous adjacency matrix $\mathbf{W}$ and the exogenous matrix $\mathbf{W}_{0}$.\\

\noindent\underline{Step Two:}

\noindent Rewrite \eqref{En} as

\begin{equation}
\label{En2}
\mathbf{y}=\mathbf{D}_{0}\boldsymbol{\psi}^{\ast}+\mathbf{e},
\end{equation}

\noindent where $\mathbf{D}_{0}=[\mathbf{\iota},\mathbf{X},\mathbf{W}_{0}\mathbf{y},\mathbf{W}_{0}\mathbf{X}]$ is a $n\times (2k+2)$ matrix, $\boldsymbol{\psi}^{\ast}=(\alpha,\boldsymbol{\gamma}^{\top},\boldsymbol{\theta}^{\ast\top})^{\top}$ is a $(2k+2) \times 1$ vector of parameters, and $\mathbf{e}=\mathbf{U}\boldsymbol{\theta}+\mathbf{v}$. From equation \eqref{En2}, the set of parameters $\boldsymbol{\psi}^{\ast}$ can be estimated using 2SLS. Let $\mathbf{Z}=[\mathbf{\iota},\mathbf{X},\mathbf{W}^{2}_{0}\mathbf{X},\mathbf{W}_{0}\mathbf{X}]$ be the matrix of instruments for the variables in $\mathbf{D}_{0}$. For the general case where $k>1$, the model is overidentified and the standard two-stage least squares (2SLS) estimator is given by:

\begin{equation}
\label{E19}
\widehat{\boldsymbol{\psi}}^{\ast}_{\text{2SLS}}=(\mathbf{D}_{0}^{\top}\mathbf{Z}(\mathbf{Z}^{\top}\mathbf{Z})^{-1}\mathbf{Z}^{\top}\mathbf{D}_{0})^{-1} \mathbf{D}_{0}^{\top}\mathbf{Z}(\mathbf{Z}^{\top}\mathbf{Z})^{-1}\mathbf{Z}^{\top}\mathbf{y}.
\end{equation}

\noindent Under Assumptions in \ref{Appendix_B}, Lemma \ref{l2} in \ref{Appendix_C} establishes its consistency. Taking into account the estimator $\widehat{\boldsymbol{\psi}}^{\ast}_{\text{2SLS}}=(\widehat{\alpha}_{\text{2SLS}},\widehat{\boldsymbol{\gamma}}_{\text{2SLS}}^{\top},\widehat{\boldsymbol{\theta}}^{\ast\top}_{\text{2SLS}})^{\top}$, an estimator for the structural parameters of the social effects can be recovered from the relation $\boldsymbol{\theta}^{\ast}=\mathbf{\Pi}\boldsymbol{\theta}$, i.e., using $\widehat{\boldsymbol{\Pi}}$ from the first step, a consistent estimator for the social effects in equation \eqref{E1} is given by $\widehat{\boldsymbol{\theta}}=\widehat{\mathbf{\Pi}}^{-1}\widehat{\boldsymbol{\theta}}^{\ast}_{\text{2SLS}}$ -- Lemma \ref{l3} in \ref{Appendix_C} proves its consistency under the assumptions listed in \ref{Appendix_B} and its performance in our Monte Carlo designs below is reported in \ref{Appendix_D} of the supplement. In this step, the vector of the estimated coefficients is stored $\widehat{\alpha}_{\text{2SLS}}$, $\widehat{\boldsymbol{\gamma}}_{\text{2SLS}}$, and $\widehat{\boldsymbol{\theta}}$.\\

\noindent\underline{Step Three:}

\noindent This last step consists of constructing the optimal instrument for the regressor $\mathbf{W}_{0}\mathbf{y}$ in \eqref{En}, i.e., the optimal instrument (in the classical sense) is given by $E_{\mathbf{X},\mathbf{W}_0}[\mathbf{W}_0\mathbf{y}]$. Note that from the reduced-form equation in \eqref{E5} one has:

\begin{align}
\label{E20}
&E_{\mathbf{X},\mathbf{W}_0}[\mathbf{W}_0\mathbf{y}](\boldsymbol{\psi},\mathbf{\Pi})\\
&=\mathbf{W}_0[\mathbf{I}-(\boldsymbol{\pi}_{1}^{\top}\boldsymbol{\theta})\mathbf{W}_{0}]^{-1}\left\{\alpha\boldsymbol{\iota}+[\gamma_1 \mathbf{I}+ (\boldsymbol{\pi}_{2}^{\top}\boldsymbol{\theta})\mathbf{W}_{0}] \mathbf{x}_1+\dots+[\gamma_k \mathbf{I}+ (\boldsymbol{\pi}_{k+1}^{\top}\boldsymbol{\theta})\mathbf{W}_{0}]\mathbf{x}_{k}\right\}.\nonumber
\end{align}

\noindent A valid estimator of \eqref{E20} is then given by  $E_{\mathbf{X},\mathbf{W}_0}(\widehat{\boldsymbol{\psi}},\widehat{\mathbf{\Pi}})$, where $\widehat{\boldsymbol{\psi}}=(\widehat{\alpha}_{\text{2SLS}},\widehat{\boldsymbol{\gamma}}_{\text{2SLS}}^{\top},\widehat{\boldsymbol{\theta}}^{\top})^{\top}$ from the second step. Now rewrite \eqref{E1} as

\begin{equation}
\label{En3}
\mathbf{y}=\mathbf{D}\boldsymbol{\psi}+\mathbf{v}\text{,}
\end{equation}

\noindent where $\mathbf{D} = [\boldsymbol{\iota},\mathbf{X},\mathbf{W}\mathbf{S}]$ and $\boldsymbol{\psi}=(\alpha,\boldsymbol{\gamma}^{\top},\boldsymbol{\theta}^{\top})^{\top}$. Using $\widehat{\boldsymbol{\Pi}}$ from the first step, let the matrix $\widehat{\mathbf{D}}$ be defined as $\widehat{\mathbf{D}}=[\boldsymbol{\iota},\mathbf{X},\mathbf{W}_{0}\mathbf{S}\widehat{\mathbf{\Pi}}]=[\boldsymbol{\iota},\mathbf{X},\mathbf{W}_{0}\mathbf{S}]\widehat{\mathbf{\Gamma}}=[\mathbf{\iota},\mathbf{X},\mathbf{W}_{0}\mathbf{y},\mathbf{W}_{0}\mathbf{X}]\widehat{\mathbf{\Gamma}}=\mathbf{D}_{0}\widehat{\mathbf{\Gamma}}$, where

\begin{equation}
\label{E18}
\widehat{\mathbf{\Gamma}}=	\begin{bmatrix}

\mathbf{I}_{k+1} & \mathbf{O}_{k+1}  \\
\mathbf{O}_{k+1} & \widehat{\mathbf{\Pi}}  \\
\end{bmatrix},
\end{equation}

\noindent is a $(2k+2) \times (2k+2)$ matrix, $\mathbf{O}_{k+1}$ is a $(k+1)\times(k+1)$ matrix of zeros, and $\mathbf{I}_{k+1}$ represents the identity matrix of order $k+1$. Let $\widetilde{\mathbf{Z}}^{\ast}$ be the matrix of optimal instruments for $\mathbf{D}_{0}\mathbf{\Gamma}$, where $\widetilde{\mathbf{Z}}^{\ast}=\mathbf{Z}^{\ast}\mathbf{\Gamma}$, and $\mathbf{Z}^{\ast}=[\boldsymbol{\iota},\mathbf{X},E_{\mathbf{X},\mathbf{W}_0}[\mathbf{W}_0\mathbf{y}](\boldsymbol{\psi},\mathbf{\Pi}),\mathbf{W}_{0}\mathbf{X}]$. Analogous to the previous transformation, by additionally considering the estimator of the conditional expectation of $\mathbf{W}_{0}\mathbf{y}$ in the second step, an estimator of the optimal matrix of instruments is then given by

\begin{equation}
\label{EW0y}
\widehat{\widetilde{\mathbf{Z}}}^{\ast}=\widehat{\mathbf{Z}}^{\ast}\widehat{\mathbf{\Gamma}}.
\end{equation}

\noindent where $\widehat{\mathbf{Z}}^{\ast}=[\boldsymbol{\iota},\mathbf{X},E_{\mathbf{X},\mathbf{W}_0}[\mathbf{W}_0\mathbf{y}](\widehat{\boldsymbol{\psi}},\widehat{\mathbf{\Pi}}),\mathbf{W}_{0}\mathbf{X}]$. Therefore, the G3SLS in this just-identified case is

\begin{equation}
\label{E21}
\widehat{\boldsymbol{\psi}}_{\text{G3SLS}}=(\widehat{\widetilde{\mathbf{Z}}}{}^{\ast\top}\widehat{\mathbf{D}})^{-1}\widehat{\widetilde{\mathbf{Z}}}{}^{\ast\top}\mathbf{y}.
\end{equation}

\noindent Finally, from this step, one saves the resulting vector of estimated coefficients $\widehat{\boldsymbol{\psi}}_{\text{G3SLS}}$ and the residuals $\widehat{\mathbf{v}}=\mathbf{y}-\mathbf{D}\widehat{\boldsymbol{\psi}}_{\text{G3SLS}}$.\\

After defining $\mathbf{M}_{\mathbf{W}_0}\equiv\mathbf{I}_{n}-\mathbf{W_{0}}\mathbf{S}(\mathbf{S}^{\top}\mathbf{W}_{0}^{2}\mathbf{S})^{-1}\mathbf{S}^{\top}\mathbf{W_{0}}$, the following theorem establishes the asymptotic normality of the proposed G3SLS estimator.

\begin{theorem}
\label{dist}
Let Assumptions \ref{A1}, \ref{A2}, and \ref{D1}--\ref{D8} in \ref{Appendix_B} hold, then $\widehat{\boldsymbol{\psi}}_{\text{\emph{G3SLS}}}=\boldsymbol{\psi}+o_p(1)$ and $n^{1/2}(\widehat{\boldsymbol{\psi}}_{\text{\emph{G3SLS}}}-\boldsymbol{\psi})\overset{d}{\longrightarrow}N(\mathbf{0},\mathbf{V}_{\boldsymbol{\psi}})$, where
\[
\mathbf{V}_{\boldsymbol{\psi}}=(\mathbf{\Gamma}^{\top}\mathbf{Q}_{\mathbf{Z}^{\ast}\mathbf{D}_{0}}\mathbf{\Gamma})^{-1}\boldsymbol{\Omega}(\mathbf{\Gamma}\mathbf{Q}^{\top}_{\mathbf{Z}^{\ast}\mathbf{D}_{0}}\mathbf{\Gamma}^{\top})^{-1},
\]
$\mathbf{Q}_{\mathbf{Z}^{\ast}\mathbf{D}_{0}}=\lim_{n\to\infty}n^{-1}\mathbf{Z}^{\ast\top}\mathbf{D}_{0}$, $\boldsymbol{\Omega}=\lim_{n\to\infty}n^{-1}\sum_{i=1}^{n}\widetilde{\mathbf{z}}_{i}^{\ast}\widetilde{\mathbf{z}}_{i}^{\ast\top}E_{\widetilde{\mathbf{Z}^{\ast}}}[e_{i}^{\ast2}]$, $\widetilde{\mathbf{z}}_{i}^{\ast}$ represents the $i$th row of $\widetilde{\mathbf{Z}}^{\ast}$, and $e_{i}^{\ast}$ denotes the $i$th element of the $n\times 1$ vector $\mathbf{e}^{\ast}=\mathbf{M}_{\mathbf{W}_0}\mathbf{U}\boldsymbol{\theta}+\mathbf{v}$.
\end{theorem}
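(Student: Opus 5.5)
The argument has three parts: write the estimation error of the just-identified G3SLS estimator explicitly, show that replacing estimated quantities by their population counterparts costs only $o_p(1)$, and apply a central limit theorem to the leading term. Since $\widehat{\mathbf{D}}=\mathbf{D}_0\widehat{\mathbf{\Gamma}}$, \eqref{E21} gives
\[
\widehat{\boldsymbol{\psi}}_{\text{G3SLS}}=\boldsymbol{\psi}+(\widehat{\widetilde{\mathbf{Z}}}{}^{\ast\top}\widehat{\mathbf{D}})^{-1}\widehat{\widetilde{\mathbf{Z}}}{}^{\ast\top}(\mathbf{y}-\widehat{\mathbf{D}}\boldsymbol{\psi}).
\]
The residual $\mathbf{y}-\widehat{\mathbf{D}}\boldsymbol{\psi}$ is not $\mathbf{v}$, so it must be decomposed. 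From \eqref{E2} and \eqref{E3},
\[
\mathbf{y}-\widehat{\mathbf{D}}\boldsymbol{\psi}=\mathbf{W}\mathbf{S}\boldsymbol{\theta}-\mathbf{W}_0\mathbf{S}\widehat{\mathbf{\Pi}}\boldsymbol{\theta}+\mathbf{v}.
\]
Substituting the closed form \eqref{E17} for $\widehat{\mathbf{\Pi}}$ gives
\[
\mathbf{W}\mathbf{S}-\mathbf{W}_0\mathbf{S}\widehat{\mathbf{\Pi}}=\mathbf{U}-\mathbf{W}_0\mathbf{S}(\mathbf{S}^{\top}\mathbf{W}_0^2\mathbf{S})^{-1}\mathbf{S}^{\top}\mathbf{W}_0\mathbf{U}=\mathbf{M}_{\mathbf{W}_0}\mathbf{U}.
\]
Here I assume $\mathbf{W}_0$ is symmetric, as it is undirected; otherwise $\mathbf{W}_0^{\top}$ appears and the same argument applies. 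Hence $\mathbf{y}-\widehat{\mathbf{D}}\boldsymbol{\psi}=\mathbf{M}_{\mathbf{W}_0}\mathbf{U}\boldsymbol{\theta}+\mathbf{v}=\mathbf{e}^{\ast}$ exactly. This identity explains the form of $\mathbf{e}^{\ast}$ in the statement: the first-stage estimation effect is absorbed into the error through the annihilator $\mathbf{M}_{\mathbf{W}_0}$.

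For consistency I will treat the ``Jacobian'' and the score separately. First, $n^{-1}\widehat{\widetilde{\mathbf{Z}}}{}^{\ast\top}\widehat{\mathbf{D}}=\widehat{\mathbf{\Gamma}}^{\top}n^{-1}\widehat{\mathbf{Z}}{}^{\ast\top}\mathbf{D}_0\widehat{\mathbf{\Gamma}}$. Lemma \ref{l1} gives $\widehat{\mathbf{\Pi}}\to_p\mathbf{\Pi}$, so $\widehat{\mathbf{\Gamma}}\to_p\mathbf{\Gamma}$. Lemmas \ref{l2}--\ref{l3} give $(\widehat{\boldsymbol{\psi}},\widehat{\mathbf{\Pi}})\to_p(\boldsymbol{\psi},\mathbf{\Pi})$. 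The instrument \eqref{E20} is a smooth function of $\boldsymbol{\psi}$ and $\mathbf{\Pi}$ through $[\mathbf{I}-(\boldsymbol{\pi}_1^{\top}\boldsymbol{\theta})\mathbf{W}_0]^{-1}$. This inverse is well defined and has uniformly bounded row and column sums near the truth, by the eigenvalue condition in Assumption \ref{A2} together with boundedness conditions among \ref{D1}--\ref{D8}. A mean-value expansion then gives $n^{-1}\|\widehat{\mathbf{Z}}^{\ast}-\mathbf{Z}^{\ast}\|^{\top}\|\mathbf{D}_0\|=o_p(1)$, so the Jacobian converges to $\mathbf{\Gamma}^{\top}\mathbf{Q}_{\mathbf{Z}^{\ast}\mathbf{D}_0}\mathbf{\Gamma}$. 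This limit is nonsingular by the rank condition on $\mathbf{\Pi}$ and the identification conditions of Theorem \ref{T1}. Second, $n^{-1}\widehat{\widetilde{\mathbf{Z}}}{}^{\ast\top}\mathbf{e}^{\ast}=o_p(1)$. This follows from $E_{\mathbf{X},\mathbf{W}_0}[\mathbf{v}]=\mathbf{0}$ (Assumption \ref{A1}) and $E_{\mathbf{W}_0\mathbf{y},\mathbf{W}_0,\mathbf{X}}[\mathbf{U}]=\mathbf{O}$ (Assumption \ref{A2}), since $\mathbf{Z}^{\ast}$ is measurable with respect to $(\mathbf{X},\mathbf{W}_0)$, combined with a weak law of large numbers for dependent network arrays from Appendix \ref{Appendix_B}.

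For asymptotic normality I will write
\[
n^{1/2}(\widehat{\boldsymbol{\psi}}_{\text{G3SLS}}-\boldsymbol{\psi})=\bigl(n^{-1}\widehat{\widetilde{\mathbf{Z}}}{}^{\ast\top}\widehat{\mathbf{D}}\bigr)^{-1}n^{-1/2}\widehat{\widetilde{\mathbf{Z}}}{}^{\ast\top}\mathbf{e}^{\ast}.
\]
The \textbf{main obstacle} is showing that replacing $\widehat{\widetilde{\mathbf{Z}}}{}^{\ast}$ by $\widetilde{\mathbf{Z}}^{\ast}$ in the score costs only $o_p(1)$. I will expand $\widehat{\mathbf{Z}}^{\ast}-\mathbf{Z}^{\ast}$ linearly in $(\widehat{\boldsymbol{\psi}}-\boldsymbol{\psi},\widehat{\mathbf{\Pi}}-\mathbf{\Pi})$. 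Both of these are $O_p(n^{-1/2})$, which I will obtain by strengthening Lemmas \ref{l1}--\ref{l3} to root-$n$ rates under \ref{D1}--\ref{D8}. The derivative terms have the form $n^{-1}\mathbf{G}^{\top}\mathbf{e}^{\ast}$, where $\mathbf{G}$ is built from $(\mathbf{X},\mathbf{W}_0)$ and the bounded inverse. These are $o_p(1)$ by the same conditional mean-zero argument, so the products are $o_p(1)$. The replacement of $\widehat{\mathbf{\Gamma}}$ by $\mathbf{\Gamma}$ is handled the same way.

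Finally, $n^{-1/2}\mathbf{\Gamma}^{\top}\mathbf{Z}^{\ast\top}\mathbf{e}^{\ast}$ is a normalized sum of conditionally mean-zero terms $\widetilde{\mathbf{z}}^{\ast}_ie^{\ast}_i$. I will apply the CLT condition assumed among \ref{D1}--\ref{D8}, for example a martingale-difference or network-dependence CLT with Lindeberg moments. Its limit variance is $\boldsymbol{\Omega}$. One point needs care: the $\mathbf{M}_{\mathbf{W}_0}\mathbf{U}$ component involves the sample projection. I will replace it by its population analogue at $o_p(1)$ cost, or take $\boldsymbol{\Omega}$ as defined with the conditional variance of $e_i^{\ast}$ as in the statement. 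Slutsky's theorem then gives $\mathbf{V}_{\boldsymbol{\psi}}$.
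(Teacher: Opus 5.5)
Your skeleton matches the paper's proof. It has the exact identity $\mathbf{y}-\widehat{\mathbf{D}}\boldsymbol{\psi}=\mathbf{e}^{\ast}$, which needs no symmetry of $\mathbf{W}_0$ given \eqref{E17}, since that formula yields $\widehat{\mathbf{\Pi}}=\mathbf{\Pi}+(\mathbf{S}^{\top}\mathbf{W}_0^{2}\mathbf{S})^{-1}\mathbf{S}^{\top}\mathbf{W}_0\mathbf{U}$ directly. It then has convergence of the Jacobian from Lemmas \ref{l1}--\ref{l3}, a replacement lemma for the score, a CLT, and Slutsky. Two small corrections. Nonsingularity of the limit is assumed (Assumption \ref{D8} plus $\text{rank}(\mathbf{\Pi})=k+1$), not derived from identification. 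Appendix B imposes independence across $i$ with $2+\xi$ moments, so the paper's LLN and CLT are just Chebyshev and Lyapunov/Lindeberg--Feller.

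The real difference is the replacement step, Lemma \ref{l4}. The paper writes $E_{\mathbf{X},\mathbf{W}_0}[\mathbf{W}_0\mathbf{y}]=\sum_{r\ge 0}\beta^{\ast r}\mathbf{W}_0^{r+1}\widetilde{\mathbf{X}}\boldsymbol{\eta}$ in the step-two parameters. Then $(\widehat{\mathbf{Z}}^{\ast}-\mathbf{Z}^{\ast})^{\top}\mathbf{e}^{\ast}=\sum_{r}(\widehat{\boldsymbol{\eta}}\widehat{\beta}^{\ast r}-\boldsymbol{\eta}\beta^{\ast r})^{\top}(\mathbf{W}_0^{r+1}\widetilde{\mathbf{X}})^{\top}\mathbf{e}^{\ast}$ holds exactly, with estimates entering only as scalar coefficients on $(\mathbf{X},\mathbf{W}_0)$-measurable vectors. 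So mere consistency of $\widehat{\boldsymbol{\psi}}^{\ast}_{\text{2SLS}}$ suffices, paired with $n^{-1/2}(\mathbf{W}_0^{r+1}\widetilde{\mathbf{X}})^{\top}\mathbf{e}^{\ast}=O_p(1)$, and the $\widehat{\mathbf{\Gamma}}$ term is handled the same way.

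You split the product the other way: root-$n$ rates times an LLN. That is valid, and the rates come cheaply, because the variance bounds in the proofs of Lemmas \ref{l1} and \ref{l2} are already $O(n^{-1})$ and Lemma \ref{l3} follows by the delta method. The cost is that you must control the derivative at a random intermediate point, e.g.\ by Cauchy--Schwarz with $n^{-1/2}\|\mathbf{e}^{\ast}\|=O_p(1)$; the exact series avoids this. The gain is that your bounded-inverse argument covers all powers of $\mathbf{W}_0$ at once. The paper's term-by-term $o_p(1)\cdot O_p(1)$ bound tacitly needs geometric domination to be summed over $r$.

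Your caveat about $\mathbf{M}_{\mathbf{W}_0}$ in the final CLT is justified, and the paper's proof passes over it. Lemma \ref{l4} asserts that Assumptions \ref{D3} and \ref{D6} make the $e_i^{\ast}$ independent across $i$, although $\mathbf{M}_{\mathbf{W}_0}$ mixes all rows of $\mathbf{U}$. Be aware that only your second option reproduces the theorem as stated.

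Linearizing the projection (your first option) leaves the $O_p(1)$ term $-(n^{-1}\widetilde{\mathbf{Z}}^{\ast\top}\mathbf{W}_0\mathbf{S})(n^{-1}\mathbf{S}^{\top}\mathbf{W}_0^{2}\mathbf{S})^{-1}n^{-1/2}\mathbf{S}^{\top}\mathbf{W}_0\mathbf{U}\boldsymbol{\theta}$ in the score. Equivalently, the $O(n^{-1})$ cross-covariances $E[e_i^{\ast}e_j^{\ast}]$, $i\neq j$, sum to an $O(1)$ contribution that $\boldsymbol{\Omega}$, built only from marginal variances, omits. For symmetric $\mathbf{W}_0$ this is stark. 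There $(\mathbf{W}_0\mathbf{X})^{\top}\mathbf{M}_{\mathbf{W}_0}=\mathbf{O}$, so the $\mathbf{U}$-part of the $\mathbf{W}_0\mathbf{X}$-components of $\mathbf{Z}^{\ast\top}\mathbf{e}^{\ast}$ vanishes identically. Yet each $E[e_i^{\ast2}]$ still carries the variance of $(\mathbf{U}\boldsymbol{\theta})_i$. A rigorous version of your first option therefore gives a different asymptotic variance; the stated $\mathbf{V}_{\boldsymbol{\psi}}$ is reached only through the paper's independence shortcut.
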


\noindent The asymptotic variance-covariance matrix, $\mathbf{V}_{\boldsymbol{\psi}}$, takes into account the estimation effects of the various steps, i.e., the error $\mathbf{e}^{\ast}$ associated with the estimator for $\boldsymbol{\psi}$ is composed of structural errors $\mathbf{v}$ and also errors in the first step $\mathbf{U}$. Note that $\mathbf{V}_{\boldsymbol{\psi}}$ is a $(2k+2) \times (2k+2)$ matrix containing the variances and covariances of all the structural coefficients in equation \eqref{E1}. To obtain the variance-covariance matrix of the parameters of the social effects $\boldsymbol{\theta}$, we need to extract the appropriate sub-component $(k+1) \times (k+1)$ from $\mathbf{V_{\boldsymbol{\psi}}}$. Let $\mathbf{J}$ be a $(k+1) \times (2k+2)$ matrix such that $\mathbf{J}=[\mathbf{O}_{k+1},\mathbf{I}_{k+1}]$. Then, the variance-covariance matrix of the social parameters of interest is given by

\begin{equation}
\mathbf{V}_{\boldsymbol{\theta}}=\mathbf{J}\mathbf{V_{\boldsymbol{\psi}}}\mathbf{J}^{\top}.\label{V_theta}
\end{equation}

\subsubsection*{Standard Errors Calculation}

\noindent A consistent estimator for $\mathbf{V_{\boldsymbol{\psi}}}$ can be calculated as

\begin{equation}
\label{E23}
\widehat{\mathbf{V}}_{\boldsymbol{\psi}}=(n^{-1}\widehat{\widetilde{\mathbf{Z}}}{}^{\ast\top}\widehat{\mathbf{D}})^{-1} (n^{-1}\sum\nolimits_{i=1}^{n}\widehat{\widetilde{\mathbf{z}}}{}^{\ast}_{i}\widehat{\widetilde{\mathbf{z}}}{}^{\ast\top}_{i}\widehat{e}_{i}^{\ast 2}) (n^{-1}\widehat{\mathbf{D}}^{\top}\widehat{\widetilde{\mathbf{Z}}}{}^{\ast})^{-1},
\end{equation}

\noindent where $\widehat{\mathbf{e}}^{\ast}=\mathbf{M}_{\mathbf{W}_0}\widehat{\mathbf{U}}\widehat{\boldsymbol{\theta}}+\widehat{\mathbf{v}}$. The residuals $\widehat{\mathbf{U}}$ are obtained from the first step, $\widehat{\boldsymbol{\theta}}$, and the residuals $\widehat{\mathbf{v}}$ are taken from the last step above. Therefore, a consistent estimator of \eqref{V_theta} is simply given by $\widehat{\mathbf{V}}_{\boldsymbol{\theta}}=\mathbf{J}\widehat{\mathbf{V}}_{\boldsymbol{\psi}}\mathbf{J}^{\top}$. The standard errors for the coefficients of interest are then the squared root of the main diagonal elements of this matrix after dividing them by $n$.

Note that, as with standard instrumental variable methods, the finite-sample moments of the proposed G3SLS estimator here may not exist under certain conditions. As established in the simultaneous equations literature \citep[see, e.g.,][]{mariano2001simultaneous}, the existence of their finite moments typically depends on the degree of over-identification, specifically, the difference between the number of valid instruments used and the number of endogenous regressors. When this degree of over-identification is low, the estimator may lack finite moments of higher orders.

\section{Monte Carlo Experiments\label{mc}}
In this section, we showcase the proposed estimator's strong performance and versatility across three distinct data-generating processes (DGPs). The endogeneity in these DGPs is generated by a classic omitted variable (Design 1), measurement error in the connections (Design 2), and an unobserved homophily with the simultaneous determination of network formation and outcomes (Design 3). Our Design 3 corresponds to the unobserved characteristics with homophily scenario (Design 1) presented in \cite{Chan_et_al_social_effects}. Similarly, our Design 2 is a modified version of the misclassified links scenario (Design 2) in that same paper, originally from \cite{Lewbel_Qu_Tang}. We include a detailed description of these scenarios here for completeness and to highlight their relevance in social science research. A total of 1,500 data sets $\left\{y_i,x_i,\{w_{i,j}\}_{j=1,j\neq i}^n,\{w_{0;i,j}\}_{j=1,j\neq i}^n\right\}_{i=1}^n$ with $n\in\left\{50,100,200\right\}$ are generated from \eqref{E1} by setting $k=1$, $\beta=0.7$, $\alpha=\delta=\gamma=1$, and drawing $\left\{x_i\right\}_{i=1}^n$ as a random sample from a normal distribution with mean zero and variance 3. The other data components are constructed as follows:

\subsubsection*{Design 1: Unobserved Heterogeneity\label{d1}}

As in \cite{Johnsson2019}, the network formation process follows \cite{Graham2017}, i.e., links are formed according to the rule $w_{i,j}(\psi)\equiv\mathbb{I}[z_{i}z_{j} + \psi(a_{i}+a_{j}) - u_{i,j} \geq 0]$, where $\mathbb{I}(\cdot)$ denotes the indicator function that equals one if its argument is true and zero otherwise. The scalar parameter $\psi$ acts as a switch to activate or deactivate the individual degree heterogeneity in the network formation process, so that the endogenous adjacency matrix is $\mathbf{W} = [w_{i,j}(1)]$, while the exogenous matrix is given by $\mathbf{W}_{0} = [w_{i,j}(0)]$. The exogenous variable $z$ takes on values -1 and 1 with a probability of 0.5 (these values imply a strong taste for homophilic matching; see \citealt{Graham2017}), $u_{ij}$ is drawn from a logistic distribution with mean zero and scale parameter 1, and $a_{i}=\alpha_{L} \mathbb{I}[z_{i}=-1]+\alpha_{H} \mathbb{I}[z_{i}=1]+\xi_{i}$, where $\alpha_{L}=-3/2$ and $\alpha_{H}=1$ are both parameters controlling the extent to which the degree heterogeneity $a_i$ is correlated with the observable exogenous characteristic $z_i$. The error $\xi_i$ is drawn from a re-centered Beta($1/4$,$3/4$) distribution (implying the often found right-skewed degree distribution in empirical applications, \citealp[see, e.g., ][]{Johnsson2019}). On the other hand, individual outcomes are constructed as in \eqref{E1}, where $\mathbf{v}=m\times \boldsymbol{a}+\boldsymbol{\varepsilon}_{0}$ and $m\in\{10,12\}$ measure the importance of degree heterogeneity; i.e., the higher it is, the higher the level of endogeneity will be. The error vector $\boldsymbol{\varepsilon}_{0}$ is drawn from a multivariate standard normal distribution independently of everything else. 

\subsubsection*{Design 2: Misclassified Links\label{d2}}

This is an adapted iteration of the Monte Carlo design originally presented by  \cite{Lewbel_Qu_Tang}. As in \cite{Chan_et_al_social_effects}, the true DGP includes a hidden adjacency matrix $\mathbf{W}_{0}^{\ast}=[w_{0;i,j}^{\ast}]$, which is derived from a typical random network model by \citeauthor{Erdos1959}'s \citeyearpar{Erdos1959}, featuring a density of 0.05 across a network of size $n$. However, it is assumed that the researcher only has access to an adjacency matrix $\mathbf{W}=[w_{i,j}]$ where links are randomly misclassified. This misclassification is represented as $w_{i, j}=w_{0;i,j}^{\ast}e_{1;i,j}+(1-w_{0;i,j}^{\ast}) e_{2;i,j}$ for $i \neq j$. Additionally, there is an exogenous adjacency matrix $\mathbf{W}_{0}=[w_{0;i,j}]$ with $w_{0;i,j}=w_{0;i,j}^{*}b_{1;i,j}+(1-w_{0;i,j}^{*}) b_{2;i,j}$ for $i \neq j$. The variables $e_{1;i,j}$, $e_{2;i,j}$, $b_{1;i,j}$, and $b_{2;i,j}$ are independently drawn Bernoulli random variables from one another for all $i \neq j$, with probabilities set at 0.75, 0, $1-\tau$, and 0.002, respectively. The design parameter $\tau\in\left\{0.01,0.05\right\}$ dictates the likelihood of misclassification in $\mathbf{W}_0$. It is crucial to note, consistent with \cite{Lewbel_Qu_Tang}, that non-existent links in $\mathbf{W}$ are never misidentified, whereas such misclassification is allowed in $\mathbf{W}_0$ but at a meager chance of 0.2\%. Nevertheless, this setup directly links the vector of individual outcomes to the proportion of misclassification in $\mathbf{W}$ for each $i$. The $n\times 1$ outcome vector $\mathbf{y}$ is constructed according to equation \eqref{E1}, where $\mathbf{v}=\boldsymbol{\varepsilon}_{1}+\boldsymbol{\varepsilon}_{2}$, $\varepsilon_{1,i}=\frac{1}{n}\sum_{j=1}^{n}w_{0;i,j}^{*}e_{1;i,j}$, and $\boldsymbol{\varepsilon}_{2}$ originate from a multivariate standard normal distribution that is independent of all other elements. 

\subsubsection*{Design 3: Unobserved Characteristics with Homophily\label{d3}}
 
Within this framework, the outcome variable for individual $i$, denoted as $y_i$, alongside connections $\left\{w_{i,j}\right\}_{j=1,j\neq i}^n$, are simultaneously determined by a shared idiosyncratic unobserved feature associated with homophily, $\varepsilon_{3;i}^{\ast}$. Initially, an exogenous adjacency matrix $\mathbf{W}_{0}=[w_{0;i,j}]$ is constructed from a \citeauthor{Erdos1959}'s \citeyearpar{Erdos1959} random graph featuring a density of 0.01, along with an $n\times 1$ vector $\boldsymbol{\varepsilon}_{3}^{\ast}=[\varepsilon_{3;1}^{\ast},\ldots,\varepsilon_{3;n}^{\ast}]^\top$ drawn from a multivariate standard normal distribution. Subsequently, the elements of the endogenous adjacency matrix $\mathbf{W}=[w_{i,j}]$ are computed as
\begin{equation*}
w_{ij}=
\begin{cases}
\mathbb{I}[|\varepsilon_{3;i}^{\ast}-\varepsilon_{3;j}^{\ast}|<\widehat{F}_{\varepsilon_3^\ast}^{-1}(0.99)]\times (1-w_{0;i,j}) + w_{0;i,j} & \text{; if $\varepsilon_{3;i}^{\ast}>\Phi^{-1}(0.99)$,} \\
\mathbb{I}[|\varepsilon_{3;i}^{\ast}-\varepsilon_{3;j}^{\ast}|<\widehat{F}_{\varepsilon_3^\ast}^{-1}(0.99)] \times w_{0;i,j} & \text{; if $\varepsilon_{3;i}^{\ast}<\Phi^{-1}(0.01)$,} \\
w_{0;i,j} & \text{; otherwise},
\end{cases}
\end{equation*}

\noindent where $\widehat{F}_{\varepsilon_3^{\ast}}^{-1}(0.99)$ denotes the 99\% empirical quantile of the components of the vector $\boldsymbol{\varepsilon}_3^{\ast}$, where $\varepsilon_{3;k}^{\ast}$ signifies its $k$th component, and $\Phi^{-1}(\cdot)$ is the inverse operation of the cumulative distribution function for a standard normal variable. This framework encapsulates the concept of homophily, suggesting that agents with higher values of $\varepsilon_{3}$ are inclined to forge or sustain relationships with others who also possess high $\varepsilon_{3}$ values, while disconnecting from those with lower values of this unique, unobservable trait. The $n\times 1$ outcome vector, $\boldsymbol{y}$, is derived from \eqref{E1} by defining $\mathbf{v}=m \times \boldsymbol{\varepsilon}_{3}+\boldsymbol{\varepsilon_{4}}$, where $m$ is selected from $\left\{1,3\right\}$, $\boldsymbol{\varepsilon_{4}}$ is sampled from a multivariate standard normal distribution, and the elements of $\boldsymbol{\varepsilon}_{3}$ are specified as

\begin{equation*}
\varepsilon_{3;i}=
\begin{cases}
\varepsilon_{3;i}^{\ast} & \text{; if $\varepsilon_{3;i}^{\ast}<\Phi^{-1}(0.01)$ or $\varepsilon_{3;i}^{\ast}>\Phi^{-1}(0.99)$,} \\
0 & \text{; otherwise}.
\end{cases}
\end{equation*}

\subsubsection*{Results\label{MC_results}}

\noindent Tables \ref{tab:design1_heterogeneity_main}--\ref{tab:design3_homophily_main} report the Monte Carlo results for the three estimators across the alternative network designs. In addition to the proposed G3SLS estimator in \eqref{E21}, we evaluate the standard Ordinary Least Squares (OLS) estimator -- which ignores the network endogeneity -- and the Generalized Two-Stage Least Squares (G2SLS) estimator -- which sets $\mathbf{W}=\mathbf{W}_0$ in our proposed estimator. All adjacency matrices are row-normalized prior to estimation, following \cite{liu2014}.

Tables \ref{tab:design1_heterogeneity_main}--\ref{tab:design3_homophily_main}  summarize the finite-sample performance of the estimators under each design. For the peer ($\beta$), contextual ($\delta$), and direct ($\gamma$) effects in \eqref{E1}, we report bias, standard deviation (SD), root mean squared error (RMSE), and inter-quartile range (IQR) across alternative values of the design parameters $m$ or $\tau$ and different sample sizes. Several regularities emerge from the Monte Carlo evidence when focusing on bias and RMSE. Across the three structural parameters — peer, contextual, and direct effects — the proposed G3SLS estimator systematically delivers lower bias relative to OLS and G2SLS, particularly as the sample size increases.  The improvements in bias directly translate into lower RMSE. For all three parameters, the RMSE of G3SLS declines steadily with larger samples and remains uniformly below that of OLS and G2SLS across designs. By contrast, the higher bias observed under naive OLS and conventional G2SLS contributes to persistently larger RMSE values, a pattern consistent with the link misclassification mechanisms discussed in \cite{Chandrasekhar_unpub_2016}. Overall, the Monte Carlo evidence indicates that G3SLS has larger Monte Carlo variance than both OLS and G2SLS in DGPs 1 across all sample sizes and parameters, and for the peer effect parameter in DGPs 2 and 3 across all sample sizes. This might be attributed to the potential lack of finite higher moments in these designs, rather than the lack of consistency; i.e., the G3SLS achieves a more accurate estimation of all social interaction parameters, as reflected in both the lower bias and the systematically smaller RMSE.

Finally, to assess the practical value of the third estimation step in Section \ref{est}, we compare the finite-sample properties of the efficient G3SLS estimator against the intermediate 2SLS estimator defined in equation \eqref{E19} in Step 2. Detailed results are provided in Appendix \ref{2SLS_MC} of the Supplementary  Material (Tables \ref{tab:design1_heterogeneity_2sls}-\ref{tab:design3_homophily_2sls}). The analysis confirms that while the 2SLS estimator is consistent, the G3SLS estimator yields substantial efficiency gains, particularly in designs with significant unobserved heterogeneity, thereby justifying the additional estimation step.

\begin{landscape}
   \begin{table}[!htbp]
\centering
\small
\begin{threeparttable}
\caption{Estimator Performance under Unobserved Degree Heterogeneity (Design 1)}
\label{tab:design1_heterogeneity_main}
\begin{tabular}{ccc c cccc c cccc c cccc}
\toprule
 &  &  &  &\multicolumn{4}{c}{Peer effects} &  &\multicolumn{4}{c}{Contextual effects} &  &\multicolumn{4}{c}{Direct effects} \\
\cmidrule(lr){5-8} \cmidrule(lr){10-13} \cmidrule(lr){15-18}
$m$ & Estimator & $n$ &  & Bias & SD & RMSE & IQR &  & Bias & SD & RMSE & IQR &  & Bias & SD & RMSE & IQR \\
\midrule
\multirow{9}{*}{10} & \multirow{3}{*}{OLS} & 50 &  & 0.359 & 0.046 & 0.361 & 0.072 &  & -1.119 & 0.875 & 1.420 & 1.357 &  & -0.449 & 0.616 & 0.762 & 0.983 \\
 &  & 100 &  & 0.352 & 0.031 & 0.353 & 0.049 &  & -1.104 & 0.576 & 1.245 & 0.897 &  & -0.441 & 0.417 & 0.607 & 0.643 \\
 &  & 200 &  & 0.348 & 0.022 & 0.349 & 0.033 &  & -1.110 & 0.404 & 1.181 & 0.633 &  & -0.472 & 0.287 & 0.552 & 0.442 \\
\noalign{\vskip 0.5em}
 & \multirow{3}{*}{G2SLS} & 50 &  & 0.691 & 0.317 & 0.760 & 0.306 &  & -1.612 & 2.233 & 2.754 & 2.917 &  & -0.588 & 1.087 & 1.235 & 1.502 \\
 &  & 100 &  & 0.742 & 0.268 & 0.789 & 0.278 &  & -2.143 & 1.786 & 2.789 & 2.500 &  & -0.816 & 0.811 & 1.150 & 1.158 \\
 &  & 200 &  & 0.810 & 0.272 & 0.854 & 0.289 &  & -2.532 & 1.551 & 2.969 & 2.203 &  & -1.034 & 0.714 & 1.256 & 0.957 \\
\noalign{\vskip 0.5em}
 & \multirow{3}{*}{G3SLS} & 50 &  & 0.074 & 0.986 & 0.989 & 1.283 &  & 0.250 & 6.687 & 6.689 & 8.261 &  & 0.500 & 1.594 & 1.670 & 2.372 \\
 &  & 100 &  & 0.027 & 0.773 & 0.773 & 1.032 &  & 0.214 & 4.136 & 4.140 & 5.281 &  & 0.562 & 1.079 & 1.217 & 1.629 \\
 &  & 200 &  & -0.005 & 0.549 & 0.549 & 0.791 &  & 0.245 & 2.486 & 2.497 & 3.500 &  & 0.562 & 0.718 & 0.912 & 1.124 \\
\midrule
\multirow{9}{*}{12} & \multirow{3}{*}{OLS} & 50 &  & 0.361 & 0.047 & 0.364 & 0.074 &  & -1.132 & 1.043 & 1.539 & 1.628 &  & -0.454 & 0.736 & 0.865 & 1.163 \\
 &  & 100 &  & 0.354 & 0.031 & 0.355 & 0.050 &  & -1.110 & 0.687 & 1.305 & 1.078 &  & -0.441 & 0.501 & 0.667 & 0.776 \\
 &  & 200 &  & 0.350 & 0.022 & 0.351 & 0.034 &  & -1.118 & 0.481 & 1.217 & 0.749 &  & -0.476 & 0.344 & 0.588 & 0.533 \\
\noalign{\vskip 0.5em}
 & \multirow{3}{*}{G2SLS} & 50 &  & 0.732 & 0.304 & 0.793 & 0.305 &  & -1.770 & 2.619 & 3.160 & 3.341 &  & -0.627 & 1.266 & 1.413 & 1.758 \\
 &  & 100 &  & 0.795 & 0.236 & 0.829 & 0.266 &  & -2.199 & 1.952 & 2.940 & 2.632 &  & -0.840 & 0.914 & 1.241 & 1.371 \\
 &  & 200 &  & 0.827 & 0.204 & 0.852 & 0.240 &  & -2.543 & 1.541 & 2.974 & 2.249 &  & -1.044 & 0.732 & 1.275 & 0.997 \\
\noalign{\vskip 0.5em}
 & \multirow{3}{*}{G3SLS} & 50 &  & 0.114 & 1.004 & 1.011 & 1.295 &  & -0.051 & 7.860 & 7.857 & 9.720 &  & 0.473 & 1.902 & 1.959 & 2.851 \\
 &  & 100 &  & 0.115 & 0.824 & 0.831 & 1.088 &  & -0.069 & 4.997 & 4.995 & 6.546 &  & 0.523 & 1.267 & 1.371 & 1.941 \\
 &  & 200 &  & 0.015 & 0.608 & 0.608 & 0.832 &  & 0.148 & 2.927 & 2.930 & 4.032 &  & 0.550 & 0.857 & 1.018 & 1.328 \\
\bottomrule
\end{tabular}
\begin{tablenotes}[para]
\footnotesize
\setstretch{1}
\item \textit{Notes:} This table reports Monte Carlo results under unobserved degree heterogeneity. The parameter $m$ controls the intensity of heterogeneity in node degrees and is common across individuals. Reported statistics include bias, standard deviation (SD), root mean squared error (RMSE), and inter-quantile range (IQR) for the peer effect ($\beta$=0.7), contextual ($\delta=1$), and direct ($\gamma=1$) Effects.
\end{tablenotes}
\end{threeparttable}
\normalsize
\end{table} 
\end{landscape}

\begin{landscape}
   \begin{table}[!htbp]
\centering
\small
\begin{threeparttable}
\caption{Estimator Robustness to Network Link Misclassification (Design 2)}
\label{tab:design2_misclassification_main}
\begin{tabular}{ccc c cccc c cccc c cccc}
\toprule
 &  &  &  &\multicolumn{4}{c}{Peer effects} &  &\multicolumn{4}{c}{Contextual effects} &  &\multicolumn{4}{c}{Direct effects} \\
\cmidrule(lr){5-8} \cmidrule(lr){10-13} \cmidrule(lr){15-18}
$\tau$ & Estimator & $n$ &  & Bias & SD & RMSE & IQR &  & Bias & SD & RMSE & IQR &  & Bias & SD & RMSE & IQR \\
\midrule
\multirow{9}{*}{0.01} & \multirow{3}{*}{OLS} & 50 &  & -0.387 & 0.129 & 0.408 & 0.201 &  & -0.425 & 0.314 & 0.528 & 0.497 &  & 0.411 & 0.220 & 0.466 & 0.346 \\
 &  & 100 &  & -0.525 & 0.087 & 0.533 & 0.137 &  & -0.628 & 0.153 & 0.646 & 0.244 &  & 0.203 & 0.098 & 0.225 & 0.154 \\
 &  & 200 &  & -0.555 & 0.062 & 0.559 & 0.098 &  & -0.778 & 0.091 & 0.783 & 0.149 &  & 0.092 & 0.047 & 0.103 & 0.073 \\
\noalign{\vskip 0.5em}
 & \multirow{3}{*}{G2SLS} & 50 &  & -0.538 & 0.222 & 0.582 & 0.305 &  & -0.152 & 0.469 & 0.493 & 0.718 &  & 0.475 & 0.251 & 0.538 & 0.387 \\
 &  & 100 &  & -0.649 & 0.133 & 0.663 & 0.184 &  & -0.457 & 0.215 & 0.505 & 0.326 &  & 0.237 & 0.112 & 0.262 & 0.179 \\
 &  & 200 &  & -0.688 & 0.092 & 0.695 & 0.140 &  & -0.623 & 0.122 & 0.635 & 0.191 &  & 0.114 & 0.051 & 0.125 & 0.080 \\
\noalign{\vskip 0.5em}
 & \multirow{3}{*}{G3SLS} & 50 &  & 0.739 & 0.466 & 0.873 & 0.668 &  & 2.406 & 1.961 & 3.103 & 2.783 &  & 0.051 & 0.092 & 0.105 & 0.138 \\
 &  & 100 &  & 0.249 & 0.188 & 0.312 & 0.294 &  & 0.995 & 0.648 & 1.187 & 0.976 &  & 0.023 & 0.059 & 0.063 & 0.092 \\
 &  & 200 &  & 0.043 & 0.175 & 0.180 & 0.274 &  & 0.235 & 0.337 & 0.411 & 0.535 &  & 0.007 & 0.038 & 0.039 & 0.059 \\
\midrule
\multirow{9}{*}{0.05} & \multirow{3}{*}{OLS} & 50 &  & -0.387 & 0.129 & 0.408 & 0.201 &  & -0.425 & 0.314 & 0.528 & 0.497 &  & 0.411 & 0.220 & 0.466 & 0.346 \\
 &  & 100 &  & -0.525 & 0.087 & 0.533 & 0.137 &  & -0.628 & 0.153 & 0.646 & 0.244 &  & 0.203 & 0.098 & 0.225 & 0.154 \\
 &  & 200 &  & -0.555 & 0.062 & 0.559 & 0.098 &  & -0.778 & 0.091 & 0.783 & 0.149 &  & 0.092 & 0.047 & 0.103 & 0.073 \\
\noalign{\vskip 0.5em}
 & \multirow{3}{*}{G2SLS} & 50 &  & -0.538 & 0.222 & 0.582 & 0.305 &  & -0.152 & 0.469 & 0.493 & 0.718 &  & 0.475 & 0.251 & 0.538 & 0.387 \\
 &  & 100 &  & -0.649 & 0.133 & 0.663 & 0.184 &  & -0.457 & 0.215 & 0.505 & 0.326 &  & 0.237 & 0.112 & 0.262 & 0.179 \\
 &  & 200 &  & -0.688 & 0.092 & 0.695 & 0.140 &  & -0.623 & 0.122 & 0.635 & 0.191 &  & 0.114 & 0.051 & 0.125 & 0.080 \\
\noalign{\vskip 0.5em}
 & \multirow{3}{*}{G3SLS} & 50 &  & 0.639 & 0.491 & 0.805 & 0.715 &  & 2.604 & 2.007 & 3.287 & 2.918 &  & 0.083 & 0.114 & 0.141 & 0.174 \\
 &  & 100 &  & 0.187 & 0.206 & 0.279 & 0.317 &  & 1.077 & 0.685 & 1.276 & 1.047 &  & 0.038 & 0.067 & 0.077 & 0.106 \\
 &  & 200 &  & 0.006 & 0.183 & 0.183 & 0.296 &  & 0.283 & 0.350 & 0.450 & 0.545 &  & 0.013 & 0.040 & 0.042 & 0.062 \\
\bottomrule
\end{tabular}
\begin{tablenotes}[para]
\footnotesize
\setstretch{1}
\item \textit{Notes:} This table reports Monte Carlo results under network link misclassification. The parameter $\tau$ controls the intensity of misclassification and is common across individuals. Reported statistics include bias, standard deviation (SD), root mean squared error (RMSE), and inter-quantile range (IQR) for the peer effect ($\beta$=0.7), contextual ($\delta=1$), and direct ($\gamma=1$) Effects.
\end{tablenotes}
\end{threeparttable}
\normalsize
\end{table} 
\end{landscape}

\begin{landscape}
   \begin{table}[!htbp]
\centering
\small
\begin{threeparttable}
\caption{Estimator Performance under Unobserved Homophily (Design 3)}
\label{tab:design3_homophily_main}
\begin{tabular}{ccc c cccc c cccc c cccc}
\toprule
 &  &  &  &\multicolumn{4}{c}{Peer effects} &  &\multicolumn{4}{c}{Contextual effects} &  &\multicolumn{4}{c}{Direct effects} \\
\cmidrule(lr){5-8} \cmidrule(lr){10-13} \cmidrule(lr){15-18}
$m$ & Estimator & $n$ &  & Bias & SD & RMSE & IQR &  & Bias & SD & RMSE & IQR &  & Bias & SD & RMSE & IQR \\
\midrule
\multirow{9}{*}{1} & \multirow{3}{*}{OLS} & 50 &  & 0.077 & 0.033 & 0.084 & 0.051 &  & -0.201 & 0.119 & 0.234 & 0.189 &  & -0.109 & 0.082 & 0.137 & 0.129 \\
 &  & 100 &  & 0.071 & 0.022 & 0.074 & 0.035 &  & -0.188 & 0.079 & 0.204 & 0.127 &  & -0.107 & 0.056 & 0.121 & 0.091 \\
 &  & 200 &  & 0.080 & 0.019 & 0.082 & 0.032 &  & -0.181 & 0.062 & 0.191 & 0.099 &  & -0.088 & 0.037 & 0.096 & 0.060 \\
\noalign{\vskip 0.5em}
 & \multirow{3}{*}{G2SLS} & 50 &  & 0.019 & 0.038 & 0.043 & 0.060 &  & -0.051 & 0.127 & 0.137 & 0.195 &  & -0.024 & 0.088 & 0.091 & 0.141 \\
 &  & 100 &  & 0.010 & 0.026 & 0.028 & 0.040 &  & -0.026 & 0.084 & 0.088 & 0.139 &  & -0.013 & 0.060 & 0.061 & 0.094 \\
 &  & 200 &  & 0.015 & 0.022 & 0.027 & 0.036 &  & -0.034 & 0.065 & 0.073 & 0.104 &  & -0.018 & 0.039 & 0.043 & 0.062 \\
\noalign{\vskip 0.5em}
 & \multirow{3}{*}{G3SLS} & 50 &  & -0.013 & 0.049 & 0.051 & 0.074 &  & 0.035 & 0.149 & 0.153 & 0.231 &  & 0.014 & 0.100 & 0.101 & 0.160 \\
 &  & 100 &  & -0.006 & 0.030 & 0.030 & 0.046 &  & 0.018 & 0.095 & 0.096 & 0.149 &  & 0.006 & 0.066 & 0.066 & 0.103 \\
 &  & 200 &  & -0.006 & 0.026 & 0.027 & 0.043 &  & 0.014 & 0.072 & 0.074 & 0.113 &  & 0.002 & 0.043 & 0.043 & 0.067 \\
\midrule
\multirow{9}{*}{3} & \multirow{3}{*}{OLS} & 50 &  & 0.118 & 0.040 & 0.124 & 0.064 &  & -0.304 & 0.152 & 0.340 & 0.236 &  & -0.168 & 0.124 & 0.209 & 0.198 \\
 &  & 100 &  & 0.098 & 0.026 & 0.101 & 0.040 &  & -0.258 & 0.095 & 0.275 & 0.149 &  & -0.149 & 0.077 & 0.168 & 0.118 \\
 &  & 200 &  & 0.113 & 0.023 & 0.115 & 0.037 &  & -0.256 & 0.075 & 0.267 & 0.122 &  & -0.124 & 0.052 & 0.135 & 0.081 \\
\noalign{\vskip 0.5em}
 & \multirow{3}{*}{G2SLS} & 50 &  & 0.056 & 0.050 & 0.075 & 0.078 &  & -0.143 & 0.174 & 0.225 & 0.265 &  & -0.078 & 0.137 & 0.157 & 0.209 \\
 &  & 100 &  & 0.032 & 0.034 & 0.046 & 0.055 &  & -0.082 & 0.110 & 0.137 & 0.171 &  & -0.047 & 0.086 & 0.098 & 0.135 \\
 &  & 200 &  & 0.048 & 0.027 & 0.055 & 0.041 &  & -0.109 & 0.080 & 0.135 & 0.123 &  & -0.053 & 0.056 & 0.077 & 0.091 \\
\noalign{\vskip 0.5em}
 & \multirow{3}{*}{G3SLS} & 50 &  & -0.019 & 0.072 & 0.074 & 0.103 &  & 0.051 & 0.215 & 0.221 & 0.324 &  & 0.026 & 0.141 & 0.143 & 0.222 \\
 &  & 100 &  & -0.007 & 0.039 & 0.040 & 0.062 &  & 0.024 & 0.128 & 0.131 & 0.202 &  & 0.007 & 0.083 & 0.083 & 0.130 \\
 &  & 200 &  & -0.007 & 0.034 & 0.035 & 0.054 &  & 0.017 & 0.094 & 0.096 & 0.142 &  & 0.003 & 0.057 & 0.057 & 0.091 \\
\bottomrule
\end{tabular}
\begin{tablenotes}[para]
\footnotesize
\setstretch{1}
\item \textit{Notes:} This table reports Monte Carlo results under unobserved homophily. The parameter $m$ controls the intensity of homophilous link formation and is common across individuals. Reported statistics include bias, standard deviation (SD), root mean squared error (RMSE), and inter-quantile range (IQR) for the peer effect ($\beta$=0.7), contextual ($\delta=1$), and direct ($\gamma=1$) Effects.
\end{tablenotes}
\end{threeparttable}
\normalsize
\end{table} 
\end{landscape}

\section{Application to Publication Outcomes in Economics\label{emp}}
\subsection{Background}
The availability of online scientific research repositories has resulted in a stable source of data to uncover scholars' professional connections. Additionally, when linked with scholars' biographical public information, other types of professional connections beyond observed Co-authorship can be uncovered \citep{Colussi2018}. However, a major challenge when performing causal inference is that co-authorship connections are inherently correlated with publication outcomes, such as citation counts. However, where scholars completed their graduate training, although related to the co-authorship network, does not a priori directly affect publication outcomes. This setting, therefore, directly lends itself to the use of the proposed estimator to calculate social effects in research citations.

\subsection{Empirical Example: Peer-effects in Publishing}

\noindent To illustrate the proposed method, a data set  of 1,628 peer-reviewed articles published between 2000-2006 in the top-four general-interest journals in Economics is used here to fit the following specific case of the estimating equation in \eqref{intro}:

\begin{equation}
\label{est_eq}
  y_{i,r,t}=\alpha+\beta\sum_{j \neq i}w_{i,j,t}y_{j,r,t}+\sum_{j \neq i}w_{i,j, t}\widetilde{\mathbf{x}}_{j, r,t}^{\top}\boldsymbol{\delta}+ \mathbf{x}_{i, r, t}^{\top}\boldsymbol{\gamma}+ \lambda_{r} + \lambda_{t} + \lambda_{0} + v_{i,r,t}\text{,}
\end{equation}

\noindent The dependent variable $y_{i,r,t}$ denotes the natural logarithm of the total citations received by the article $i$ in the journal $r$ within eight years after its publication. The term $w_{i,j,t}$ corresponds to the $(i,j)$ element of the adjacency matrix $\mathbf{W}$ that represents the \emph{Co-authorship} network at time $t$. A detailed description of this data set and the corresponding summary statistics can be found in \cite{netivreg_g3sls} and \cite{EstradaDingRosales2025}.

The vector of controls $\mathbf{x}_{i,r,t}$ includes indicators for whether current or former editors of the journal $r$ at time $t$ appear among the authors of the article $i$ (\texttt{Editor}) and for whether the co-authors of article $i$ are of different genders (\texttt{Different Gender}), which is set to zero for single-author papers. Additional article-level covariates comprise the total number of pages (\texttt{Number of Pages}), authors (\texttt{Number of Authors}), and bibliographic references (\texttt{Number of References}), as well as a dummy variable identifying articles that are isolated within the network (\texttt{Isolated}). Contextual or peer effects are computed only for the \texttt{Editor} and \texttt{Different Gender} variables, summarized in $\widetilde{\mathbf{x}}_{j,r,t}$. The specification also includes fixed effects for the journal ($\lambda_{r}$) and year ($\lambda_{t}$).  

The structural error term $v_{i,r,t}$ satisfies $E_{\mathbf{X},\mathbf{W}}[\mathbf{v}] \neq 0$, reflecting potential endogeneity in the \emph{Co-authorship} network. To address this, the \emph{Alumni} network is employed as the exogenous matrix $\mathbf{W}_{0}$ in the identification theorem (\ref{T1}). Additionally, to mitigate possible deviations from the exogeneity condition, the model incorporates the corresponding \emph{Alumni-based} components ($\lambda_{0}$), as specified in equation (\ref{E25}).

The model \eqref{est_eq} is estimated in a rolling-regression setting for $t=2002$, $2003$, $2004$, $2005$, and $2006$; i.e., the estimation sample each year includes those from previous years. The results for 2000 and 2001 are not included because they suffer from degrees-of-freedom problems given the specification \eqref{est_eq}. Table \ref{empt3} summarizes the results using the proposed G3SLS estimator, while Tables \ref{empt4} and \ref{empt5} in \ref{Appendix_E} report the OLS and G2SLS results, respectively. In all cases, the asymptotic standard errors are clustered at the corresponding network component level. This is a natural way of clustering when utilizing network data because each component corresponds to a portion of the network that is disconnected from the others, allowing for articles within each component to be correlated but not between disconnected components.

The results show that the estimated peer effect ($\widehat{\beta}$) is consistently positive and becomes statistically significant at the 5\% level from 2004 onward, peaking at 0.676 in 2005. This confirms the presence of significant positive citation spillovers among articles connected within the \emph{Co-authorship} network. Turning to direct effect estimates ($\widehat{\boldsymbol{\gamma}}$), the indicator of gender-diverse research teams exhibits a stable, positive, and statistically significant impact on citation counts in all accumulated samples. This result demonstrates that gender diversity within research teams positively influences the academic impact of a paper, an outcome that aligns with recent evidence \cite{RHJS:2024} showing that gender-diverse teams also produce more readable and accessible research. Furthermore, standard article characteristics such as the number of pages and bibliographic references remain robust, positive predictors of citations, while isolated articles consistently incur a significant citation penalty. In contrast, the estimated contextual effects ($\widehat{\boldsymbol{\delta}}$) related to the presence of the editorial and the diversity of the gender are statistically indistinguishable from zero in all specifications. Finally, empirical evidence supporting the underlying exclusion restriction and relevance conditions necessary for identification is provided in \ref{Appendix_E} in the Supplementary Material.

\begin{table}
\centering
\small
\caption{Estimation Results for Social and Direct Effects}

\begin{tabular}{lccccc}
\hline
& \multicolumn{5}{c}{\emph{Co-author Network}}  \\

\cline{2-6} 

                   & 2002 & 2003 & 2004 & 2005 & 2006   \\
\hline
     Peer Effects ($\widehat{\beta}$) &     0.520     &    0.476     &    0.542** &    0.676** &     0.570**   \\
                   &  (0.362)       &  (0.316)      &  (0.261)      &  (0.283)       &  (0.242)   \\ \hline
 Contextual Effects ($\boldsymbol{\widehat{\delta}}$) & & & & &   \\
 \hspace{0.2cm} \texttt{Editor} &     1.790      &   -0.971       &    -2.910      &   -5.355       &   -5.044      \\
                   &  (5.363)       &  (3.587)      &  (3.111)      &  (4.135)       &  (4.188)    \\
     \hspace{0.2cm} \texttt{Different Gender}  &  -2.096      &   -1.661       &   -1.922       &   -2.967      &   -1.559        \\
                   &  (2.829)      &  (1.793)      &  (1.838)     &   (2.65)     &  (1.737)      \\ \hline
                   
Direct Effects ($\boldsymbol{\widehat{\gamma}}$) & & & &  \\
    \hspace{0.2cm} \texttt{Editor} &    0.173      &    0.027      &   -0.037       &    0.025      &  0.057  \\
                   &  (0.116)       &  (0.124)       &  (0.131)      &  (0.116)     &   (0.130)       \\
    \hspace{0.2cm} \texttt{Different Gender} &    0.219* &    0.205* &    0.221** &    0.166* &    0.143*  \\
                   &  (0.131)       &  (0.108)       &  (0.092)       &  (0.085)      &   (0.080)    \\
    \hspace{0.2cm} \texttt{Number of Pages} &    0.029*** &    0.027*** &    0.023*** &    0.019*** &    0.018*** \\
                   &  (0.004)      &  (0.004)      &  (0.004)     &  (0.004)      &  (0.004)   \\
    \hspace{0.2cm} \texttt{Number of Authors} &    0.072       &    0.087* &    0.072       &    0.097*** &    0.076** \\
                   &   (0.060)      &   (0.050)      &  (0.044)     &  (0.038)       &  (0.031)   \\
    \hspace{0.2cm} \texttt{Number of References} &    0.012*** &    0.012*** &    0.011*** &    0.011*** &    0.012*** \\
                   &  (0.003)       &  (0.002)      &  (0.002)    &  (0.002)    &  (0.001)    \\
    \hspace{0.2cm} \texttt{Isolated} &   -0.223** &   -0.236*** &   -0.353*** &   -0.399*** &   -0.407***   \\
                   &  (0.132)      &   (0.110)      &  (0.103)      &  (0.092)       &  (0.088)      \\
                   
 \hline

                 $n$ &      729 &          961 &         1187 &         1412 &      1628  \\
                 $R^{2}$ &    0.172     &    0.199      &    0.178      &    0.125      &    0.148   \\
\hline
\end{tabular}

\vspace{0.4cm}

\begin{minipage}{1\textwidth}
\footnotesize
Note: Standard errors are in parentheses and are clustered at the specific network's components. Stars follow the key: * $p$ $<$ 0.10, ** $p$ $<$ 0.05, and *** $p$ $<$ 0.01, where $p$ stands for $p$-values. $R^2$ are calculated as the squared of the sample correlation coefficients between the observed outcomes and their fitted values. All specifications include indicator variables for Journal, Year and Alumni Network Components. 
\end{minipage}
\label{empt3}
\end{table}

\section{Discussion\label{conclusion}}
\noindent In this paper, we propose a novel and computationally simple way to identify and consistently estimate social parameters in a linear-in-means model with endogenous network formation. Identification can be achieved by the inclusion of the multiplex network data structure, where at least one of the layers can be assumed to be exogenous (potentially pre-determined), and the multiplex structure is such that the layers correlate with each other. Our research shows that peer and contextual effects can be uniquely recovered from an estimating sample. Unlike current alternatives that require smoothing techniques and/or Bayesian methods, the resulting estimator is simple to compute, and it is already implemented in a user-written \texttt{Stata} command, see \cite{netivreg_stata_journal}. The asymptotic normality of the proposed multi-step estimator is established, and a consistent estimator of its asymptotic variance-covariance matrix is proposed for performing inference.

The type of endogeneity allowed in our framework is general enough to encompass settings with measurement error, sample selection, and correlation between the unobservables driving network formation and outcomes in a linear-in-means model. It is argued that the full observability of a multiplex data structure, with at least one of the layers being exogenous, is not a limiting data requirement that can easily be constructed in some cases based purely on characteristics the researcher usually observes. With this in mind, an empirical application is presented where the tools of web scraping and text mining are used to construct a data set consisting of all peer-reviewed research articles published in 4 of economics' top general-interest journals between 2000 and 2006. Using publicly available information on where the authors of these publications obtained their Ph.D. degrees from, an \emph{Alumni} network is constructed and argued to be pre-determined yet correlated with the observed \emph{Co-authorship} ties among these scholars. The results show the existence of positive peer effects in terms of citations among peer-reviewed research articles connected through co-authorship connections of their authors, as well as significant positive effects of research teams that are gender diverse on the quality of a paper measured in terms of citation outcomes similar to \cite{RHJS:2024}. 

The results in our paper have the potential to be extended in different directions. From a theoretical perspective, this paper introduces the concept of multidimensional networks and uses, in particular, the structure of multiplex networks. Although the former is an active research topic in other fields, this paper proposes its usage for causal inference in the Social Sciences. An interesting extension would be to study the dynamics of peer effects using the setting proposed in this paper. Similarly, as with all IV-based estimation procedures, issues pertaining to weak or invalid instruments, many-instrument problems, or small-sample performance of the proposed estimator are valid concerns, but they are beyond the scope of this paper and therefore left for future research.

\section{Acknowledgements}
We are grateful to the Co-Editor, the Associate Editor, and two anonymous referees for their constructive comments and suggestions that significantly improved this manuscript. We also thank the discussants and participants at various conferences, workshops, and seminars where earlier versions of this research were presented for their valuable feedback. Portions of this paper are based on results from Juan Estrada's doctoral dissertation, \emph{Causal Inference in Multilayered Networks}, at Emory University. Kim P. Huynh dedicates this paper in the memory of Tony S. Wirjanto.

\appendix
\setcounter{section}{0}
\renewcommand\thesection{Appendix \Alph{section}}
\renewcommand{\theequation}{\Alph{section}.\arabic{equation}}

\section{Proofs of Main Results\label{Appendix_A}}
\begin{proof}[Proof of Theorem \ref{T1}]
First note that by Assumption \ref{A2}, the matrix of reduced form coefficients in \eqref{E3} is given by $\mathbf{\Pi}=E[\mathbf{S}^{\top}\mathbf{w}_{0;i}\mathbf{w}_{0;i}^{\top}\mathbf{S}]^{-1}E[\mathbf{S}^{\top}\mathbf{w}_{i;0}\mathbf{w}_{i}^{\top}\mathbf{S}]$. Similar arguments and Assumption \ref{A1} imply that the reduced--form coefficients in \eqref{E6} are identified. Therefore it follows that for each characteristic $k$ there is a set of structural parameters $(\beta,\gamma_{k},\delta_{k})$ associated with it. It then follows that two sets of structural parameters $(\alpha,\beta,\gamma_{k},\delta_{k})$ and $(\alpha^\prime,\beta^\prime,\gamma_{k}^\prime,\delta_{k}^\prime)$ lead to the same reduced form if

\begin{align}
\label{PE1}
&\alpha/(1-\pi_{1,1}\beta-\sum_{l=1}^{k}\pi_{1,l}\delta_{l})=\alpha^{\prime}/(1-\pi_{1,1}\beta^{\prime}-\sum_{l=1}^{k}\pi_{1,l}\delta_{l}^{\prime})\textbf{,} \\
\nonumber
&[\mathbf{I}-(\pi_{1,1}\beta+\sum_{l=1}^{k}\pi_{1,l}\delta_{l})\mathbf{W_{0}}]^{-1}\left[\mathbf{I}\gamma_{k}+\left(\pi_{k,1}\beta+\sum_{l=1}^{k}\pi_{k,l}\delta_{l}\right)\mathbf{W}_{0}\right]\\ \nonumber
&=[\mathbf{I}-(\pi_{1,1}\beta^{\prime}+\sum_{l=1}^{k}\pi_{1,l}\delta^{\prime}_{l})\mathbf{W_{0}}]^{-1}\left[\mathbf{I}\gamma^{\prime}_{k}+\left(\pi_{k,1}\beta^{\prime}+\sum_{l=1}^{k}\pi_{k,l}\delta^{\prime}_{l}\right)\mathbf{W}_{0}\right]\text{,} \quad \forall k\ge 1.
\end{align}

\noindent Multiplying both sides of the last equation by  $[\mathbf{I}-(\pi_{1,1}\beta+\sum_{l=1}^{k}\pi_{1,l}\delta_{l})\mathbf{W}_0][\mathbf{I}-(\pi_{1,1}\beta^{\prime}+\sum_{l=1}^{k}\pi_{1,l}\delta^{\prime}_{l})\mathbf{W}_0]$, recalling that for any scalar $a$, $(\mathbf{I}-a\mathbf{W}_{0})^{-1}\mathbf{W}_{0}=\mathbf{W}_{0}(\mathbf{I}-a\mathbf{W}_{0})^{-1}$, and rearranging some terms, one has

\begin{align*}
    &(\gamma_{k}-\gamma_{k}^{\prime})\mathbf{I}\\
    &+\left[\pi_{1,1}\gamma_{k}^{\prime}\beta-\pi_{1,1}\gamma_{k}\beta^{\prime}+\sum_{l=1}^{k}\pi_{k,l}\delta_{l}-\sum_{l=1}^{k}\pi_{k,l}\delta_{l}^{\prime}\right]\mathbf{W}_{0}\\
    &+\left[\pi_{k,1}\beta-\pi_{k,1}\beta^{\prime}+\gamma_{k}^{\prime}\sum_{l=1}^{k}\pi_{1,l}\delta_{l}-\gamma_{k}\sum_{l=1}^{k}\pi_{1,l}\delta^{\prime}_{l}\right]\mathbf{W}_{0}\\
    &+\left[(\pi_{1,1}\beta+\sum_{l=1}^{k}\pi_{1,l}\delta_{l})(\pi_{k,1}\beta^{\prime}+\sum_{i=l}^{k}\pi_{k,l}\delta_{l}')\right]\mathbf{W}_{0}^{2}\\
    &-\left[(\pi_{1,1}\beta^{\prime}+\sum_{l=1}^{k}\mathbf\pi_{1,l}\delta^{\prime}_{l})(\pi_{k,1}\beta+\sum_{l=1}^{k}\pi_{k,l}\delta_{l})\right]\mathbf{W}_{0}^{2}=0\text{.}
\end{align*}

\noindent If the matrices $\mathbf{I}$, $\mathbf{W}_{0}$ and $\mathbf{W}_{0}^{2}$ are linearly independent, it then follows

\begin{align}
\label{PE3}
    & \gamma_{k}=\gamma_{k}'\text{,} \\
\label{PE4}
    & (\pi_{1,1}\beta+\sum_{l=1}^{k}\pi_{1,l}\delta_{l})\gamma_{k}'+\pi_{k,1}\beta+\sum_{l=1}^{k}\pi_{k,l}\delta_{l}=(\pi_{1,1}\beta'+\sum_{l=1}^{k}\pi_{1,l}\delta'_{l})\gamma_{k}+\pi_{k,1}\beta'+\sum_{l=1}^{k}\pi_{k,l}\delta'_{l}\text{,}\\
\label{PE5}
    & (\pi_{1,1}\beta+\sum_{l=1}^{k}\pi_{1,l}\delta_{k})(\pi_{k,1}\beta'+\sum_{l=1}^{k}\pi_{k,l}\delta_{l}')=(\pi_{1,1}\beta'+\sum_{l=1}^{k}\mathbf\pi_{1,l}\delta'_{l})(\pi_{k,1}\beta+\sum_{l=1}^{k}\pi_{k,l}\delta_{l})\text{.}
\end{align}

\noindent In equation \eqref{PE5}, if $(\pi_{1,1}\beta+\sum_{l=1}^{k}\pi_{1,l}\delta_{l})(\pi_{k,1}\beta'+\sum_{l=1}^{k}\pi_{k,l}\delta_{l}^{\prime})=0$, it must be the case that either $\pi_{1,1}\beta+\sum_{l=1}^{k}\pi_{1,l}\delta_{l}=0$, or $\pi_{k,1}\beta^{\prime}+\sum_{i=1}^{k}\pi_{k,i}\delta_{l}^{\prime}=0$, or both. If $\pi_{1,1}\beta+\sum_{l=1}^{k}\pi_{1,l}\delta_{l}=0$, given that rank$(\mathbf{\Pi})=k+1$ in Assumption \ref{A2}, then $\beta=\delta_{1}=\dots=\delta_{k}=0$, which would contradict $\beta(\gamma_k\pi_{1,1}+\pi_{k,1})+\sum_{l=1}^{k}\delta_l(\gamma_{k}\pi_{1,l+1}+\pi_{k,l+1}) \neq 0$. The same argument applies to $\pi_{k,1}\beta^{\prime}+\sum_{l=1}^{k}\pi_{k,l}\delta_{l}^{\prime}=0$. Thus, it must be the case that $(\pi_{1,1}\beta+\sum_{l=1}^{k}\pi_{1,l}\delta_{l})(\pi_{k,1}\beta'+\sum_{l=1}^{k}\pi_{k,l}\delta_{l}')>0$.

\noindent Since $(\pi_{1,1}\beta+\sum_{l=1}^{k}\pi_{k,l}\delta_{l})(\pi_{k,1}\beta^{\prime}+\sum_{l=1}^{k}\pi_{k,l}\delta_{l}^{\prime})>0$, there exists $\lambda>0$ such that $\beta'=\lambda\beta$ and $\delta_{l}^{\prime}=\lambda\delta_{l}$ for all $l=1,\dots,k$. It then follows from equation \eqref{PE3} and \eqref{PE4} that $(\pi_{1,1}\beta^{\prime}+\sum_{l=1}^{k}\pi_{1,l}\delta^{\prime}_{l})\gamma_{k}+\pi_{k,1}\beta^{\prime}+\sum_{l=1}^{k}\pi_{k,l}\delta^{\prime}_{l}=\lambda[(\pi_{1,1}\beta+\sum_{l=1}^{k}\pi_{1,l}\delta_{l})\gamma_{k}+\pi_{k,1}\beta+\sum_{l=1}^{k}\pi_{k,l}\delta_{l}]=(\pi_{1,1}\beta+\sum_{l=1}^{k}\pi_{1,l}\delta_{l})\gamma_{k}+\pi_{k,1}\beta+\sum_{l=1}^{k}\pi_{k,l}\delta_{l}$. Therefore, given that $\beta(\gamma_{k}\pi_{1,1}+\pi_{k,1})+\sum_{l=1}^{k}\delta_l(\gamma_{k}\pi_{1,l+1}+\pi_{k,l+1}) \neq 0$, it must be that $\lambda=1$, $\beta^{\prime}=\beta$, and $\delta_{l}^{\prime}=\delta_{l}$ for all $l=1,\dots,k$. \eqref{PE1} will then also imply that $\alpha^{\prime}=\alpha$ and this concludes the proof.
\end{proof}

\begin{proof}[Proof of Proposition \ref{P1}]
Firstly, \eqref{E28} can be further written as

\begin{align}
\label{PE16}
(\mathbf{I}-\mathbf{W}_{0})\mathbf{y}&=(\mathbf{I}-\boldsymbol{\pi}^{\top}_{1}\boldsymbol{\theta}\mathbf{W}_{0})^{-1}(\gamma_{1}\mathbf{I}+\boldsymbol{\pi}^{\top}_{2}\boldsymbol{\theta}\mathbf{W}_{0})(\mathbf{I}-\mathbf{W}_{0})\mathbf{x}_{1} +\dots  \\ \nonumber
&+(\mathbf{I}-\boldsymbol{\pi}^{\top}_{1}\boldsymbol{\theta}\mathbf{W}_{0})^{-1}(\gamma_{k}\mathbf{I}+\boldsymbol{\pi}^{\top}_{k+1}\boldsymbol{\theta}\mathbf{W}_{0})(\mathbf{I}-\mathbf{W}_{0})\mathbf{x}_{k} \\ \nonumber
&+(\mathbf{I}-\boldsymbol{\pi}^{\top}_{1}\boldsymbol{\theta}\mathbf{W}_{0})^{-1}(\mathbf{I}-\mathbf{W}_{0})\mathbf{e}^{+}\textbf{,}
\end{align}

\noindent where the $1 \times (k+1)$ -- vector $\boldsymbol{\pi}^{\top}_{l}$ represents the $l$th row of the matrix of coefficients $\mathbf{\Pi}$, and $\mathbf{x}_{l}$ represents the $l$th column of the matrix $\mathbf{X}$. Recall that from Assumption \ref{A2}, the matrix of parameters $\mathbf{\Pi}$ is identified. Therefore by the same arguments in the proof of Theorem \ref{T1}, it follows that for each characteristic $k$ there is a set of structural parameters $(\beta,\gamma_{k},\delta_{k})$ associate with it. It then follows that two set of structural parameters $(\alpha,\beta,\gamma_{k},\delta_{k})$ and $(\alpha,\beta^\prime,\gamma_{k}^\prime,\delta_{k}^\prime)$ lead to the same reduced form if

\begin{align*}
&[\mathbf{I}-(\pi_{1,1}\beta+\sum_{l=1}^{k}\pi_{1,l}\delta_{l})\mathbf{W}_{0}]^{-1}\left[\mathbf{I}\gamma_{k}+\left(\pi_{k,1}\beta+\sum_{l=1}^{k}\pi_{k,l}\delta_{l}\right)\mathbf{W}_{0}\right](\mathbf{I}-\mathbf{W}_{0})= \\
&[\mathbf{I}-(\pi_{1,1}\beta'+\sum_{l=1}^{k}\pi_{1,l}\delta'_{l})\mathbf{W}_{0}]^{-1}\left[\mathbf{I}\gamma'_{k}+\left(\pi_{k,1}\beta'+\sum_{l=1}^{k}\pi_{k,l}\delta'_{l}\right)\mathbf{W}_{0}\right](\mathbf{I}-\mathbf{W}_{0})\text{, } \quad \forall k\ge 1.
\end{align*}

\noindent This equality is equivalent to
\begin{align*}
    &(\gamma_{k}-\gamma_{k}')\mathbf{I}\\
    &+\left[\pi_{1,1}\gamma_{k}'\beta-\pi_{1,1}\gamma_{k}\beta'+\sum_{l=1}^{k}\pi_{k,l}\delta_{l}-\sum_{l=1}^{k}\pi_{k,l}\delta_{l}'+\pi_{k,1}\beta-\pi_{k,1}\beta'+\gamma_{k}'\sum_{l=1}^{k}\pi_{1,l}\delta_{l}\right.\\
    &\hphantom{++}\left.-\gamma_{k}\sum_{l=1}^{k}\pi_{1,l}\delta'_{l}+\gamma_{k}-\gamma_{k}' \right]\mathbf{W}_{0}\\
    &+\left[(\pi_{1,1}\beta+\sum_{l=1}^{k}\pi_{1,l}\delta_{k})(\pi_{k,1}\beta'+\sum_{l=1}^{k}\pi_{k,l}\delta_{l}')\right]\mathbf{W}_{0}^{2} \\
    &-\left[(\pi_{1,1}\beta'+\sum_{l=1}^{k}\mathbf\pi_{1,l}\delta'_{l})(\pi_{k,1}\beta+\sum_{l=1}^{k}\pi_{k,l}\delta_{l})+\Xi\right]\mathbf{W}_{0}^{2} \\
    &+\left[(\pi_{1,1}\beta'+\sum_{l=1}^{k}\mathbf\pi_{1,l}\delta'_{l})(\pi_{k,1}\beta+\sum_{l=1}^{k}\pi_{k,l}\delta_{l})\right]\mathbf{W}_{0}^{3}\\
    &-\left[(\pi_{1,1}\beta+\sum_{l=1}^{k}\pi_{1,l}\delta_{l})(\pi_{k,1}\beta'+\sum_{l=1}^{k}\pi_{k,l}\delta_{l}')\right]\mathbf{W}_{0}^{3},
\end{align*}

\noindent where we used the notation $\Xi\equiv \pi_{1,1}\gamma_{k}\beta-\pi_{1,1}\gamma_{k}'\beta'+\sum_{l=1}^{k}\pi_{k,l}\delta_{l}'-\sum_{l=1}^{k}\pi_{k,l}\delta_{l}+\pi_{k,1}\beta'-\pi_{k,1}\beta+\gamma_{k}\sum_{l=1}^{k}\pi_{1,l}\delta'_{l}-\gamma_{k}'\sum_{l=1}^{k}\pi_{1,l}\delta_{l}$. Since $\mathbf{I}$, $\mathbf{W}_{0}$, $\mathbf{W}_{0}^{2}$ and $\mathbf{W}_{0}^{3}$ are linearly independent, it then follows that

\begin{align*}
    & \gamma_{k}=\gamma_{k}' \\
    & (\pi_{1,1}\beta+\sum_{l=1}^{k}\pi_{1,l}\delta_{l})\gamma_{k}'+\pi_{k,1}\beta+\sum_{l=1}^{k}\pi_{k,l}\delta_{l}=(\pi_{1,1}\beta'+\sum_{l=1}^{k}\pi_{1,l}\delta'_{l})\gamma_{k}+\pi_{k,1}\beta'+\sum_{l=1}^{k}\pi_{k,l}\delta'_{l}\\
    & (\pi_{1,1}\beta+\sum_{l=1}^{k}\pi_{1,l}\delta_{l})(\pi_{k,1}\beta'+\sum_{l=1}^{k}\pi_{k,l}\delta_{l}')=(\pi_{1,1}\beta'+\sum_{l=1}^{k}\mathbf\pi_{1,l}\delta'_{l})(\pi_{k,1}\beta+\sum_{l=1}^{k}\pi_{k,l}\delta_{l}),
\end{align*}

\noindent and similar arguments as in the proof of Theorem \ref{T1} implies that $\beta'=\beta$, $\delta_{l}'=\delta_{l}$, and $\gamma_{l}'=\gamma_{l}$, for all $l=1,\dots,k$.
\end{proof}

\begin{proof}[Proof of Theorem \ref{dist}]
First, write the estimator proposed in the main text as $\widehat{\boldsymbol{\psi}}_{\text{G3SLS}}=(\widehat{\widetilde{\mathbf{Z}}}{}^{\ast\top}\widehat{\mathbf{D}})^{-1}\widehat{\widetilde{\mathbf{Z}}}{}^{\ast\top}\mathbf{y}=(\widehat{\widetilde{\mathbf{Z}}}{}^{\ast\top}\widehat{\mathbf{D}})^{-1}\widehat{\widetilde{\mathbf{Z}}}{}^{\ast\top}(\mathbf{D}\boldsymbol{\psi}+\mathbf{v})$. Then, note that $\mathbf{D}\boldsymbol{\psi}=\widehat{\mathbf{D}}\boldsymbol{\psi}+(\mathbf{D}-\widehat{\mathbf{D}})\boldsymbol{\psi}$ and rewrite

\begin{equation}
\label{CP11}
\widehat{\boldsymbol{\psi}}_{\text{G3SLS}}-\boldsymbol{\psi}=(\widehat{\widetilde{\mathbf{Z}}}{}^{\ast\top}\widehat{\mathbf{D}})^{-1}[\widehat{\widetilde{\mathbf{Z}}}{}^{\ast\top}(\mathbf{D}-\widehat{\mathbf{D}})\boldsymbol{\psi}+\mathbf{v}].
\end{equation}

\noindent Recall $\mathbf{D} = [\boldsymbol{\iota},\mathbf{X},\mathbf{W}\mathbf{S}]$ and $\widehat{\mathbf{D}}=[\boldsymbol{\iota},\mathbf{X},\widehat{\mathbf{W}\mathbf{S}}]$ so that $\mathbf{D}-\widehat{\mathbf{D}}=[\mathbf{O},\mathbf{W}\mathbf{S}-\mathbf{W}_{0}\mathbf{S}\widehat{\mathbf{\Pi}}]$, where $\mathbf{O}$ is a $n\times(k+1)$ matrix of zeros, $\mathbf{W}\mathbf{S}-\mathbf{W}_{0}\mathbf{S}\widehat{\mathbf{\Pi}}=\mathbf{M}_{\mathbf{W}_{0}}\mathbf{U}$, with $\mathbf{M}_{\mathbf{W}_0}=\mathbf{I}_{n}-\mathbf{W}_{0}\mathbf{S}(\mathbf{S}^{\top}\mathbf{W}_{0}^{2}\mathbf{S})^{-1}\mathbf{S}^{\top}\mathbf{W_{0}}$ and $\mathbf{U}$ is the error in \eqref{E3}. Let $\mathbf{e}^{\ast}=\mathbf{M}_{\mathbf{W}_{0}}\mathbf{U}\boldsymbol{\theta}+\mathbf{v}$, then \eqref{CP11} can be expressed in terms of the error $\mathbf{e}^{\ast}$ as

\begin{equation*}
\widehat{\boldsymbol{\psi}}_{\text{G3SLS}}-\boldsymbol{\psi}=(\widehat{\widetilde{\mathbf{Z}}}{}^{\ast\top}\widehat{\mathbf{D}})^{-1}\widehat{\widetilde{\mathbf{Z}}}{}^{\ast\top}\mathbf{e}^{\ast}=(\widehat{\mathbf{\Gamma}}^{\top}\widehat{\mathbf{Z}}{}^{\ast\top}\mathbf{D}_{0}\widehat{\mathbf{\Gamma}})^{-1}\widehat{\widetilde{\mathbf{Z}}}{}^{\ast\top}\mathbf{e}^{\ast}\text{.}
\end{equation*}

\noindent The result follows from analysing each term in the last equality above. First, from Lemma \ref{l3}, one has $\widehat{\mathbf{\Gamma}}=\mathbf{\Gamma}+o_{p}(1)$. Given Lemmas \ref{l1} and \ref{l2} and the continuous mapping theorem it follows $E_{\mathbf{X},\mathbf{W}_0}[\mathbf{W}_0\mathbf{y}](\widehat{\boldsymbol{\psi}},\widehat{\mathbf{\Pi}})=E_{\mathbf{X},\mathbf{W}_0}[\mathbf{W}_0\mathbf{y}](\boldsymbol{\psi},\mathbf{\Pi})+o_{p}(1)$. Therefore, $\widehat{\mathbf{Z}}^{\ast}=\mathbf{Z}^{\ast}+o_{p}(1)$ and $\widehat{\mathbf{Z}}^{\ast\top}\mathbf{D}_{0}=\mathbf{Z}^{\ast\top}\mathbf{D}_{0}+o_{p}(1)$. Consequently, from Assumption \ref{D8}, one has $n^{-1} \widehat{\mathbf{\Gamma}}^{\top}\widehat{\mathbf{Z}}^{*\top}\mathbf{D}_{0}\widehat{\mathbf{\Gamma}}\overset{p}{\longrightarrow}\mathbf{\Gamma}^{\top}\mathbf{Q}_{\mathbf{Z}^{\ast}\mathbf{D}_{0}}\mathbf{\Gamma}$.

\noindent From Lemma \ref{l4}, $n^{-1/2}\widehat{\widetilde{\mathbf{Z}}}{}^{\ast\top}\mathbf{e}^{*}=n^{-1/2}\widetilde{\mathbf{Z}}^{*\top}\mathbf{e}^{*}+o_{p}(1)$. Based on the previous results, to show consistency, it is enough to show that $n^{-1}\widetilde{\mathbf{Z}}^{*\top}\mathbf{e}^{*}\overset{p}{\longrightarrow}0$. Applying the same arguments as in the proof of Lemma \ref{l2}, it follows that  $n^{-1}\widetilde{\mathbf{Z}}^{\top}\mathbf{e}\overset{p}{\longrightarrow}0$ and $\widehat{\boldsymbol{\psi}}_{\text{G3SLS}}\overset{p}{\longrightarrow}\boldsymbol{\psi}$.

Finally, applying the same arguments as in the proof of Lemma \ref{l4} one has

\begin{equation}
    n^{-1/2}\widetilde{\mathbf{Z}}^{\ast\top}\mathbf{e}^{\ast}\ \overset{d}{\longrightarrow} N(\mathbf{0},\boldsymbol{\Omega}),
\end{equation}

\noindent where $\boldsymbol{\Omega}=\lim_{n\to\infty}n^{-1}\sum_{i=1}^{n}\widetilde{\mathbf{z}}_{i}^{\ast}\widetilde{\mathbf{z}}_{i}^{\ast\top}E_{\widetilde{\mathbf{Z}}^{\ast}}[e_{i}^{\ast 2}]=\lim_{n\to\infty}n^{-1}\sum_{i=1}^{n}\widetilde{\mathbf{z}}_{i}^{\ast}\widetilde{\mathbf{z}}_{i}^{\ast\top}E_{\mathbf{X},\mathbf{W_0}}[e_{i}^{\ast 2}]$. 

The above implies that

\begin{align*}
n^{1/2}(\widehat{\boldsymbol{\psi}}_{\text{G3SLS}}-\boldsymbol{\psi})\overset{d}{\longrightarrow}&(\mathbf{\Gamma}^{\top}\mathbf{Q}_{\mathbf{Z}^{\ast}\mathbf{D}_{0}}\mathbf{\Gamma})^{-1}\times N(\mathbf{0},\boldsymbol{\Omega})\\
&N(\mathbf{0},(\mathbf{\Gamma}^{\top}\mathbf{Q}_{\mathbf{Z}^{\ast}\mathbf{D}_{0}}\mathbf{\Gamma})^{-1}\boldsymbol{\Omega}(\mathbf{\Gamma}\mathbf{Q}^{\top}_{\mathbf{Z}^{\ast}\mathbf{D}_{0}}\mathbf{\Gamma}^{\top})^{-1}).
\end{align*}
\hfill
\end{proof}

\section{Asymptotic Theory--Conditions\label{Appendix_B}}
\setcounter{assumption}{0}%

This section provides a set of sufficient conditions for the asymptotic distribution of the proposed estimator. The analysis is made conditional on the realization of the exogenous variables $\mathbf{X}$ and the exogenous adjacency matrix $\mathbf{W}_{0}$. To this end, the  notation $E_{\mathbf{X},\mathbf{W}_0}$, $\text{var}_{\mathbf{X},\mathbf{W}_0}$, and ${\Pr}_{\mathbf{X},\mathbf{W}_0}$ is used here to represent the expectations, variances, and probabilities conditional on $\mathbf{X}$ and $\mathbf{W}_0$, respectively.

\begin{assumption}
\label{D1} The row and column sums of the matrix $\mathbf{W}_{0}$ are bounded uniformly $\forall n$.
\end{assumption}

\begin{assumption}
\label{D2} All elements in the matrix $[\mathbf{X},\mathbf{v}]$ are uniformly bounded in absolute value $\forall n$.
\end{assumption}

\begin{assumption}
\label{D3} The innovations, $\{u_{j,i}\}$, for $j=1,\ldots,k+1$, $i=1,\ldots,n$ are independently jointly distributed $\forall n$, with $E_{\mathbf{X},\mathbf{W}_0}[u_{j,i}]=0$, and $E_{\mathbf{X},\mathbf{W}_0}[u_{j,i}^{2}]=\sigma^{2}_{u_{j,i}}<\infty$, $\forall i=1,\ldots,n$. Additionally, the $\{u_{j,i}\}$ elements are uniformly bounded in absolute value satisfying $\lim_{n\to\infty}n^{-1}\sum_{i=1}^{n}\allowbreak\sigma^{2}_{u_{j,i}} <\infty$, $\liminf_{n\to\infty}n^{-1}\sum_{i=1}^{n}\sigma^{2}_{u_{j,i}} >0$, and for some $\xi>0$, $\sup_{n,i}E|u_{j,i}|^{2+\xi}<\infty$ for $j=1,\dots,k+1$. Furthermore, the covariance between two errors from equations $j$ and $s$, i.e., $E[u_{j,i}u_{s,i}]=\sigma_{u_{(j,s);i}}$ satisfies $\lim_{n\to\infty} n^{-1}\sum_{i=1}^{\infty}\sigma_{u_{(j,s);i}} <\infty$, $\forall j,s=1,\dots,k+1$, $i=1,\ldots,n$.
\end{assumption}

\noindent Assumptions \ref{D1}, \ref{D2}, and \ref{D3} facilitate the derivations. Assumption \ref{D2}, for example, implies that $\mathbf{y}$, and therefore $\mathbf{S}=(\mathbf{y},\mathbf{X})$, are uniformly bounded in absolute value for all $n$. Assumption \ref{D3} leaves the dependence structure between the columns in $\mathbf{W}\mathbf{S}$ unrestricted, i.e., it allows for the existence of correlation between the errors in the system of equations $\mathbf{WS}=\mathbf{W}_{0}\mathbf{S}+\mathbf{U}$, which means that some unobserved variables can potentially jointly explain $\{\mathbf{Wy},\mathbf{W}\mathbf{x}_{1},\dots,\mathbf{W}\mathbf{x}_{k}\}$ after controlling for $\mathbf{W}_{0}\mathbf{S}$. Also note that \ref{D3} implies that the covariance of the errors does not explode asymptotically, i.e. the absolute sum of the covariances does not grow faster than $n$.

\begin{assumption}
\label{D4} The matrix $\mathbf{S}=[\mathbf{y},\mathbf{X}]$ is such that $\mathbf{Q}_{\mathbf{S}\mathbf{W}_{0}\mathbf{S}}$ is finite and non-singular, where $\mathbf{Q}_{\mathbf{S}\mathbf{W}_{0}\mathbf{S}}= \lim_{n\to\infty}n^{-1}\sum_{i=1}^{n}\bar{\mathbf{s}}_{0;(i)}\bar{\mathbf{s}}_{0;(i)}^{\top}$, and $\bar{\mathbf{s}}_{0;(i)}=\left[\bar{y}_{0;(i)},\bar{x}_{0;1(i)}, \ldots, \bar{x}_{0;k(i)}\right]^{\top}$, and $\bar{y}_{0;(i)}$ represents the average of outcomes of individual $i$'s connections, i.e., $\bar{y}_{0;(i)}\equiv\mathbf{w}_{0;i}\mathbf{y}$ where $\mathbf{w}_{0;i}$ is the first row of the adjacency matrix $\mathbf{W}_0$. The notation is analogous for $k$ regressors in the matrix $\mathbf{X}$.
\end{assumption}

\begin{assumption}
\label{D5} The matrix $\mathbf{Z}$ is such that $\mathbf{Q}_{\mathbf{ZZ}}= \lim_{n\to\infty}n^{-1}\sum_{i=1}^n\mathbf{z}_{i}\mathbf{z}_{i}^{\top}$ is finite and non-singular. Additionally, $\mathbf{Q}_{\mathbf{ZD}_{0}}= \lim_{n\to\infty}n^{-1}\sum_{i=1}^{n}\mathbf{z}_{i}\mathbf{d}_{0;i}^{\top}$ is finite and non-singular.
\end{assumption}

\begin{assumption}
\label{D6} The structural innovations $\{v_{i}\}$ are, for each $n$, independently jointly distributed with $E_{\mathbf{X},\mathbf{W}_0}[v_{i}]=0$ and $E_{\mathbf{X},\mathbf{W}_0}[v_{i}^{2}]=\sigma^{2}_{v_{i}}<\infty$, $\forall i=1,\ldots,n$. Additionally, they are uniformly bounded in absolute value, and satisfy $\lim_{n\to\infty} n^{-1}\sum_{i=1}^{n}\sigma^{2}_{v_{i}} <\infty$, $\liminf_{n\rightarrow\infty} n^{-1}\sum_{i=1}^{n}\sigma^{2}_{v_{i}}\allowbreak >0$, and for some $\xi>0$, $\sup_{n,i}E|v_{i}|^{2+\xi}<\infty$.
\end{assumption}

\begin{assumption}
\label{D7} The innovations $\{u_{j;i}\}$ and $\{v_{i}\}$ are, for each $n$, independently jointly distributed for all $j=1,\dots,k+1$ and $i=1,\ldots,n$.
\end{assumption}

\begin{assumption}
\label{D8} Both $\mathbf{Z}^{\ast}$ and $\mathbf{D}_{0}$ guarantee $\mathbf{Q}_{\mathbf{Z}^{\ast}\mathbf{D}_{0}}= \lim_{n\to\infty}n^{-1}\sum_{i=1}^{n}\mathbf{z}_{i}^{\ast}\mathbf{d}_{0;i}^{\top}$ is finite and non-singular.
\end{assumption}

\noindent Assumptions \ref{D4}, \ref{D5}, and \ref{D8} are standard in the sense that they guaranty the application of the continuous mapping theorem in various parts of the proofs and the existence of asymptotic variance--covariance matrices. Assumption \ref{D6} is similar to Assumption \ref{D3}, while \ref{D7} is reminiscent of the usually required statistical independence between reduced--form and structural errors in classical instrumental variable estimation.

\section{Auxiliary Results\label{Appendix_C}}
\renewcommand\thesection{\Alph{section}}
\begin{lemma}[Consistency of $\widehat{\mathbf{\Pi}}$]\label{l1}
Let Assumptions \ref{A2}, \ref{D1}--\ref{D4} hold, then $\widehat{\mathbf{\Pi}}\overset{p}{\longrightarrow}\mathbf{\Pi}$.
\end{lemma}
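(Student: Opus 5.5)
The plan is the standard least-squares consistency argument applied to the system regression \eqref{E3}. Substituting $\mathbf{W}\mathbf{S}=\mathbf{W}_0\mathbf{S}\mathbf{\Pi}+\mathbf{U}$ into \eqref{E17}, and using $\mathbf{S}^{\top}\widetilde{\mathbf{W}}\mathbf{S}=\mathbf{S}^{\top}\mathbf{W}_0\mathbf{W}\mathbf{S}$ (symmetry of the undirected $\mathbf{W}_0$ gives $\mathbf{S}^{\top}\mathbf{W}_0^{2}\mathbf{S}=(\mathbf{W}_0\mathbf{S})^{\top}(\mathbf{W}_0\mathbf{S})$), we obtain the decomposition
\[
\widehat{\mathbf{\Pi}}-\mathbf{\Pi}=\bigl(n^{-1}\mathbf{S}^{\top}\mathbf{W}_{0}^{2}\mathbf{S}\bigr)^{-1}\,n^{-1}\mathbf{S}^{\top}\mathbf{W}_{0}\mathbf{U}.
\]
Consistency then follows from two facts: the normalized Gram matrix converges to a nonsingular limit, and the normalized cross product goes to zero in probability. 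The continuous mapping theorem finishes the argument.

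For the denominator, note that $n^{-1}\mathbf{S}^{\top}\mathbf{W}_0^{2}\mathbf{S}=n^{-1}\sum_{i}\bar{\mathbf{s}}_{0;(i)}\bar{\mathbf{s}}_{0;(i)}^{\top}$, where $\bar{\mathbf{s}}_{0;(i)}=\mathbf{S}^{\top}\mathbf{w}_{0;i}$. By Assumption \ref{D4}, this converges to the finite nonsingular $\mathbf{Q}_{\mathbf{S}\mathbf{W}_0\mathbf{S}}$. Positive definiteness in Assumption \ref{A2} ensures the inverse is well defined for large $n$. Matrix inversion is continuous at nonsingular matrices, so the inverse converges to $\mathbf{Q}_{\mathbf{S}\mathbf{W}_0\mathbf{S}}^{-1}$.

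For the numerator, take a typical entry $n^{-1}\sum_i \bar{s}_{0;l(i)}u_{j,i}$. Its conditional mean is zero because $E_{\mathbf{W}_0\mathbf{y},\mathbf{W}_0,\mathbf{X}}[\mathbf{U}]=\mathbf{O}$ (Assumption \ref{A2}) and $\bar{s}_{0;l(i)}$ is measurable with respect to that conditioning set. The plan is to show its second moment is $O(n^{-1})$ and then apply Chebyshev's inequality.
\begin{itemize}
\item Assumption \ref{D2} implies $\mathbf{y}$, and hence $\mathbf{S}$, is uniformly bounded.
\item Together with the bounded row sums of Assumption \ref{D1}, this makes $|\bar{s}_{0;l(i)}|\le C$ uniformly.
\item The $u_{j,i}$ are independent across $i$ with bounded variances averaging to a finite limit (Assumption \ref{D3}). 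So if the cross terms vanish, the variance is at most $C^{2}n^{-2}\sum_i\sigma^{2}_{u_{j,i}}=O(n^{-1})$.
\end{itemize}

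The main obstacle is that the weights $\bar{s}_{0;l(i)}$ involve $\mathbf{W}_0\mathbf{y}$, and $\mathbf{y}$ itself depends on $\mathbf{U}$ through $\mathbf{e}=\mathbf{U}\boldsymbol{\theta}+\mathbf{v}$ in \eqref{E6}. This dependence means the cross terms $E[\bar{s}_{0;l(i)}\bar{s}_{0;l(m)}u_{j,i}u_{j,m}]$ for $i\neq m$ need not vanish trivially.

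I would first handle this by iterating expectations on the $\sigma$-field generated by $(\mathbf{W}_0\mathbf{y},\mathbf{W}_0,\mathbf{X})$. Under Assumption \ref{A2} the weights are then fixed, and conditional mean-zero plus cross-sectional independence of the $u_{j,i}$ kills the off-diagonal terms. If that conditioning is not strong enough to make the $u_{j,i}$ conditionally uncorrelated, the fallback is to substitute the reduced form \eqref{E7}. This writes $\mathbf{W}_0\mathbf{y}$ as an exogenous part, which is bounded and measurable in $(\mathbf{X},\mathbf{W}_0)$, plus $\mathbf{W}_0[\mathbf{I}-(\boldsymbol{\pi}_1^{\top}\boldsymbol{\theta})\mathbf{W}_0]^{-1}\mathbf{e}$. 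The resulting quadratic forms $n^{-1}\mathbf{e}^{\top}\mathbf{A}_n\mathbf{u}_j$ are then treated with standard Kelejian--Prucha arguments, which apply because the row and column sums of $\mathbf{A}_n$ are uniformly bounded by Assumptions \ref{D1} and \ref{A2}. Their variances are $O(n^{-1})$ and their means are $O(n^{-1})\,\mathrm{tr}(\mathbf{A}_n)$ divided appropriately. Combining the two limits yields $\widehat{\mathbf{\Pi}}\overset{p}{\longrightarrow}\mathbf{\Pi}$.
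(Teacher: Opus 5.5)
Your skeleton is the paper's. You use the same decomposition $\widehat{\mathbf{\Pi}}-\mathbf{\Pi}=(n^{-1}\mathbf{S}^{\top}\mathbf{W}_0^{2}\mathbf{S})^{-1}n^{-1}\mathbf{S}^{\top}\mathbf{W}_0\mathbf{U}$, Assumption~\ref{D4} plus continuous mapping for the Gram matrix, and mean zero, an $O(n^{-1})$ variance and Chebyshev for the cross product. Your mean-zero argument, via measurability of $\mathbf{W}_0\mathbf{S}$ with respect to $(\mathbf{W}_0\mathbf{y},\mathbf{W}_0,\mathbf{X})$, is cleaner than the paper's appeal to $E_{\mathbf{X},\mathbf{W}_0}[\mathbf{U}]=\mathbf{O}$. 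For the entries involving $\mathbf{W}_0\mathbf{X}$ the argument is complete.

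For the $\mathbf{W}_0\mathbf{y}$ entries, the paper simply writes $\mathrm{var}_{\mathbf{X},\mathbf{W}_0}(n^{-1}\bar{\mathbf{y}}_0^{\top}\mathbf{u}_1)\le C n^{-2}\sum_i\sigma^2_{u_{1;i}}$ as if $\bar{\mathbf{y}}_0$ were nonrandom, which is exactly the step you flag. Your diagnosis is right, but neither remedy closes the gap as stated. Conditioning on $\mathcal{G}=\sigma(\mathbf{W}_0\mathbf{y},\mathbf{W}_0,\mathbf{X})$ gives $E[u_{j,i}\mid\mathcal{G}]=0$ from Assumption~\ref{A2}. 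The cross terms, however, require $E[u_{j,i}u_{j,m}\mid\mathcal{G}]=0$ for $i\neq m$. The independence in Assumption~\ref{D3} holds only given $(\mathbf{X},\mathbf{W}_0)$. Conditioning additionally on $\mathbf{W}_0\mathbf{y}$, which through $\mathbf{e}$ depends on every $u_{j,i}$, generally destroys it, just as conditioning independent variables on their sum does.

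In the reduced-form route, write $\mathbf{W}_0\mathbf{y}=\mathbf{b}+\mathbf{A}_n\mathbf{e}$, where $\mathbf{b}$ is measurable in $(\mathbf{X},\mathbf{W}_0)$, $\mathbf{A}_n=\mathbf{W}_0(\mathbf{I}-c\mathbf{W}_0)^{-1}$ and $c=\boldsymbol{\pi}_1^{\top}\boldsymbol{\theta}$. The Kelejian--Prucha bound controls only the \emph{fluctuation} of $n^{-1}\mathbf{e}^{\top}\mathbf{A}_n^{\top}\mathbf{u}_j$. Its mean is $n^{-1}\sum_i(\mathbf{A}_n)_{ii}E[e_iu_{j,i}]$. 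Since $(\mathbf{A}_n)_{ii}=c(\mathbf{W}_0^2)_{ii}+c^2(\mathbf{W}_0^3)_{ii}+\cdots$ and $E[e_iu_{1,i}]$ contains $\beta\sigma^2_{u_{1;i}}$, this mean is $O(1)$ and generically nonzero. Your ``$O(n^{-1})\,\mathrm{tr}(\mathbf{A}_n)$'' does not vanish, because $\mathrm{tr}(\mathbf{A}_n)$ is of order $n$.

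The mean is zero only because Assumption~\ref{A2} forces it. $E[(\mathbf{W}_0\mathbf{y})^{\top}\mathbf{u}_j]=0$ and $E[\mathbf{b}^{\top}\mathbf{u}_j]=0$ (Assumption~\ref{D3}) together give $E[\mathbf{e}^{\top}\mathbf{A}_n^{\top}\mathbf{u}_j]=0$, and you must invoke this explicitly. It also shows that Assumption~\ref{A2} is doing all the work: without it, $\mathbf{W}_0\mathbf{y}$ is an endogenous first-stage regressor and $\widehat{\mathbf{\Pi}}$ is inconsistent.

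The $O(n^{-1})$ variance needs two further conditions:
\begin{itemize}
\item $(\mathbf{U}_{i\cdot},v_i)$ independent across $i$, i.e.\ Assumptions~\ref{D6}--\ref{D7}, which are not among the lemma's hypotheses.
\item Uniformly bounded row and column sums of $(\mathbf{I}-c\mathbf{W}_0)^{-1}$. The one-sided condition $c<1/\lambda_{\max}$ in Assumption~\ref{A2} does not give this; you need something like $\sup_n|c|\,\|\mathbf{W}_0\|<1$ in both norms, which the paper tacitly assumes as $|\beta^{\ast}|<1$ in Lemma~\ref{l4}.
\end{itemize}
With these three additions (the explicit use of Assumption~\ref{A2} for the mean, and the two conditions above), your fallback yields mean zero and variance $O(n^{-1})$, and Chebyshev finishes. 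That is a more complete argument than the paper's.
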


\begin{proof}[Proof of Lemma \ref{l1}]
From equation \eqref{E17} one has that the matrix of parameters $\widehat{\mathbf{\Pi}}$ is given by $\widehat{\mathbf{\Pi}}=(\widehat{\boldsymbol{\pi}}_1,\widehat{\boldsymbol{\pi}}_2,\ldots,\widehat{\boldsymbol{\pi}}_{k+1})^{\top}$. Without loss of generality, $\hat{\boldsymbol{\pi}}_1$ is analyzed here. The estimated coefficient is given by

\begin{equation}
\label{CP1}
\widehat{\boldsymbol{\pi}}_1^{\top}=(\mathbf{S}^{\top}\mathbf{W}_{0}^{2}\mathbf{S})^{-1}(\mathbf{S}^{\top}\mathbf{W}_{0}\mathbf{W}\mathbf{y}).
\end{equation}

\noindent It follows from equation \eqref{E4} that $\mathbf{W}\mathbf{y}=\mathbf{W}_{0}^{2}\mathbf{S}\boldsymbol{\pi}_{1}^{\top}+\mathbf{u}_{1}$, where $\mathbf{U}=\left[\mathbf{u}_{1},\dots,\mathbf{u}_{k+1}\right]$ and each $\mathbf{u}_{j}$ is a $(n \times 1)$--vector. Therefore, equation \eqref{CP1} can be written as:

\begin{equation}
\label{CP2}
\widehat{\boldsymbol{\pi}}_{1}^{\top}=\boldsymbol{\pi}_{1}^{\top}+(n^{-1}\mathbf{S}^{\top}\mathbf{W}_{0}^{2}\mathbf{S})^{-1}(n^{-1}\mathbf{S}^{\top}\mathbf{W}_{0}\mathbf{u}_{1}).
\end{equation}

\noindent Note that $\mathbf{w}_{0;i}\mathbf{x}_{j}$ represents the average of characteristic $j$ for individual $i$'s connections. To facilitate notation, let $\bar{x}_{0;j(i)}\equiv\mathbf{w}_{0;i}\mathbf{x}_{j}$. Then, for all $k$ characteristics and outcome variables, for individual $i$, we have $\bar{\mathbf{s}}_{0;(i)}=\left[\bar{y}_{0;(i)},\bar{x}_{0;1(i)}, \ldots, \bar{x}_{0;k(i)}\right]^{\top}$. With this new notation, note that $\sum_{i=1}^{n}\bar{\mathbf{s}}_{0;(i)}\bar{\mathbf{s}}_{0;(i)}^{\top}=\mathbf{S}^{\top}\mathbf{W}_{0}^{2}\mathbf{S}$. Therefore, by \ref{D4} and the continuous mapping theorem, $\left(n^{-1}\mathbf{S}^{\top}\mathbf{W}_{0}^{2}\mathbf{S}\right)^{-1}$ converges to $\mathbf{Q}_{\mathbf{S}\mathbf{W}_{0}\mathbf{S}}^{-1}$ which is finite. To analyze the second part of equation \eqref{CP2}, note that $\mathbf{S}^{\top}\mathbf{W}_{0}\mathbf{u}_{1}$ can be written as a vector given by the odd products between the individual averages of each variable in $\mathbf{S}$ and the vector of errors $\mathbf{u}_{1}$ -- recall that this is true for all vectors of errors $\mathbf{u}_{j}$, $j=1,\dots,k+1$ -- That is:

\begin{equation}
\label{CP3}
    \mathbf{S}^{\top}\mathbf{W}_{0}\mathbf{u}_{1}=\left[\bar{\mathbf{y}}^{\top}_{0}\mathbf{u}_{1},\bar{\mathbf{x}}_{0;1}^{\top}\mathbf{u}_{1},\dots,\bar{\mathbf{x}}_{0;k}^{\top}\mathbf{u}_{1}\right]^{\top},
\end{equation}

\noindent where $\bar{\mathbf{y}}_{0}^{\top}=\left[\bar{y}_{0;(1)},\ldots,\bar{y}_{0;(n)}\right]$ and $\bar{\mathbf{x}}_{0;j}^{\top}=\left[\bar{x}_{0;j(1)},\dots,\bar{x}_{0;j(n)}\right]$ for $j=1,\ldots,k$. Now we show that the second summand on the right-hand side of \eqref{CP2} converges in probability to zero. From Assumption \ref{A2} it follows that $E_{\mathbf{X},\mathbf{W}_0}[\mathbf{U}]=\mathbf{0}$. Then, by the law of iterated expectations, $E[\mathbf{S}^{\top}\mathbf{W}_{0}\mathbf{U}]=\mathbf{0}$, and the expectation of each element in the right-hand side of \eqref{CP3} equals zero. Therefore, it is sufficient to show that $\forall\varepsilon>0$, as $n\rightarrow\infty$, $\Pr_{\mathbf{X},\mathbf{W}_0}\{|n^{-1}\bar{\mathbf{y}}^{\top}_{0}\mathbf{u}_{1}|>\varepsilon\}\rightarrow0$ and  $\Pr_{\mathbf{X},\mathbf{W}_0}\{|n^{-1}\bar{\mathbf{x}}_{0;j}^{\top}\mathbf{u}_{1}|>\varepsilon\}\rightarrow0$ for $j=1,\dots,k$. Note that given Assumptions \ref{D1} and \ref{D2}, i.e., $\text{max}_{i:1,\dots,n} \hspace{1mm} \bar{y}_{0;(i)}$, and $\text{max}_{i:1,\dots,n} \hspace{1mm} \bar{x}_{0;j(i)}$ are bounded for all $j=1,\dots,k$, and the independence Assumption in \ref{D3}:

\begin{align*}
    & \text{var}_{\mathbf{X},\mathbf{W}_0}\left(n^{-1}\bar{\mathbf{y}}_{0}^{\top}\mathbf{u}_{1}\right)\leq \text{const.}\times\frac{\sum_{i=1}^{n}\sigma^{2}_{u_{1;i}}}{n^2}\text{,}\\
    & \text{var}_{\mathbf{X},\mathbf{W}_0}\left(n^{-1}\bar{\mathbf{x}}_{0;j}^{\top}\mathbf{u}_{j+1}\right)\leq\text{const.}\times\frac{\sum_{i=1}^{n}\sigma^{2}_{u_{j+1;i}}}{n^2}\text{, for }j=1,\ldots,k,
\end{align*}

\noindent where const. defines any generic constant that may differ from one another. Therefore Assumption \ref{D3} implies that $\text{var}_{\mathbf{X},\mathbf{W}_0}\left(n^{-1}\bar{\mathbf{y}}_{0}^{\top}\mathbf{u}_{1}\right)=o(1)$ and $\text{var}_{\mathbf{X},\mathbf{W}_0}(n^{-1}\bar{\mathbf{x}}_{0;j}^{\top}\mathbf{u}_{1})=o(1)$. Then, by the Chebychev's inequality, $\forall\varepsilon>0$:

\begin{align}
    \label{CP4}
    &{\Pr}_{\mathbf{X},\mathbf{W}_0} \{|n^{-1}\bar{\mathbf{y}}_{0}^{\top}\mathbf{u}_{1}|>\varepsilon\}\leq \frac{\text{var}_{\mathbf{X},\mathbf{W}_0}\left(n^{-1}\bar{\mathbf{y}}_{0}^{\top}\mathbf{u}_{1}\right)}{\varepsilon^{2}}\rightarrow 0\text{, as }n \rightarrow \infty\text{,}\\
    \label{CP5}
    &{\Pr}_{\mathbf{X},\mathbf{W}_0}\{|n^{-1}\bar{\mathbf{x}}_{0;j}^{\top}\mathbf{u}_{j+1}|>\varepsilon\}\leq \frac{\text{var}_{\mathbf{X},\mathbf{W}_0}\left(n^{-1}\bar{\mathbf{x}}_{0;j}^{\top}\mathbf{u}_{j+1}\right)}{\varepsilon^{2}}\rightarrow 0\text{, as }n \rightarrow \infty\text{, for } j=1,\ldots,k.
\end{align}

\noindent Equations \eqref{CP4} and \eqref{CP5} imply that $n^{-1}\mathbf{S}^{\top}\mathbf{W}_{0}\mathbf{u}_{j}\overset{p}{\longrightarrow}0$ for all $j=1,\dots, k+1$. Therefore, $\widehat{\boldsymbol{\pi}}_{j}\overset{p}{\longrightarrow}\boldsymbol{\pi}_{j}$ for $j=1,\dots, k+1$, i.e.,  $\widehat{\mathbf{\Pi}}\overset{p}{\longrightarrow}\mathbf{\Pi}$.\hfill
\end{proof}

\begin{lemma}[Consistency of $\widehat{\boldsymbol{\psi}}^{*}_{\text{2SLS}}$]
\label{l2}
Let Assumptions \ref{A1}, \ref{D3}, \ref{D5}, \ref{D6}, and \ref{D7} hold, then $\widehat{\boldsymbol{\psi}}^{*}_{\text{\emph{2SLS}}}\overset{p}{\longrightarrow}\boldsymbol{\psi}^{*}$.
\end{lemma}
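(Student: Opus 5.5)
The argument follows the standard sandwich decomposition for the 2SLS estimator in \eqref{E19}. Substitute the model \eqref{En2}, $\mathbf{y}=\mathbf{D}_{0}\boldsymbol{\psi}^{\ast}+\mathbf{e}$ with $\mathbf{e}=\mathbf{U}\boldsymbol{\theta}+\mathbf{v}$, into \eqref{E19}. This gives
\[
\widehat{\boldsymbol{\psi}}^{\ast}_{\text{2SLS}}-\boldsymbol{\psi}^{\ast}=\Bigl[(n^{-1}\mathbf{D}_{0}^{\top}\mathbf{Z})(n^{-1}\mathbf{Z}^{\top}\mathbf{Z})^{-1}(n^{-1}\mathbf{Z}^{\top}\mathbf{D}_{0})\Bigr]^{-1}(n^{-1}\mathbf{D}_{0}^{\top}\mathbf{Z})(n^{-1}\mathbf{Z}^{\top}\mathbf{Z})^{-1}\,n^{-1}\mathbf{Z}^{\top}\mathbf{e}.
\]
By Assumption \ref{D5}, $n^{-1}\mathbf{Z}^{\top}\mathbf{Z}\to\mathbf{Q}_{\mathbf{ZZ}}$ and $n^{-1}\mathbf{Z}^{\top}\mathbf{D}_{0}\to\mathbf{Q}_{\mathbf{ZD}_{0}}$, and both limits are finite and nonsingular. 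The bracketed matrix therefore converges to $\mathbf{Q}_{\mathbf{ZD}_{0}}^{\top}\mathbf{Q}_{\mathbf{ZZ}}^{-1}\mathbf{Q}_{\mathbf{ZD}_{0}}$, which is nonsingular. The continuous mapping theorem then shows that the whole prefactor is $O_p(1)$. Consistency thus reduces to proving $n^{-1}\mathbf{Z}^{\top}\mathbf{e}\overset{p}{\longrightarrow}\mathbf{0}$.

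To establish that, I split $n^{-1}\mathbf{Z}^{\top}\mathbf{e}=n^{-1}\mathbf{Z}^{\top}\mathbf{U}\boldsymbol{\theta}+n^{-1}\mathbf{Z}^{\top}\mathbf{v}$. I work conditionally on $(\mathbf{X},\mathbf{W}_0)$, as in the proof of Lemma \ref{l1}. Every column of $\mathbf{Z}=[\boldsymbol{\iota},\mathbf{X},\mathbf{W}_0^2\mathbf{X},\mathbf{W}_0\mathbf{X}]$ is a function of $(\mathbf{X},\mathbf{W}_0)$ only, so these columns are nonstochastic under the conditioning. This is the key difference from the first-step regressors in Lemma \ref{l1}, which contained $\mathbf{y}$.

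For a generic column $\mathbf{z}_{(m)}$ of $\mathbf{Z}$ and each $j$:
\begin{enumerate}
\item Assumption \ref{A1} gives $E_{\mathbf{X},\mathbf{W}_0}[n^{-1}\mathbf{z}_{(m)}^{\top}\mathbf{v}]=0$, and Assumption \ref{D3} gives $E_{\mathbf{X},\mathbf{W}_0}[n^{-1}\mathbf{z}_{(m)}^{\top}\mathbf{u}_j]=0$.
\item By the independence across $i$ in Assumptions \ref{D3} and \ref{D6}, the conditional variances are $n^{-2}\sum_i z_{i(m)}^{2}\sigma^{2}_{u_{j,i}}$ and $n^{-2}\sum_i z_{i(m)}^{2}\sigma^{2}_{v_i}$.
\item The entries $z_{i(m)}$ are uniformly bounded, because $\mathbf{X}$ is bounded and $\mathbf{W}_0$ has bounded row sums. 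The same bounds were invoked in Lemma \ref{l1}, and they also follow from finiteness of $\mathbf{Q}_{\mathbf{ZZ}}$.
\item Hence each variance is at most $\text{const.}\times n^{-1}\bigl(n^{-1}\sum_i\sigma^2_{\cdot,i}\bigr)=O(n^{-1})$, using the averaged-variance limits in Assumptions \ref{D3} and \ref{D6}.
\item Chebyshev's inequality gives conditional convergence in probability to zero. Dominated convergence, with probabilities bounded by one, transfers this to unconditional convergence.
\end{enumerate}
Because $\boldsymbol{\theta}$ is a fixed finite vector, $n^{-1}\mathbf{Z}^{\top}\mathbf{U}\boldsymbol{\theta}$ is a finite linear combination of terms that each vanish in probability. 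Assumption \ref{D7} is not needed for the means; it would only be used if one bounded $\text{var}(n^{-1}\mathbf{z}_{(m)}^{\top}\mathbf{e})$ jointly rather than term by term.

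I expect the main obstacle to be the step that justifies uniform boundedness of the instrument entries, specifically the $\mathbf{W}_0^2\mathbf{X}$ columns. Assumption \ref{D1} is not among the listed hypotheses of Lemma \ref{l2}, so I would derive a weaker averaged bound from Assumption \ref{D5}. Finiteness of $\mathbf{Q}_{\mathbf{ZZ}}$ gives $n^{-1}\sum_i z_{i(m)}^2=O(1)$. Combined with $\sup_i\sigma^2_{\cdot,i}<\infty$, which follows from the uniform boundedness of $u_{j,i}$ and $v_i$, this still yields $n^{-2}\sum_i z_{i(m)}^2\sigma^2_{\cdot,i}\le \sup_i\sigma^2_{\cdot,i}\cdot n^{-1}\bigl(n^{-1}\sum_i z_{i(m)}^2\bigr)=O(n^{-1})$. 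That is enough for the Chebyshev step and avoids needing Assumption \ref{D1}.
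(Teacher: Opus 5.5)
Your proposal is correct and follows the paper's proof. It uses the same sandwich decomposition of \eqref{E19}, handled with Assumption \ref{D5} and the continuous mapping theorem, and then shows $n^{-1}\mathbf{Z}^{\top}\mathbf{e}\overset{p}{\longrightarrow}\mathbf{0}$ via conditional mean zero, conditional variances of order $n^{-1}$ from independence across $i$, and Chebyshev's inequality.

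Your two departures are refinements rather than a different route. Treating $n^{-1}\mathbf{Z}^{\top}\mathbf{u}_j$ and $n^{-1}\mathbf{Z}^{\top}\mathbf{v}$ separately tidies the paper's joint variance computation for $\mathbf{e}=\mathbf{U}\boldsymbol{\theta}+\mathbf{v}$, and it makes Assumption \ref{D7} and the within-$i$ covariances unnecessary. Your averaged bound from finiteness of $\mathbf{Q}_{\mathbf{ZZ}}$, combined with bounded errors, correctly replaces the boundedness from Assumptions \ref{D1}--\ref{D2} that the paper uses without listing them. Note, however, that finiteness of $\mathbf{Q}_{\mathbf{ZZ}}$ gives only $n^{-1}\sum_i z_{i(m)}^{2}=O(1)$, not the uniform bound your item 3 attributes to it.
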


\begin{proof}[Proof of Lemma \ref{l2}]
From equation \eqref{E19} note that the 2SLS estimator can be written as:

\begin{equation}
\label{CP6}
\widehat{\boldsymbol{\psi}}^{*}_{\text{2SLS}}=\boldsymbol{\psi}^{*}+\left[n^{-1}\mathbf{D}_{0}^{\top}\mathbf{Z}\left(n^{-1}\mathbf{Z}^{\top}\mathbf{Z}\right)^{-1}n^{-1}\mathbf{Z}^{\top}\mathbf{D}_{0}\right]^{-1} n^{-1}\mathbf{D}_{0}^{\top}\mathbf{Z}\left(n^{-1}\mathbf{Z}^{\top}\mathbf{Z}\right)^{-1}n^{-1}\mathbf{Z}^{\top}\mathbf{e}.
\end{equation}

\noindent Recall that $\mathbf{D}_{0}=[\mathbf{\iota},\mathbf{X},\mathbf{W}_{0}\mathbf{y},\mathbf{W}_{0}\mathbf{X}]$, $\mathbf{Z}=[\mathbf{\iota},\mathbf{X},\mathbf{W}^{2}_{0}\mathbf{X},\mathbf{W}_{0}\mathbf{X}]$, $\boldsymbol{\psi}^{*}=[\alpha,\boldsymbol{\gamma},\boldsymbol{\theta}^{*}]^{\top}$, and $\mathbf{e}=\mathbf{U}\boldsymbol{\theta}+\mathbf{v}$. Note that by Assumption \ref{D5} and the continuous mapping theorem,

\begin{align*}
&\left[n^{-1}\mathbf{D}_{0}^{\top}\mathbf{Z}\left(n^{-1}\mathbf{Z}^{\top}\mathbf{Z}\right)^{-1}n^{-1}\mathbf{Z}^{\top}\mathbf{D}_{0}\right]^{-1} n^{-1}\mathbf{D}_{0}^{\top}\mathbf{Z}\left(n^{-1}\mathbf{Z}^{\top}n^{-1}\mathbf{Z}\right)^{-1}\xrightarrow{p}\\
& \quad\quad\quad \left(\mathbf{Q}^{\top}_{\mathbf{ZD_{0}}}\mathbf{Q}^{-1}_{\mathbf{ZZ}}\mathbf{Q}_{\mathbf{ZD_{0}}}\right)^{-1}\mathbf{Q}^{\top}_{\mathbf{ZD_{0}}}\mathbf{Q}^{-1}_{\mathbf{ZZ}},
\end{align*}

\noindent which is a finite by Assumption \ref{D5}. The following notation is now introduced, let $\mathbf{U}\boldsymbol{\theta}\equiv\Tilde{\mathbf{u}}=[\Tilde{u}_{1},\ldots,\Tilde{u}_{n}]^{\top}$ where each element $\Tilde{u}_{i}=\beta u_{1,i}+\sum_{j=1}^{k}u_{j+1,i}\delta_{j}$ and $u_{j,i}$ is the $(j,i)$th element of the matrix $\mathbf{U}$. Therefore, note that $\mathbf{e}=\Tilde{\mathbf{u}}+\mathbf{v}$, and consequently:

\begin{align*}
    \mathbf{Z}^{\top}\mathbf{e} &= \begin{bmatrix}
           \sum_{i=1}^{n}\Tilde{u}_{i}+v_{i} \\
            \mathbf{X}^{\top}\Tilde{\mathbf{u}}+\mathbf{X}^{\top}\mathbf{v}\\
            (\mathbf{W}_{0}^{2}\mathbf{X})^{\top}\Tilde{\mathbf{u}}+(\mathbf{W}_{0}^{2}\mathbf{X})^{\top}\mathbf{v}\\
           (\mathbf{W}_{0}\mathbf{X})^{\top}\Tilde{\mathbf{u}}+(\mathbf{W}_{0}\mathbf{X})^{\top}\mathbf{v}
         \end{bmatrix}.
  \end{align*}

\noindent Assumptions \ref{A1}, \ref{D3}, and \ref{D6} along with the law of iterated expectations imply that $E[\mathbf{Z}^{\top}\mathbf{e}]=\mathbf{0}$. Again, to show converges it suffices to show that all the components of the vector $n^{-1}\mathbf{Z}^{\top}\mathbf{e}$ converge in probability to their expectation. It is useful to find the variance of each of the components of the vector:

\begin{multline}
    \text{var}_{\mathbf{X},\mathbf{W}_0}\left(n^{-1}\sum_{i=1}^{n}\Tilde{u}_{i}+v_{i}\right)=n^{-2}E_{\mathbf{X},\mathbf{W}_0}\left[\left(\sum_{i=1}^{n}\Tilde{u}_{i}+v_{i}\right)^{2}\right]
    \\=n^{-2}\left\{E_{\mathbf{X},\mathbf{W}_0}\left[\left(\sum_{i=1}^{n}\Tilde{u}_{i}\right)^{2}+\left(\sum_{i=1}^{n}v_{i}\right)^{2}\right]\right\},\nonumber
\end{multline}

\noindent where the second equality above follows from the independence between $\Tilde{u}_{i}$ and $v_{i}$ in Assumption \ref{D7}. Note that by Assumption \ref{D6}, $E_{\mathbf{X},\mathbf{W_0}}\left[(\sum_{i=1}^{n}v_{i})^{2}\right]=\sum_{i=1}^{n}\sigma^{2}_{v_{i}}$. However, because of the existence of potential correlation between the errors $u_{ji}$ for different outcomes $j$, the variance of $\Tilde{u}_{i}$ should reflect the presence of those covariance terms. Note that

\begin{align*}
&E_{\mathbf{X},\mathbf{W}_0}\left[\left(\sum_{i=1}^{n}\Tilde{u}_{i}\right)^{2}\right]=\sum_{i=1}^{n}E_{\mathbf{X},\mathbf{W}_0}[\Tilde{u}_{i}^{2}]+E_{\mathbf{X},\mathbf{W}_0}[\Tilde{u}_{1}\Tilde{u}_{2}]+\dots+E_{\mathbf{X},\mathbf{W}_0}[\Tilde{u}_{n-1}\Tilde{u}_{n}] =\\
& \quad\quad\quad \sum_{i=1}^{n}E_{\mathbf{X},\mathbf{W}_0}[(\Tilde{u}_{i})^{2}],
\end{align*}

\noindent where the second equality follows from the independence in Assumption \ref{D3}. Note that the covariance between errors happens inside each $\Tilde{u}_{i}$, i.e.,

\begin{align*}
&E_{\mathbf{X},\mathbf{W}_0}[\Tilde{u}_{i}^{2}]=E_{\mathbf{X},\mathbf{W}_0}\left[\left(\beta u_{1,i}+\sum_{j=1}^{k}u_{j+1,i}\delta_{j}\right)^{2}\right]\text{,}\\
&=E\Bigg[\beta^{2}E[u_{1,i}]+\sum_{j=1}^{k}E[u^{2}_{j+1,i}]\delta^{2}_{j}+\beta\sum_{j=1}^{k}E[u_{1,i}u_{j+1,i}]\delta_{j}+E[u_{2,i}u_{3,i}]\delta_{1}\delta_{2}+\dots \\
&+E[u_{k,i}u_{k+1,i}]\delta_{k-1}\delta_{k}\Bigg]\text{,}\\
&=\beta\sigma^{2}_{u_{1,i}}+\beta\sum_{j=1}^{k}\delta_{j}\sigma_{u_{(1,j+1);i}}+\sum_{s=1}^{k}\sum_{j=1}^{k}\delta_{s}\delta_{j}\sigma_{u_{(s,j);i}}\text{,}
\end{align*}

\noindent where $\sigma_{u_{(s,j);i}}$ represents the covariance of the errors from equations $s$ and $j$ for individual $i$. Therefore, the total variance can be written as:

\begin{align}
\text{var}_{\mathbf{X},\mathbf{W}_0}\left(n^{-1}\sum_{i=1}^{n}\Tilde{u}_{i}+v_{i}\right)=&\frac{1}{n^{2}}\sum_{i=1}^{n}\sigma^{2}_{v_{i}}\nonumber\\
&+\frac{1}{n^{2}}\sum_{i=1}^{n}\left(\beta\sigma^{2}_{u_{1,i}}+\beta\sum_{j=1}^{k}\delta_{j}\sigma_{u_{(1,j+1);i}}\right)\nonumber \\ 
&+\frac{1}{n^{2}}\sum_{i=1}^{n}\sum_{s=1}^{k}\sum_{j=1}^{k}\delta_{s}\delta_{j}\sigma_{u_{(s,j);i}}.
\label{CP7}
\end{align}

\noindent Firstly, $n^{-1}\sum_{i=1}^{n}\sigma^{2}_{v_{i}}=O(1)$ by Assumption \ref{D6} which implies that $n^{-2}\sum_{i=1}^{n}\sigma^{2}_{v_{i}}=o(1)$. Similarly, given that $k$ is finite and the average of the absolute value of the covariances is uniformly bounded by Assumption \ref{D3}, it follows that the two summand terms in \eqref{CP7} are $o(1)$. Using similar arguments and conditioning on the realization of $\mathbf{X}$ and $\mathbf{W}_{0}$ one also has

\begin{align}
\text{var}_{\mathbf{X},\mathbf{W}_0}\left(n^{-1}\mathbf{X}^{\top}\Tilde{\mathbf{u}}+\mathbf{X}^{\top}\mathbf{v}\right)=&\frac{1}{n^{2}}\sum_{i=1}^{n}x_{j;i}^{2}\sigma^{2}_{v_{i}}+\frac{1}{n^{2}}\sum_{i=1}^{n}x_{j;i}^{2}\left(\beta\sigma^{2}_{u_{1,i}}+\beta\sum_{j=1}^{k}\delta_{j}\sigma_{u_{(1,j+1);i}}\right)\nonumber\\
&+\frac{1}{n^{2}}\sum_{i=1}^{n}\sum_{s=1}^{k}\sum_{j=1}^{k}x_{j;i}^{2}\delta_{s}\delta_{j}\sigma_{u_{(s,j);i}},\label{CP8}
\end{align}

\begin{align}
\text{var}_{\mathbf{X},\mathbf{W}_0}&\left(n^{-1}(\mathbf{W}_{0}^{2}\mathbf{X})^{\top}\Tilde{\mathbf{u}}+(\mathbf{W}_{0}^{2}\mathbf{X})^{\top}\mathbf{v}\right)=\frac{1}{n^{2}}\sum_{i=1}^{n}\bar{x}_{02;j(i)}^{2}\sigma^{2}_{v_{i}}\nonumber\\
&+\frac{1}{n^{2}}\sum_{i=1}^{n}\bar{x}_{02;j(i)}^{2}\left(\beta\sigma^{2}_{u_{1,i}}+\beta\sum_{j=1}^{k}\delta_{j}\sigma_{u_{(1,j+1);i}}\right)\nonumber \\
&+\frac{1}{n^{2}}\sum_{i=1}^{n}\sum_{s=1}^{k}\sum_{j=1}^{k}\bar{x}_{02,j(i)}^{2}\delta_{s}\delta_{j}\sigma_{u_{(s,j);i}},\label{CP9}
\end{align}

\begin{align}
\text{var}_{\mathbf{X},\mathbf{W}_0}\left(n^{-1}(\mathbf{W}_{0}\mathbf{X})^{\top}\Tilde{\mathbf{u}}+(\mathbf{W}_{0}\mathbf{X})^{\top}\mathbf{v}\right)=&\frac{1}{n^{2}}\sum_{i=1}^{n}\bar{x}_{0;j(i)}^{2}\sigma^{2}_{v_{i}}\nonumber \\
&+\frac{1}{n^{2}}\sum_{i=1}^{n}\bar{x}_{0;j(i)}^{2}\left(\beta\sigma^{2}_{u_{1,i}}+\beta\sum_{j=1}^{k}\delta_{j}\sigma_{u_{(1,j+1);i}}\right)\nonumber\\ &+\frac{1}{n^{2}}\sum_{i=1}^{n}\sum_{s=1}^{k}\sum_{j=1}^{k}\bar{x}_{0,j(i)}^{2}\delta_{s}\delta_{s}\sigma_{u_{(s,j);i}},\label{CP10}
\end{align}

\noindent where as mentioned before $\bar{x}_{0;j(i)}=\mathbf{w}_{0;i}\mathbf{x}_{j}$ is the average of individual's $i$ connections in the $j$th characteristic. Similarly, $\bar{x}_{02;j(i)}=\mathbf{w}_{0;i}^{2}\mathbf{x}_{j}$ represents the average in the $j$th characteristic of individual $i$'s indirect connections. The regressors and the predetermined adjacency matrix are both bounded given Assumptions \ref{D1} and \ref{D2}. Therefore, the same analysis performed for the variance in equation $\eqref{CP7}$ applies to $\eqref{CP8}$, $\eqref{CP9}$, and $\eqref{CP10}$. Thus, $\text{var}_{\mathbf{X},\mathbf{W}_0}\left(n^{-1}\mathbf{X}^{\top}\Tilde{\mathbf{u}}+\mathbf{X}^{\top}\mathbf{v}\right)=o(1)$, $\text{var}_{\mathbf{X},\mathbf{W}_0}\left(n^{-1}(\mathbf{W}_{0}^{2}\mathbf{X})^{\top}\Tilde{\mathbf{u}}+(\mathbf{W}_{0}^{2}\mathbf{X})^{\top}\mathbf{v}\right)=o(1)$, and $\text{var}_{\mathbf{X},\mathbf{W}_0}\left(n^{-1}(\mathbf{W}_{0}\mathbf{X})^{\top}\Tilde{\mathbf{u}}+(\mathbf{W}_{0}\mathbf{X})^{\top}\mathbf{v}\right)=o(1)$. This information along with an application of the Chebychev's inequality as in (\ref{CP4}) and (\ref{CP5}) imply that $n^{-1}\mathbf{Z}^{\top}\mathbf{e}\overset{p}{\longrightarrow}0$ and $\widehat{\boldsymbol{\psi}}^{*}_{\text{2SLS}}\overset{p}{\longrightarrow}\boldsymbol{\psi}^{*}$.\hfill
\end{proof}

\noindent Recall from \eqref{En2} that $\boldsymbol{\theta}^{*}=\mathbf{\Pi}\boldsymbol{\theta}$. Let $\widehat{\boldsymbol{\theta}}^{*}_{\text{2SLS}}$ be the sub-vector of $\widehat{\boldsymbol{\psi}}^{*}_{\text{2SLS}}$ containing the estimators of $\beta$ and $\boldsymbol{\delta}$. Therefore, by the plug-in principle, an estimator for $\boldsymbol{\theta}$ is given by $\widehat{\boldsymbol{\theta}}=\widehat{\Pi}^{-1}\widehat{\boldsymbol{\theta}}^{*}_{\text{2SLS}}$. The following Lemma establishes its consistency.

\begin{lemma}[Consistency of $\widehat{\boldsymbol{\theta}}$]
\label{l3}
Let Assumptions \ref{A1}, \ref{A2}, \ref{D1}--\ref{D7} hold, then $\widehat{\boldsymbol{\theta}}\overset{p}{\longrightarrow}\boldsymbol{\theta}$.
\end{lemma}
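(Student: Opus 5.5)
The argument is a plug-in (continuous mapping) argument built on Lemmas \ref{l1} and \ref{l2}. By construction $\widehat{\boldsymbol{\theta}}=\widehat{\mathbf{\Pi}}^{-1}\widehat{\boldsymbol{\theta}}^{*}_{\text{2SLS}}$. Here $\widehat{\boldsymbol{\theta}}^{*}_{\text{2SLS}}$ is the last $k+1$ components of $\widehat{\boldsymbol{\psi}}^{*}_{\text{2SLS}}$, so $\widehat{\boldsymbol{\theta}}^{*}_{\text{2SLS}}=\mathbf{J}\widehat{\boldsymbol{\psi}}^{*}_{\text{2SLS}}$ with $\mathbf{J}=[\mathbf{O}_{k+1},\mathbf{I}_{k+1}]$. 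The population identity $\boldsymbol{\theta}^{*}=\mathbf{\Pi}\boldsymbol{\theta}$ from \eqref{En} gives $\boldsymbol{\theta}=\mathbf{\Pi}^{-1}\boldsymbol{\theta}^{*}$, because rank$(\mathbf{\Pi})=k+1$ under Assumption \ref{A2}. It therefore suffices to show that $\widehat{\mathbf{\Pi}}\overset{p}{\longrightarrow}\mathbf{\Pi}$ and $\widehat{\boldsymbol{\theta}}^{*}_{\text{2SLS}}\overset{p}{\longrightarrow}\boldsymbol{\theta}^{*}$, and that the map $(\mathbf{A},\mathbf{b})\mapsto\mathbf{A}^{-1}\mathbf{b}$ is continuous at $(\mathbf{\Pi},\boldsymbol{\theta}^{*})$.

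The steps are as follows. First, the hypotheses of the present lemma contain those of Lemma \ref{l1} (Assumptions \ref{A2}, \ref{D1}--\ref{D4}), so Lemma \ref{l1} gives $\widehat{\mathbf{\Pi}}\overset{p}{\longrightarrow}\mathbf{\Pi}$. They also contain those of Lemma \ref{l2} (Assumptions \ref{A1}, \ref{D3}, \ref{D5}--\ref{D7}), so Lemma \ref{l2} gives $\widehat{\boldsymbol{\psi}}^{*}_{\text{2SLS}}\overset{p}{\longrightarrow}\boldsymbol{\psi}^{*}$. Since $\mathbf{J}$ is a fixed linear map, $\widehat{\boldsymbol{\theta}}^{*}_{\text{2SLS}}\overset{p}{\longrightarrow}\boldsymbol{\theta}^{*}$. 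The two convergences hold jointly, because marginal convergence in probability of each piece implies joint convergence of the stacked vector $(\mathrm{vec}(\widehat{\mathbf{\Pi}})^{\top},\widehat{\boldsymbol{\theta}}^{*\top}_{\text{2SLS}})^{\top}$.

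Second, matrix inversion is continuous on the open set of nonsingular $(k+1)\times(k+1)$ matrices, since $\mathbf{A}^{-1}=\mathrm{adj}(\mathbf{A})/\det(\mathbf{A})$ is a rational function with nonvanishing denominator there. Because $\det(\mathbf{\Pi})\neq 0$ and $\det(\widehat{\mathbf{\Pi}})\overset{p}{\longrightarrow}\det(\mathbf{\Pi})$, we get $\Pr\{\det(\widehat{\mathbf{\Pi}})=0\}\to 0$. Hence $\widehat{\mathbf{\Pi}}^{-1}$ is well defined with probability approaching one, and $\widehat{\mathbf{\Pi}}^{-1}\overset{p}{\longrightarrow}\mathbf{\Pi}^{-1}$. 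The continuous mapping theorem applied to the product then yields $\widehat{\boldsymbol{\theta}}=\widehat{\mathbf{\Pi}}^{-1}\widehat{\boldsymbol{\theta}}^{*}_{\text{2SLS}}\overset{p}{\longrightarrow}\mathbf{\Pi}^{-1}\boldsymbol{\theta}^{*}=\boldsymbol{\theta}$. As a by-product, $\widehat{\mathbf{\Gamma}}\overset{p}{\longrightarrow}\mathbf{\Gamma}$, since $\widehat{\mathbf{\Gamma}}$ is block-diagonal with blocks $\mathbf{I}_{k+1}$ and $\widehat{\mathbf{\Pi}}$; this is the form in which the lemma is later used.

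The main obstacle is the well-definedness of $\widehat{\mathbf{\Pi}}^{-1}$ in finite samples. I will handle it by setting $\widehat{\boldsymbol{\theta}}$ to an arbitrary value on the event $\{\det(\widehat{\mathbf{\Pi}})=0\}$, whose probability vanishes. A secondary point is that $\mathbf{\Pi}$ may depend on $n$ through the conditional design. In that case I would read Assumption \ref{A2} as requiring $\mathbf{\Pi}$ to be fixed, or at least to have smallest singular value bounded away from zero uniformly in $n$, so that the continuity argument is uniform.
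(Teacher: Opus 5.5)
Your proposal is correct and follows the same route as the paper, whose proof is a single line: combine Lemmas \ref{l1} and \ref{l2} with the continuous mapping theorem. Your extra details make explicit what the paper leaves implicit, namely that the hypotheses nest those of the two earlier lemmas and that $\widehat{\mathbf{\Pi}}$ is invertible with probability approaching one because inversion is continuous at the nonsingular $\mathbf{\Pi}$.
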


\begin{proof}[Proof of Lemma \ref{l3}]
The result follows from Lemmas \ref{l1}, \ref{l2}, and the continuous mapping theorem.\hfill
\end{proof}

\begin{lemma}
\label{l4}
Define $\widehat{\widetilde{\mathbf{Z}}}{}^{\ast}$ as in equation \eqref{EW0y} and $\mathbf{e}^{*}=\mathbf{M}_{\mathbf{W}_{0}}\mathbf{U}\boldsymbol{\theta}+\mathbf{v}$. Let Assumptions \ref{D1}--\ref{D6} hold; then $n^{-1/2}\widehat{\widetilde{\mathbf{Z}}}{}^{\ast\top}\mathbf{e}^{\ast}=n^{-1/2}\widetilde{\mathbf{Z}}^{\ast\top}\mathbf{e}^{\ast}+o_{p}(1)$.
\end{lemma}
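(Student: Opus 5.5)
The plan is to write the difference as
\[
n^{-1/2}(\widehat{\widetilde{\mathbf{Z}}}{}^{\ast}-\widetilde{\mathbf{Z}}^{\ast})^{\top}\mathbf{e}^{\ast}=n^{-1/2}\widehat{\mathbf{\Gamma}}^{\top}(\widehat{\mathbf{Z}}^{\ast}-\mathbf{Z}^{\ast})^{\top}\mathbf{e}^{\ast}+n^{-1/2}(\widehat{\mathbf{\Gamma}}-\mathbf{\Gamma})^{\top}\mathbf{Z}^{\ast\top}\mathbf{e}^{\ast},
\]
using $\widehat{\widetilde{\mathbf{Z}}}{}^{\ast}=\widehat{\mathbf{Z}}^{\ast}\widehat{\mathbf{\Gamma}}$ and $\widetilde{\mathbf{Z}}^{\ast}=\mathbf{Z}^{\ast}\mathbf{\Gamma}$, and to show that each term is $o_p(1)$.

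For the second term, $\widehat{\mathbf{\Gamma}}-\mathbf{\Gamma}=o_p(1)$ follows from Lemma \ref{l1}. It then suffices to show $n^{-1/2}\mathbf{Z}^{\ast\top}\mathbf{e}^{\ast}=O_p(1)$. All columns of $\mathbf{Z}^{\ast}$ are functions of $(\mathbf{X},\mathbf{W}_0)$: the conditional mean column is $\mathbf{W}_0[\mathbf{I}-(\boldsymbol{\pi}_1^\top\boldsymbol{\theta})\mathbf{W}_0]^{-1}\{\cdots\}$, which is uniformly bounded by Assumptions \ref{D1}, \ref{D2} and the spectral condition in Assumption \ref{A2}. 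We decompose $\mathbf{e}^{\ast}=\mathbf{U}\boldsymbol{\theta}+\mathbf{v}-\mathbf{W}_0\mathbf{S}(\mathbf{S}^\top\mathbf{W}_0^2\mathbf{S})^{-1}\mathbf{S}^\top\mathbf{W}_0\mathbf{U}\boldsymbol{\theta}$. The part $n^{-1/2}\mathbf{Z}^{\ast\top}(\mathbf{U}\boldsymbol{\theta}+\mathbf{v})$ has conditional mean zero and conditional variance bounded by a constant times $n^{-1}\sum_i(\sigma^2_{v_i}+\sigma^2_{\tilde u_i})$. This is $O(1)$ by Assumptions \ref{D3} and \ref{D6}, so Chebyshev gives $O_p(1)$. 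The projection part equals $(n^{-1}\mathbf{Z}^{\ast\top}\mathbf{W}_0\mathbf{S})(n^{-1}\mathbf{S}^\top\mathbf{W}_0^2\mathbf{S})^{-1}(n^{-1/2}\mathbf{S}^\top\mathbf{W}_0\mathbf{U}\boldsymbol{\theta})$. The first factor is $O_p(1)$ by boundedness, the middle factor converges to $\mathbf{Q}_{\mathbf{SW}_0\mathbf{S}}^{-1}$ by Assumption \ref{D4}, and the last factor is $O_p(1)$ by the same variance computation as in the proof of Lemma \ref{l1}, now scaled by $n^{-1/2}$ instead of $n^{-1}$.

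For the first term, only the third block of $\widehat{\mathbf{Z}}^{\ast}-\mathbf{Z}^{\ast}$ is nonzero, namely $\mathbf{g}(\widehat{\boldsymbol{\psi}},\widehat{\mathbf{\Pi}})-\mathbf{g}(\boldsymbol{\psi},\mathbf{\Pi})$, where $\mathbf{g}$ denotes the vector in \eqref{E20}. This is the main obstacle, because a mere $o_p(1)$ rate for the estimated parameters multiplied by $n^{-1/2}\sum_i(\cdot)e_i^\ast$ is not enough without structure. The approach is to exploit that $\mathbf{g}$ is a smooth function of a finite-dimensional parameter $\boldsymbol{\eta}=(\boldsymbol{\psi},\mathrm{vec}\,\mathbf{\Pi})$. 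A mean value expansion gives $\mathbf{g}(\widehat{\boldsymbol{\eta}})-\mathbf{g}(\boldsymbol{\eta})=\mathbf{G}(\bar{\boldsymbol{\eta}})(\widehat{\boldsymbol{\eta}}-\boldsymbol{\eta})$. Each column of $\mathbf{G}$ is a product of powers of $[\mathbf{I}-a\mathbf{W}_0]^{-1}$, $\mathbf{W}_0$ and $\mathbf{X}$, hence uniformly bounded in row and column sums near the true value by Assumption \ref{D1} and the condition $\boldsymbol{\pi}_1^\top\boldsymbol{\theta}<1/\lambda_{\max}$. Then
\[
n^{-1/2}(\mathbf{g}(\widehat{\boldsymbol{\eta}})-\mathbf{g}(\boldsymbol{\eta}))^\top\mathbf{e}^{\ast}=(\widehat{\boldsymbol{\eta}}-\boldsymbol{\eta})^\top\, n^{-1/2}\mathbf{G}(\bar{\boldsymbol{\eta}})^\top\mathbf{e}^{\ast}.
\]
Here $\widehat{\boldsymbol{\eta}}-\boldsymbol{\eta}=o_p(1)$ by Lemmas \ref{l1}--\ref{l3}. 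It remains to show $n^{-1/2}\mathbf{G}(\bar{\boldsymbol{\eta}})^\top\mathbf{e}^{\ast}=O_p(1)$. At the true $\boldsymbol{\eta}$ this holds by the argument of the previous paragraph, since $\mathbf{G}(\boldsymbol{\eta})$ is $(\mathbf{X},\mathbf{W}_0)$-measurable and bounded. To pass to $\bar{\boldsymbol{\eta}}$, I would expand once more, or note that $\mathbf{G}$ is a finite sum of terms of the form $c(\boldsymbol{\eta})\mathbf{A}_r\mathbf{X}$ with scalar continuous $c$ plus a convergent power series in $\mathbf{W}_0$. Truncating the series at $R$ terms leaves a remainder whose norm is uniformly $O(\rho^R)$ with $\rho<1$. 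Each fixed matrix term satisfies $n^{-1/2}(\mathbf{A}_r\mathbf{X})^\top\mathbf{e}^\ast=O_p(1)$, and its continuous scalar coefficient evaluated at $\bar{\boldsymbol{\eta}}$ is $O_p(1)$. Letting $R$ grow slowly then gives the stochastic equicontinuity needed.

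Finally, since $\widehat{\mathbf{\Gamma}}=O_p(1)$, the first term is $O_p(1)\cdot o_p(1)\cdot O_p(1)=o_p(1)$. Combining both terms yields the claim.
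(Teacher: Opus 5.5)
Your proposal is correct in substance and has the same skeleton as the paper's proof. You split the difference into a $\widehat{\mathbf{\Gamma}}-\mathbf{\Gamma}$ piece and a piece coming from the estimated column $E_{\mathbf{X},\mathbf{W}_0}[\mathbf{W}_0\mathbf{y}]$. You then control the latter as an $o_p(1)$ parameter error times an $O_p(1)$ term, using the Neumann series in $\mathbf{W}_0$.

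The differences lie in how each piece is handled:
\begin{itemize}
\item \textbf{Decomposition.} The paper writes $\mathbf{\Gamma}^{\top}(\widehat{\mathbf{Z}}^{\ast}-\mathbf{Z}^{\ast})^{\top}\mathbf{e}^{\ast}+(\widehat{\mathbf{\Gamma}}-\mathbf{\Gamma})^{\top}\widehat{\mathbf{Z}}^{\ast\top}\mathbf{e}^{\ast}$, so it must separately show $n^{-1/2}\widehat{\mathbf{Z}}^{\ast\top}\mathbf{e}^{\ast}=O_p(1)$. Your placement of the hats leaves only the infeasible $\mathbf{Z}^{\ast}$ in the second term, which skips that step.
\item \textbf{The $O_p(1)$ bounds.} The paper gets them by asserting that the $e_i^{\ast}$ are independent across $i$ and invoking a Lyapunov CLT. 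That is not literally right, because $\mathbf{M}_{\mathbf{W}_0}$ couples all observations and depends on $\mathbf{y}$. Your split of $\mathbf{e}^{\ast}$ into $\mathbf{U}\boldsymbol{\theta}+\mathbf{v}$ (handled by Chebyshev) plus the projection term handles this properly. The projection term is factored through $(n^{-1}\mathbf{S}^{\top}\mathbf{W}_0^{2}\mathbf{S})^{-1}n^{-1/2}\mathbf{S}^{\top}\mathbf{W}_0\mathbf{U}\boldsymbol{\theta}$. This route needs only boundedness in probability, not a CLT.
\item \textbf{Parametrization and uniformity.} The paper parametrizes the plug-in by the Step-2 coefficients $(\boldsymbol{\eta},\beta^{\ast})$ and argues term by term in $r$, without addressing uniformity over the infinite series; you confront that uniformity. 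The two parametrizations coincide. Since $\widehat{\mathbf{\Pi}}\widehat{\boldsymbol{\theta}}=\widehat{\boldsymbol{\theta}}^{\ast}_{\text{2SLS}}$ identically, each $\widehat{\boldsymbol{\pi}}_j^{\top}\widehat{\boldsymbol{\theta}}$ is a Step-2 estimate, so Lemma \ref{l2} alone controls the parameter error.
\end{itemize}

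Three technical points need tidying.
\begin{itemize}
\item \textbf{Mean value expansion.} An expansion with a single $\bar{\boldsymbol{\eta}}$ does not exist for the $\mathbb{R}^n$-valued $\mathbf{g}$. Use the integral form $\int_0^1\mathbf{G}(\boldsymbol{\eta}+t(\widehat{\boldsymbol{\eta}}-\boldsymbol{\eta}))\,dt$, which keeps your scalar-coefficient structure, or simply difference the series coefficients directly.
\item \textbf{Truncation.} It is unnecessary, and as phrased it points the wrong way. A deterministic bound on the tail only gives $O_p(\sqrt{n}\rho^{R})$ after multiplying by $n^{-1/2}\mathbf{e}^{\ast}$, so $R$ would have to grow at least like $\log n$, not slowly. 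It is simpler to bound everything at once:
\begin{itemize}
\item the coefficient differences are dominated by $C(r+1)\rho^{r}\|\widehat{\boldsymbol{\eta}}-\boldsymbol{\eta}\|$ on a neighbourhood of the truth;
\item $\sum_r (r+1)\rho^{r}\,|n^{-1/2}(\mathbf{W}_0^{r}\mathbf{z})^{\top}\mathbf{e}^{\ast}|=O_p(1)$, using an $L^2$ bound uniform in $r$ for the $\mathbf{U}\boldsymbol{\theta}+\mathbf{v}$ part and a common $r$-free factor for the projection part.
\end{itemize}
Together these give that the whole series is $o_p(1)$.
\item \textbf{Row-sum bounds.} Uniform row-sum bounds on $(\mathbf{I}-a\mathbf{W}_0)^{-1}$ do not follow from the one-sided spectral condition in Assumption \ref{A2}. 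You need a two-sided norm condition such as $|a|\sup_n\|\mathbf{W}_0\|_{\infty}<1$. This is the role of the extra assumption $|\beta^{\ast}|<1$ in the paper's proof.
\end{itemize}
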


\begin{proof}
Starting with the definition of $\widehat{\widetilde{\mathbf{Z}}}{}^{\ast}$ from equation \eqref{EW0y}, note that $(\widehat{\widetilde{\mathbf{Z}}}{}^{\ast}-\widetilde{\mathbf{Z}}^{\ast})^{\top}\mathbf{e}^{\ast}=\mathbf{\Gamma}^{\top}(\widehat{\mathbf{Z}}^{\ast}-\mathbf{Z}^{\ast})^{\top}\mathbf{e}^{\ast}+(\widehat{\mathbf{\Gamma}}-\mathbf{\Gamma})^{\top}\widehat{\mathbf{Z}}^{\ast\top}\mathbf{e}^{\ast}$. Therefore, it is sufficient to show that:

\begin{align}
&n^{-1/2}(\widehat{\mathbf{Z}}^{\ast}-\mathbf{Z}^{\ast})^{\top}\mathbf{e}^{*}=o_{p}(1)\text{, and}\label{C2E1}\\
& n^{-1/2}\widehat{\mathbf{Z}}^{*\top}\mathbf{e}^{*}=O_{p}(1).\label{C2E2}
\end{align}

\noindent First, to show equation \eqref{C2E1}, note that $\widehat{\mathbf{Z}}^{\ast}-\mathbf{Z}^{\ast}=[\mathbf{0},\mathbf{O},\Delta E_{\mathbf{X},\mathbf{W_0}}[\mathbf{W}_0\mathbf{y}],\mathbf{O}]$, where $\mathbf{0}$ is a $n\times 1$ vector of zeros, $\mathbf{O}$ is a $n \times k$ matrix of zeros, and $\Delta E_{\mathbf{X},\mathbf{W}_0}[\mathbf{W}_0\mathbf{y}]=E_{\mathbf{X},\mathbf{W}_0}[\mathbf{W}_0\mathbf{y}](\widehat{\boldsymbol{\psi}},\widehat{\mathbf{\Pi}})-E_{\mathbf{X},\mathbf{W}_0}[\mathbf{W}_0\mathbf{y}](\boldsymbol{\psi},\mathbf{\Pi})$. From equation $\eqref{En}$ in the main text, one has $\mathbf{y}(\mathbf{I}-\beta^{*}\mathbf{W}_{0})=\alpha\boldsymbol{\iota}+\mathbf{W}_{0}\mathbf{X}\boldsymbol{\delta}^{*}+\mathbf{X}\boldsymbol{\gamma}+\mathbf{e}\equiv\widetilde{\mathbf{X}}\boldsymbol{\eta}+\mathbf{e}$, where $\widetilde{\mathbf{X}}=[\boldsymbol{\iota},\mathbf{W}_{0}\mathbf{X},\mathbf{X}]$ and $\boldsymbol{\eta}=[\alpha,\boldsymbol{\delta}^{\ast\top},\boldsymbol{\gamma}^{\top}]^{\top}$. Under the assumption that $|\beta^{\ast}|<1$, Assumptions \ref{D3}, and \ref{D6}, the expectation of $\mathbf{W}_{0}\mathbf{y}$ can be written as: $E_{\mathbf{X},\mathbf{W}_0}[\mathbf{W}_0\mathbf{y}](\boldsymbol{\psi},\mathbf{\Pi})= \mathbf{W}_0(\mathbf{I}-\beta^{\ast}\mathbf{W}_{0})^{-1}\widetilde{\mathbf{X}}\boldsymbol{\eta}=\mathbf{W}_0(\sum_{r=0}^{\infty}\beta^{\ast r}\mathbf{W}_{0}^{r})\widetilde{\mathbf{X}}\boldsymbol{\eta}=\mathbf{W}_0\widetilde{\mathbf{X}}\boldsymbol{\eta}+\mathbf{W}^{2}_0\widetilde{\mathbf{X}}\boldsymbol{\eta}\beta^{*}+\mathbf{W}^{3}_0\widetilde{\mathbf{X}}\boldsymbol{\eta}\beta^{\ast 2}+\ldots$ This last equation implies that $\Delta E_{\mathbf{X},\mathbf{W}_0}[\mathbf{W}_0\mathbf{y}]^{\top}=(\widehat{\boldsymbol{\eta}}_{\text{2SLS}}-\boldsymbol{\eta})^{\top}(\mathbf{W}_0\widetilde{\mathbf{X}})^{\top}+(\widehat{\boldsymbol{\eta}}_{\text{2SLS}}\widehat{\beta}_{\text{2SLS}}^{*}-\boldsymbol{\eta}\beta^{\ast})^{\top}(\mathbf{W}^{2}_0\widetilde{\mathbf{X}})^{\top}+\ldots$, where $\widehat{\boldsymbol{\eta}}_{\text{2SLS}}$ and $\widehat{\beta}_{\text{2SLS}}^{\ast}$ are elements of $\widehat{\boldsymbol{\psi}}^{\ast}_{\text{\text{2SLS}}}$. Note that except for $\Delta E_{\mathbf{X},\mathbf{W}_0}[\mathbf{W}_0\mathbf{y}]^{\top}\mathbf{e}^{*}$, the other elements in $(\widehat{\mathbf{Z}}^{\ast}-\mathbf{Z}^{\ast})^{\top}\mathbf{e}^{*}=0$, $\forall n$ and therefore $o_{p}(1)$. It is only left to show that $\Delta E_{\mathbf{X},\mathbf{W}_0}[\mathbf{W}_0\mathbf{y}]^{\top}\mathbf{e}^{*}=o_{p}(1)$. From the definition before,

\begin{equation}
\Delta E[\mathbf{W}_0\mathbf{y}]^{\top}\mathbf{e}^{\ast}=(\widehat{\boldsymbol{\eta}}_{\text{2SLS}}-\boldsymbol{\eta})^{\top}(\mathbf{W}_0\widetilde{\mathbf{X}})^{\top}\mathbf{e}^{\ast}+(\widehat{\boldsymbol{\eta}}_{\text{2SLS}}\widehat{\beta}_{\text{2SLS}}^{\ast}-\boldsymbol{\eta}\beta^{\ast})^{\top}(\mathbf{W}^{2}_0\widetilde{\mathbf{X}})^{\top}\mathbf{e}^{\ast}+\ldots
\end{equation}

\noindent Lemma \ref{l2} along with the continuous mapping theorem imply  $(\widehat{\boldsymbol{\eta}}_{\text{2SLS}}\widehat{\beta}_{\text{2SLS}}^{\ast r}-\boldsymbol{\eta}\beta^{\ast r})=o_{p}(1)$, $\forall r\ge 1$. Now, it is required to bound  the elements of $(\mathbf{W}_{0}^{r}\widetilde{\mathbf{X}})^{\top}\mathbf{e}^{*}$, $\forall r\ge 1$. Note that each $\mathbf{W}_{0}^{r}$ is equivalent to calculating the average of the indirect connections of individual $i$ at distance $r$. Moreover, Assumption \ref{A1} guaranties that $E_{\mathbf{X},\mathbf{W}_0}[\mathbf{v}]=\mathbf{0}$, and by Assumption \ref{A2}, one also has $E_{\mathbf{X},\mathbf{W}_0}[\mathbf{U}]=\mathbf{O}$, implying $E_{\mathbf{X},\mathbf{W}_0}[\mathbf{e}^{*}]=\mathbf{0}$. Therefore, $E[(\mathbf{W}_{0}^{r}\widetilde{\mathbf{X}})^{\top}\mathbf{e}^{*}]=\mathbf{0}$, $r\ge 1$.

\noindent Assumptions \ref{D3} and \ref{D6} imply that the innovations $e_{i}^{\ast}$ are independent across individuals $i$ with $E_{\mathbf{X},\mathbf{W_0}}[{e}_{i}^{\ast}]=0$, but are not identically distributed. Now, for some $\xi>0$ we have that $E|\beta u_{1,i}+\dots+\delta_k u_{k+1,i}|^{2+\xi}\le \beta E|u_{1,i}|^{2+\xi}+\dots+\delta_k E|u_{k+1,i}|^{2+\xi}$, $\forall i=1,\ldots,n$. Similarly, $E|u_{1,i}+\dots+\delta_{k}u_{k+1,i}+v_{i}|^{2+\xi}\leq E|u_{1,i}|^{2+\xi}+\dots+\delta_{k}E|u_{k+1,i}|^{2+\xi}+E|v_{i}|^{2+\xi} \leq \sup_{n,i} (E|u_{1,i}|^{2+\xi}+\dots+\delta_{k}E|u_{k+1,i}|^{2+\xi}+E|v_{i}|^{2+\xi})<\infty$, $\forall i=1,\ldots,n$, where the last inequality comes from Assumptions \ref{D3} and \ref{D6}. Thus, $\sup_{n,i} E|e^{*}_{i}|^{2+\xi}<\infty$. From the same set of Assumptions, $\liminf_{n\to\infty} n^{-1}\sum_{i=1}^{n}\sigma^{2}_{v_{i}} >0$ and $\liminf_{n\to\infty} n^{-1}\sum_{i=1}^{n}\sigma^{2}_{u_{j,i}} >0$, for $j=1,\dots,k+1$, which implies $\liminf_{n\to\infty} n^{-1}\sum_{i=1}^{n}E_{\mathbf{X},\mathbf{W}_0}[e_{i}^{\ast 2}] >0$. Therefore, Lyapunov's condition holds,, and by the Lindeberg-Feller central limittheorem,m, one has $n^{-1/2}(\mathbf{W}_{0}^{r}\widetilde{\mathbf{X}})^{\top}\mathbf{e}^{*}=O_{p}(1)$, $\forall r\ge 1$, and $n^{-1/2}(\widehat{\mathbf{Z}}^{\ast}-\mathbf{Z}^{\ast})^{\top}\mathbf{e}^{\ast}=o_{p}(1)$.

\noindent Furthermore, Assumptions $\ref{A1}$ and $\ref{A2}$ imply $E_{\mathbf{Z}^{\ast},\mathbf{W}_{0}}[\mathbf{e}^{\ast}]=\mathbf{0}$ and therefore $E[\mathbf{Z}^{\ast\top}\mathbf{e}^{\ast}]=0$. As before, a central limit theorem implies $n^{-1/2}\mathbf{Z}^{\ast\top}\mathbf{e}^{\ast}=O_{p}(1)$. All together, this implies that $n^{-1/2}\widehat{\mathbf{Z}}^{\ast\top}\mathbf{e}^{*}=\mathbf{Z}^{\ast\top}\mathbf{e}^{\ast}+o_{p}(1)$, so $(\widehat{\mathbf{\Gamma}}-\mathbf{\Gamma})n^{-1/2}\widehat{\mathbf{Z}}^{\ast\top}\mathbf{e}^{\ast}=(\widehat{\mathbf{\Gamma}}-\mathbf{\Gamma})[n^{-1/2}\mathbf{Z}^{*\top}\mathbf{e}^{\ast}+o_{p}(1)]$ and therefore by Lemma \ref{l1} one has $n^{-1/2}(\widehat{\widetilde{\mathbf{Z}}}^{\ast}-\widetilde{\mathbf{Z}}^{\ast})^{\top}\mathbf{e}^{\ast}=o_{p}(1)$ completing the proof.\hfill
\end{proof}

\putbib[typ]
\end{bibunit}

\emptythanks
\clearpage
\setcounter{page}{1}

\if1\blind
{
  \title{Regression with Observational Multilayered Network Data\\ -- Supplementary  Material --}
\author{Juan Estrada\thanks{Analysis Group Economic Consulting, Washington, DC, USA. \faEnvelopeO: \href{mailto:juan.estrada@analysisgroup.com}{juan.estrada@analysisgroup.com}.}  \and Kim P. Huynh\thanks{Department of Economics, Indiana University, 100 S Woodlawn, Bloomington, IN 47405, USA. \faEnvelopeO: \href{mailto:kim@huynh.tv}{kim@huynh.tv}. Laboratoire d’\'Economie d’Orl\'eans, Universit\'e d'Orl\'eans, Orl\'eans, France.}\and David T. Jacho-Ch\'{a}vez\thanks{Corresponding Author: Department of Economics, Emory University, Rich Building 306, 1602 Fishburne Dr., Atlanta, GA 30322-2240, USA. \faEnvelopeO: \href{mailto:djachocha@emory.edu}{djachocha@emory.edu}.} \and Leonardo S\'{a}nchez-Arag\'{o}n\thanks{Facultad de Ciencias Sociales y Human\'{i}sticas, Escuela Superior Polit\'{e}cnica del Litoral, ESPOL, Campus Gustavo Galindo Km. 30.5 V\'{i}a Perimetral, P.O. Box 09-01-5863, Guayaquil, Ecuador. \faEnvelopeO: \href{mailto:
lfsanche@espol.edu.ec}{lfsanche@espol.edu.ec}.}}
\maketitle
} \fi

\if0\blind
{
  \bigskip
  \bigskip
  \bigskip
  \title{\bf Regression with Observational Multilayered Network Data\\ -- Supplementary  Material --}
  \maketitle
  \medskip
} \fi

\setcounter{section}{0}

\appendix\renewcommand\thesection{Appendix \Alph{section}}
\renewcommand{\theequation}{\Alph{section}-\arabic{equation}}
\renewcommand{\theenumi}{\alph{enumi}}\renewcommand{\labelenumi}%
{\emph{(\theenumi)}}
\renewcommand{\theenumii}{(\roman{enumii})}\renewcommand{\labelenumii
}{\theenumii}
\renewcommand{\thetheorem}{\Alph{section}.\arabic{theorem}.}\renewcommand
{\theassumption}{\Alph{section}.\arabic{assumption}}

\begin{bibunit}[jpe]

\renewcommand\thesection{Appendix \Alph{section}}
\setcounter{section}{3}
\section{Additional Monte Carlo Results\label{Appendix_D}}
\renewcommand\thesection{\Alph{section}}
\subsection{Alternative Parameterizations}

The following tables report the results of further Monte Carlo exercises conducted to evaluate the finite-sample performance of the three estimators mentioned in Section \ref{mc} of the main manuscript, i.e., OLS, G2SLS, and G3SLS. The performance assessment is implemented systematically across all three designs presented there. Within each design, results are organized around specific variations in key design parameters. Unless otherwise explicitly stated, all remaining parameters of the data-generating process are held fixed at their baseline values. This approach ensures that any observed differences in estimator performance can be cleanly attributed to the parameter variations under consideration.

Table \ref{tab:design2_misclassification_app1} reports results for an alternative parameterization in which the Bernoulli variable $e_{1;i,j}$ in Design 2 is drawn with a success probability $0.5$ instead of $0.75$ (in the main text); that is, $\mathbb{P}(e_{1;i,j}=1)=0.5$. All other parameters of the data-generating process remain fixed at their baseline values.  By lowering the probability that true links in the latent adjacency matrix $\mathbf{W}_{0}^{\ast}$ are correctly retained in the observed network $\mathbf{W}$, this specification increases the degree of link misclassification relative to the baseline design. Table \ref{tab:design2_misclassification_app2} exhibits results for $\mathbb{P}(e_{1;i,j}=1)=0.8$, which decreases the degree of link misclassified relative to the baseline design instead.
    
Table \ref{tab:design3_homophily_app1} reports results for an alternative parameterization in Design 3 in which the homophily cutoff is reduced from the 99\% to the 95\% empirical quantile, that is, $\widehat{F}_{\varepsilon_3^\ast}^{-1}(0.95)$ replaces $\widehat{F}_{\varepsilon_3^\ast}^{-1}(0.99)$ in the link formation rule. All other parameters of the data-generating process remain fixed at their baseline values. By lowering the quantile threshold, a larger set of individuals with extreme realizations of $\varepsilon_{3}^{\ast}$ influences network formation. This modification strengthens the endogenous sorting mechanism embedded in the design and increases the correlation between the network structure and the outcome disturbance. Table \ref{tab:design3_homophily_app2} exhibits results when the homophily cutoff is reduced from 99\% to 97.5\%. 

\begin{landscape}   
    \begin{table}[!htbp]
\centering
\small
\begin{threeparttable}
\caption{Estimator Robustness to Network Link Misclassification (Design 2)}
\label{tab:design2_misclassification_app1}
\begin{tabular}{ccc c cccc c cccc c cccc}
\toprule
 &  &  &  &\multicolumn{4}{c}{Peer effects} &  &\multicolumn{4}{c}{Contextual effects} &  &\multicolumn{4}{c}{Direct effects} \\
\cmidrule(lr){5-8} \cmidrule(lr){10-13} \cmidrule(lr){15-18}
$\tau$ & Estimator & $n$ &  & Bias & SD & RMSE & IQR &  & Bias & SD & RMSE & IQR &  & Bias & SD & RMSE & IQR \\
\midrule
\multirow{9}{*}{0.01} & \multirow{3}{*}{OLS} & 50 &  & -0.245 & 0.129 & 0.277 & 0.202 &  & -0.367 & 0.276 & 0.459 & 0.439 &  & 0.264 & 0.189 & 0.325 & 0.296 \\
 &  & 100 &  & -0.320 & 0.110 & 0.339 & 0.177 &  & -0.565 & 0.181 & 0.593 & 0.282 &  & 0.121 & 0.087 & 0.149 & 0.137 \\
 &  & 200 &  & -0.250 & 0.106 & 0.271 & 0.171 &  & -0.729 & 0.144 & 0.743 & 0.228 &  & 0.039 & 0.044 & 0.059 & 0.070 \\
\noalign{\vskip 0.5em}
 & \multirow{3}{*}{G2SLS} & 50 &  & -0.445 & 0.196 & 0.486 & 0.290 &  & -0.012 & 0.377 & 0.377 & 0.584 &  & 0.382 & 0.224 & 0.443 & 0.340 \\
 &  & 100 &  & -0.548 & 0.170 & 0.574 & 0.260 &  & -0.255 & 0.245 & 0.354 & 0.372 &  & 0.203 & 0.104 & 0.228 & 0.161 \\
 &  & 200 &  & -0.522 & 0.205 & 0.561 & 0.312 &  & -0.412 & 0.239 & 0.477 & 0.370 &  & 0.086 & 0.055 & 0.102 & 0.086 \\
\noalign{\vskip 0.5em}
 & \multirow{3}{*}{G3SLS} & 50 &  & 0.192 & 0.177 & 0.261 & 0.265 &  & 0.790 & 0.695 & 1.052 & 1.047 &  & 0.052 & 0.092 & 0.105 & 0.139 \\
 &  & 100 &  & 0.028 & 0.126 & 0.128 & 0.198 &  & 0.307 & 0.327 & 0.448 & 0.498 &  & 0.023 & 0.059 & 0.063 & 0.092 \\
 &  & 200 &  & -0.015 & 0.159 & 0.160 & 0.248 &  & 0.061 & 0.243 & 0.250 & 0.394 &  & 0.007 & 0.038 & 0.039 & 0.058 \\
\midrule
\multirow{9}{*}{0.05} & \multirow{3}{*}{OLS} & 50 &  & -0.245 & 0.129 & 0.277 & 0.202 &  & -0.367 & 0.276 & 0.459 & 0.439 &  & 0.264 & 0.189 & 0.325 & 0.296 \\
 &  & 100 &  & -0.320 & 0.110 & 0.339 & 0.177 &  & -0.565 & 0.181 & 0.593 & 0.282 &  & 0.121 & 0.087 & 0.149 & 0.137 \\
 &  & 200 &  & -0.250 & 0.106 & 0.271 & 0.171 &  & -0.729 & 0.144 & 0.743 & 0.228 &  & 0.039 & 0.044 & 0.059 & 0.070 \\
\noalign{\vskip 0.5em}
 & \multirow{3}{*}{G2SLS} & 50 &  & -0.445 & 0.196 & 0.486 & 0.290 &  & -0.012 & 0.377 & 0.377 & 0.584 &  & 0.382 & 0.224 & 0.443 & 0.340 \\
 &  & 100 &  & -0.548 & 0.170 & 0.574 & 0.260 &  & -0.255 & 0.245 & 0.354 & 0.372 &  & 0.203 & 0.104 & 0.228 & 0.161 \\
 &  & 200 &  & -0.522 & 0.205 & 0.561 & 0.312 &  & -0.412 & 0.239 & 0.477 & 0.370 &  & 0.086 & 0.055 & 0.102 & 0.086 \\
\noalign{\vskip 0.5em}
 & \multirow{3}{*}{G3SLS} & 50 &  & 0.137 & 0.199 & 0.241 & 0.294 &  & 0.882 & 0.724 & 1.141 & 1.059 &  & 0.084 & 0.114 & 0.142 & 0.175 \\
 &  & 100 &  & -0.018 & 0.142 & 0.143 & 0.217 &  & 0.371 & 0.358 & 0.515 & 0.559 &  & 0.038 & 0.067 & 0.077 & 0.106 \\
 &  & 200 &  & -0.048 & 0.168 & 0.174 & 0.259 &  & 0.101 & 0.254 & 0.273 & 0.401 &  & 0.013 & 0.040 & 0.042 & 0.062 \\
\bottomrule
\end{tabular}
\begin{tablenotes}[para]
\footnotesize
\setstretch{1}
\item \textit{Notes:} This table reports Monte Carlo results under network link misclassification. The parameter $\tau$ controls the intensity of misclassification and is common across individuals. Reported statistics include bias, standard deviation (SD), root mean squared error (RMSE), and inter-quantile range (IQR) for the peer effect ($\beta$=0.7), contextual ($\delta=1$), and direct ($\gamma=1$) Effects.
\end{tablenotes}
\end{threeparttable}
\normalsize
\end{table} 
\end{landscape}

\begin{landscape}   
    \begin{table}[!htbp]
\centering
\small
\begin{threeparttable}
\caption{Estimator Robustness to Network Link Misclassification (Design 2)}
\label{tab:design2_misclassification_app2}
\begin{tabular}{ccc c cccc c cccc c cccc}
\toprule
 &  &  &  &\multicolumn{4}{c}{Peer effects} &  &\multicolumn{4}{c}{Contextual effects} &  &\multicolumn{4}{c}{Direct effects} \\
\cmidrule(lr){5-8} \cmidrule(lr){10-13} \cmidrule(lr){15-18}
$\tau$ & Estimator & $n$ &  & Bias & SD & RMSE & IQR &  & Bias & SD & RMSE & IQR &  & Bias & SD & RMSE & IQR \\
\midrule
\multirow{9}{*}{0.01} & \multirow{3}{*}{OLS} & 50 &  & -0.413 & 0.133 & 0.434 & 0.215 &  & -0.432 & 0.328 & 0.542 & 0.529 &  & 0.436 & 0.225 & 0.490 & 0.343 \\
 &  & 100 &  & -0.552 & 0.081 & 0.558 & 0.125 &  & -0.647 & 0.151 & 0.664 & 0.242 &  & 0.215 & 0.100 & 0.237 & 0.165 \\
 &  & 200 &  & -0.597 & 0.051 & 0.599 & 0.079 &  & -0.785 & 0.084 & 0.790 & 0.137 &  & 0.100 & 0.047 & 0.110 & 0.074 \\
\noalign{\vskip 0.5em}
 & \multirow{3}{*}{G2SLS} & 50 &  & -0.544 & 0.230 & 0.591 & 0.313 &  & -0.191 & 0.498 & 0.533 & 0.721 &  & 0.480 & 0.255 & 0.543 & 0.395 \\
 &  & 100 &  & -0.661 & 0.127 & 0.673 & 0.177 &  & -0.494 & 0.216 & 0.539 & 0.321 &  & 0.243 & 0.114 & 0.268 & 0.183 \\
 &  & 200 &  & -0.689 & 0.079 & 0.694 & 0.116 &  & -0.675 & 0.110 & 0.684 & 0.172 &  & 0.112 & 0.051 & 0.124 & 0.078 \\
\noalign{\vskip 0.5em}
 & \multirow{3}{*}{G3SLS} & 50 &  & 1.043 & 0.633 & 1.220 & 0.914 &  & 3.312 & 2.721 & 4.286 & 3.574 &  & 0.050 & 0.091 & 0.104 & 0.138 \\
 &  & 100 &  & 0.373 & 0.224 & 0.435 & 0.331 &  & 1.347 & 0.848 & 1.591 & 1.303 &  & 0.023 & 0.059 & 0.063 & 0.092 \\
 &  & 200 &  & 0.090 & 0.188 & 0.208 & 0.295 &  & 0.345 & 0.397 & 0.526 & 0.610 &  & 0.007 & 0.038 & 0.039 & 0.059 \\
\midrule
\multirow{9}{*}{0.05} & \multirow{3}{*}{OLS} & 50 &  & -0.413 & 0.133 & 0.434 & 0.215 &  & -0.432 & 0.328 & 0.542 & 0.529 &  & 0.436 & 0.225 & 0.490 & 0.343 \\
 &  & 100 &  & -0.552 & 0.081 & 0.558 & 0.125 &  & -0.647 & 0.151 & 0.664 & 0.242 &  & 0.215 & 0.100 & 0.237 & 0.165 \\
 &  & 200 &  & -0.597 & 0.051 & 0.599 & 0.079 &  & -0.785 & 0.084 & 0.790 & 0.137 &  & 0.100 & 0.047 & 0.110 & 0.074 \\
\noalign{\vskip 0.5em}
 & \multirow{3}{*}{G2SLS} & 50 &  & -0.544 & 0.230 & 0.591 & 0.313 &  & -0.191 & 0.498 & 0.533 & 0.721 &  & 0.480 & 0.255 & 0.543 & 0.395 \\
 &  & 100 &  & -0.661 & 0.127 & 0.673 & 0.177 &  & -0.494 & 0.216 & 0.539 & 0.321 &  & 0.243 & 0.114 & 0.268 & 0.183 \\
 &  & 200 &  & -0.689 & 0.079 & 0.694 & 0.116 &  & -0.675 & 0.110 & 0.684 & 0.172 &  & 0.112 & 0.051 & 0.124 & 0.078 \\
\noalign{\vskip 0.5em}
 & \multirow{3}{*}{G3SLS} & 50 &  & 0.930 & 0.659 & 1.140 & 0.959 &  & 3.534 & 2.791 & 4.503 & 3.641 &  & 0.083 & 0.114 & 0.140 & 0.174 \\
 &  & 100 &  & 0.304 & 0.246 & 0.391 & 0.362 &  & 1.437 & 0.886 & 1.688 & 1.357 &  & 0.038 & 0.067 & 0.077 & 0.105 \\
 &  & 200 &  & 0.051 & 0.197 & 0.203 & 0.304 &  & 0.399 & 0.409 & 0.572 & 0.649 &  & 0.013 & 0.040 & 0.042 & 0.062 \\
\bottomrule
\end{tabular}
\begin{tablenotes}[para]
\footnotesize
\setstretch{1}
\item \textit{Notes:} This table reports Monte Carlo results under network link misclassification. The parameter $\tau$ controls the intensity of misclassification and is common across individuals. Reported statistics include bias, standard deviation (SD), root mean squared error (RMSE), and inter-quantile range (IQR) for the peer effect ($\beta$=0.7), contextual ($\delta=1$), and direct ($\gamma=1$) Effects.
\end{tablenotes}
\end{threeparttable}
\normalsize
\end{table} 
\end{landscape}

\begin{landscape}
   \begin{table}[!htbp]
\centering
\small
\begin{threeparttable}
\caption{Estimator Performance under Unobserved Homophily (Design 3)}
\label{tab:design3_homophily_app1}
\begin{tabular}{ccc c cccc c cccc c cccc}
\toprule
 &  &  &  &\multicolumn{4}{c}{Peer effects} &  &\multicolumn{4}{c}{Contextual effects} &  &\multicolumn{4}{c}{Direct effects} \\
\cmidrule(lr){5-8} \cmidrule(lr){10-13} \cmidrule(lr){15-18}
$m$ & Estimator & $n$ &  & Bias & SD & RMSE & IQR &  & Bias & SD & RMSE & IQR &  & Bias & SD & RMSE & IQR \\
\midrule
\multirow{9}{*}{1} & \multirow{3}{*}{OLS} & 50 &  & 0.096 & 0.034 & 0.101 & 0.053 &  & -0.235 & 0.122 & 0.265 & 0.192 &  & -0.123 & 0.085 & 0.150 & 0.133 \\
 &  & 100 &  & 0.093 & 0.022 & 0.095 & 0.036 &  & -0.231 & 0.080 & 0.245 & 0.127 &  & -0.124 & 0.060 & 0.138 & 0.094 \\
 &  & 200 &  & 0.117 & 0.020 & 0.118 & 0.031 &  & -0.251 & 0.062 & 0.258 & 0.099 &  & -0.113 & 0.040 & 0.120 & 0.064 \\
\noalign{\vskip 0.5em}
 & \multirow{3}{*}{G2SLS} & 50 &  & 0.047 & 0.038 & 0.061 & 0.056 &  & -0.119 & 0.130 & 0.176 & 0.207 &  & -0.060 & 0.090 & 0.108 & 0.139 \\
 &  & 100 &  & 0.041 & 0.025 & 0.048 & 0.040 &  & -0.100 & 0.086 & 0.132 & 0.135 &  & -0.053 & 0.064 & 0.083 & 0.104 \\
 &  & 200 &  & 0.060 & 0.021 & 0.064 & 0.033 &  & -0.129 & 0.064 & 0.143 & 0.099 &  & -0.059 & 0.040 & 0.072 & 0.064 \\
\noalign{\vskip 0.5em}
 & \multirow{3}{*}{G3SLS} & 50 &  & -0.037 & 0.075 & 0.084 & 0.116 &  & 0.088 & 0.197 & 0.216 & 0.305 &  & 0.020 & 0.132 & 0.134 & 0.202 \\
 &  & 100 &  & -0.031 & 0.049 & 0.058 & 0.077 &  & 0.080 & 0.141 & 0.162 & 0.220 &  & 0.011 & 0.096 & 0.097 & 0.149 \\
 &  & 200 &  & -0.035 & 0.043 & 0.056 & 0.067 &  & 0.075 & 0.105 & 0.129 & 0.166 &  & 0.009 & 0.061 & 0.062 & 0.094 \\
\midrule
\multirow{9}{*}{3} & \multirow{3}{*}{OLS} & 50 &  & 0.165 & 0.042 & 0.171 & 0.065 &  & -0.405 & 0.177 & 0.442 & 0.276 &  & -0.212 & 0.150 & 0.259 & 0.233 \\
 &  & 100 &  & 0.163 & 0.029 & 0.165 & 0.045 &  & -0.402 & 0.113 & 0.418 & 0.177 &  & -0.223 & 0.102 & 0.245 & 0.162 \\
 &  & 200 &  & 0.209 & 0.024 & 0.210 & 0.038 &  & -0.448 & 0.085 & 0.456 & 0.137 &  & -0.204 & 0.068 & 0.215 & 0.110 \\
\noalign{\vskip 0.5em}
 & \multirow{3}{*}{G2SLS} & 50 &  & 0.128 & 0.053 & 0.138 & 0.078 &  & -0.307 & 0.210 & 0.372 & 0.324 &  & -0.161 & 0.162 & 0.229 & 0.258 \\
 &  & 100 &  & 0.114 & 0.035 & 0.119 & 0.055 &  & -0.275 & 0.131 & 0.305 & 0.208 &  & -0.156 & 0.113 & 0.193 & 0.175 \\
 &  & 200 &  & 0.169 & 0.030 & 0.171 & 0.044 &  & -0.359 & 0.095 & 0.371 & 0.143 &  & -0.165 & 0.071 & 0.179 & 0.114 \\
\noalign{\vskip 0.5em}
 & \multirow{3}{*}{G3SLS} & 50 &  & -0.049 & 0.129 & 0.138 & 0.191 &  & 0.112 & 0.328 & 0.346 & 0.515 &  & 0.035 & 0.214 & 0.217 & 0.335 \\
 &  & 100 &  & -0.041 & 0.079 & 0.089 & 0.120 &  & 0.108 & 0.228 & 0.252 & 0.350 &  & 0.021 & 0.149 & 0.150 & 0.230 \\
 &  & 200 &  & -0.050 & 0.071 & 0.087 & 0.107 &  & 0.106 & 0.176 & 0.206 & 0.271 &  & 0.015 & 0.105 & 0.106 & 0.168 \\
\bottomrule
\end{tabular}
\begin{tablenotes}[para]
\footnotesize
\setstretch{1}
\item \textit{Notes:} This table reports Monte Carlo results under unobserved homophily. The parameter $m$ controls the intensity of homophilous link formation and is common across individuals. Reported statistics include bias, standard deviation (SD), root mean squared error (RMSE), and inter-quantile range (IQR) for the peer effect ($\beta$=0.7), contextual ($\delta=1$), and direct ($\gamma=1$) Effects.
\end{tablenotes}
\end{threeparttable}
\normalsize
\end{table} 
\end{landscape}

\begin{landscape}
   \begin{table}[!htbp]
\centering
\small
\begin{threeparttable}
\caption{Estimator Performance under Unobserved Homophily (Design 3)}
\label{tab:design3_homophily_app2}
\begin{tabular}{ccc c cccc c cccc c cccc}
\toprule
 &  &  &  &\multicolumn{4}{c}{Peer effects} &  &\multicolumn{4}{c}{Contextual effects} &  &\multicolumn{4}{c}{Direct effects} \\
\cmidrule(lr){5-8} \cmidrule(lr){10-13} \cmidrule(lr){15-18}
$m$ & Estimator & $n$ &  & Bias & SD & RMSE & IQR &  & Bias & SD & RMSE & IQR &  & Bias & SD & RMSE & IQR \\
\midrule
\multirow{9}{*}{1} & \multirow{3}{*}{OLS} & 50 &  & 0.087 & 0.033 & 0.093 & 0.052 &  & -0.221 & 0.122 & 0.253 & 0.199 &  & -0.119 & 0.085 & 0.146 & 0.134 \\
 &  & 100 &  & 0.082 & 0.022 & 0.085 & 0.035 &  & -0.212 & 0.079 & 0.226 & 0.131 &  & -0.118 & 0.060 & 0.132 & 0.092 \\
 &  & 200 &  & 0.094 & 0.019 & 0.096 & 0.030 &  & -0.209 & 0.062 & 0.218 & 0.096 &  & -0.100 & 0.039 & 0.107 & 0.063 \\
\noalign{\vskip 0.5em}
 & \multirow{3}{*}{G2SLS} & 50 &  & 0.035 & 0.037 & 0.051 & 0.061 &  & -0.090 & 0.128 & 0.157 & 0.194 &  & -0.046 & 0.090 & 0.101 & 0.142 \\
 &  & 100 &  & 0.027 & 0.025 & 0.037 & 0.042 &  & -0.069 & 0.086 & 0.110 & 0.133 &  & -0.037 & 0.063 & 0.073 & 0.102 \\
 &  & 200 &  & 0.035 & 0.022 & 0.041 & 0.033 &  & -0.078 & 0.065 & 0.101 & 0.105 &  & -0.038 & 0.040 & 0.055 & 0.062 \\
\noalign{\vskip 0.5em}
 & \multirow{3}{*}{G3SLS} & 50 &  & -0.023 & 0.061 & 0.065 & 0.091 &  & 0.058 & 0.175 & 0.184 & 0.272 &  & 0.018 & 0.118 & 0.119 & 0.182 \\
 &  & 100 &  & -0.016 & 0.039 & 0.042 & 0.061 &  & 0.046 & 0.118 & 0.126 & 0.186 &  & 0.008 & 0.079 & 0.080 & 0.122 \\
 &  & 200 &  & -0.014 & 0.032 & 0.035 & 0.049 &  & 0.031 & 0.085 & 0.090 & 0.129 &  & 0.003 & 0.051 & 0.051 & 0.081 \\
\midrule
\multirow{9}{*}{3} & \multirow{3}{*}{OLS} & 50 &  & 0.144 & 0.042 & 0.150 & 0.063 &  & -0.363 & 0.170 & 0.401 & 0.273 &  & -0.197 & 0.142 & 0.242 & 0.227 \\
 &  & 100 &  & 0.134 & 0.027 & 0.137 & 0.044 &  & -0.344 & 0.107 & 0.360 & 0.167 &  & -0.196 & 0.095 & 0.217 & 0.148 \\
 &  & 200 &  & 0.155 & 0.024 & 0.157 & 0.037 &  & -0.344 & 0.083 & 0.354 & 0.132 &  & -0.164 & 0.062 & 0.175 & 0.097 \\
\noalign{\vskip 0.5em}
 & \multirow{3}{*}{G2SLS} & 50 &  & 0.096 & 0.052 & 0.110 & 0.077 &  & -0.239 & 0.198 & 0.311 & 0.301 &  & -0.130 & 0.154 & 0.202 & 0.236 \\
 &  & 100 &  & 0.079 & 0.035 & 0.087 & 0.057 &  & -0.197 & 0.123 & 0.233 & 0.192 &  & -0.114 & 0.106 & 0.156 & 0.163 \\
 &  & 200 &  & 0.103 & 0.028 & 0.107 & 0.044 &  & -0.228 & 0.087 & 0.244 & 0.134 &  & -0.109 & 0.065 & 0.127 & 0.103 \\
\noalign{\vskip 0.5em}
 & \multirow{3}{*}{G3SLS} & 50 &  & -0.033 & 0.099 & 0.105 & 0.142 &  & 0.078 & 0.271 & 0.282 & 0.418 &  & 0.033 & 0.179 & 0.182 & 0.276 \\
 &  & 100 &  & -0.022 & 0.058 & 0.062 & 0.089 &  & 0.063 & 0.179 & 0.190 & 0.282 &  & 0.015 & 0.117 & 0.118 & 0.182 \\
 &  & 200 &  & -0.019 & 0.048 & 0.052 & 0.074 &  & 0.042 & 0.131 & 0.138 & 0.202 &  & 0.005 & 0.077 & 0.077 & 0.121 \\
\bottomrule
\end{tabular}
\begin{tablenotes}[para]
\footnotesize
\setstretch{1}
\item \textit{Notes:} This table reports Monte Carlo results under unobserved homophily. The parameter $m$ controls the intensity of homophilous link formation and is common across individuals. Reported statistics include bias, standard deviation (SD), root mean squared error (RMSE), and inter-quantile range (IQR) for the peer effect ($\beta$=0.7), contextual ($\delta=1$), and direct ($\gamma=1$) Effects.
\end{tablenotes}
\end{threeparttable}
\normalsize
\end{table} 
\end{landscape}

\subsection{Alternative Network Densities}

This section provides Monte Carlo evidence of the finite-sample performance of OLS, G2SLS, and G3SLS mentioned in the main manuscript across the three Monte Carlo designs under alternative network densities. In particular, we vary the density parameter from relatively sparse to increasingly dense network structures, while maintaining the remaining elements of the data-generating process at their baseline values. This exercise allows us to examine how the estimators behave as the intensity of network connections increases across the different structural environments considered.

Table \ref{tab:design1_heterogeneity_test1} reports results for an alternative parameterization of the network formation process in which the logistic disturbance $u_{ij}$ in Design 1 is drawn from a Logistic distribution with a location parameter $\mu_u=-0.05$ instead of the baseline value $\mu_u=0$. All other parameters of the data-generating process remain fixed at their baseline values. A negative shift in the location parameter reduces the typical realization of $u_{ij}$, thereby increasing the probability that $u_{ij} \le z_i z_j + \psi(a_i+a_j)$, and hence increasing the expected density of both the endogenous network $\mathbf{W}$ and the exogenous network $\mathbf{W}_0$ relative to the baseline design. For the same design, Table \ref{tab:design1_heterogeneity_test2} exhibits results for $\mu_u=+0.05$, which shifts the logistic distribution to the right and decreases the probability of link formation. This specification therefore generates sparser networks relative to the baseline design.

Table \ref{tab:design2_misclassification_test1} reports results for an alternative parameterization of the latent network formation process in Design 2, in which the Erd\"{o}s–R\'{e}nyi probability governing $\mathbf{W}_0^{\ast}$ is set equal to 0.04 instead of the baseline value of 0.05. All other parameters of the data-generating process remain fixed at their baseline values. Reducing the link probability lowers the expected density of the true adjacency matrix $\mathbf{W}_0^{\ast}$ and, consequently, of both the observed endogenous network $\mathbf{W}$ and the exogenous network $\mathbf{W}_0$. This specification, therefore, evaluates estimator performance in a sparser-network environment while preserving the same misclassification mechanism and outcome equation. Similarly, Table \ref{tab:design2_misclassification_test2} exhibits results for a higher link probability of 0.06 for the same design, which increases the expected density of the latent adjacency matrix $\mathbf{W}_0^{\ast}$ relative to the baseline design. As a result, both $\mathbf{W}$ and $\mathbf{W}_0$ become denser in expectation.

Finally, Table \ref{tab:design3_homophily_test1} reports the results for an alternative parameterization in Design 3, in which the Erd\"{o}s–R\'{e}nyi link probability governing the exogenous adjacency matrix $\mathbf{W}_0$ is set equal to 0.005 instead of the baseline value 0.01. All other parameters of the data-generating process remain fixed at their baseline values. Reducing the link probability lowers the expected density of $\mathbf{W}_0$, and given the construction of $\mathbf{W}$, also reduces the overall density of the endogenous network. Lastly, Table \ref{tab:design3_homophily_test2} exhibits results for a higher link probability of 0.015, which increases the expected density of the exogenous adjacency matrix $\mathbf{W}_0$ relative to the baseline design. Because $\mathbf{W}$ is constructed by modifying $\mathbf{W}_0$ according to the homophily rule, this change also leads to a denser endogenous network, in expectation.

\subsubsection*{Results}

The Monte Carlo results reported in Tables \ref{tab:design1_heterogeneity_test1} and \ref{tab:design1_heterogeneity_test2} allow for an assessment of the G3SLS estimator's sensitivity to variations in network density in the presence of unobserved degree heterogeneity (Design 1), relative to the baseline results presented in Table \ref{tab:design1_heterogeneity_main}. A comparison of these alternative parameterizations reveals that the proposed estimator remains robust to fluctuations in the expected density of the endogenous and exogenous networks. Specifically, Table \ref{tab:design1_heterogeneity_test1} indicates that in a denser network environment ($\mu_u = -0.05$), the G3SLS estimator exhibits a marginal increase in the Root Mean Squared Error (RMSE) for the peer effects parameter ($\beta=0.7$) compared to the baseline; yet, it continues to significantly outperform the OLS and G2SLS estimators in terms of bias reduction. Conversely, the results for the sparser network specification in Table \ref{tab:design1_heterogeneity_test2} ($\mu_u = +0.05$) demonstrate that the finite-sample performance of the G3SLS estimator is preserved, yielding RMSE and bias metrics that are quantitatively similar to those observed in the baseline design.

Comparing these results to the baseline specification in Table \ref{tab:design2_misclassification_main} (density 0.05) reveals how the intensity of network connections interacts with the link misclassification mechanism. In a sparser network environment (Table \ref{tab:design2_misclassification_test1}, density 0.04), the finite-sample bias of the G3SLS estimator for the peer and contextual effects is exacerbated relative to the baseline, reflecting the increased sensitivity of the estimation when fewer true links are available to offset the misclassification rate $\tau$. Conversely, increasing the expected network density (Table \ref{tab:design2_misclassification_test2}, density 0.06) mitigates the finite-sample bias for the G3SLS estimator across all structural parameters, though it introduces a marginal increase in the root mean squared error for the peer effects parameter driven by a higher standard deviation for larger samples. Crucially, across both sparser and denser specifications, the G3SLS estimator systematically corrects for the severe bias that persists in both the naive OLS and conventional G2SLS estimators, demonstrating its robustness to measurement error regardless of the underlying network density.

A comparison of these results against the baseline specification in Table \ref{tab:design3_homophily_main} reveals that the finite-sample performance of the proposed estimator is remarkably insensitive to variations in the density of the exogenous network layer. Across all specifications — ranging from the sparser environment in Table \ref{tab:design3_homophily_test1} (density 0.005) to the denser structure in \ref{tab:design3_homophily_test2} (density 0.015) — the G3SLS estimator consistently delivers negligible bias and stable root mean squared error values that are virtually indistinguishable from the baseline results. This invariance stands in sharp contrast to the persistent bias observed in the OLS and G2SLS estimators, providing strong evidence that the identification strategy's ability to correct for unobserved homophily remains robust regardless of the sparsity or density of the underlying network connections for this design.

\begin{landscape}  
    \begin{table}[!htbp]
\centering
\small
\begin{threeparttable}
\caption{Estimator Performance under Unobserved Degree Heterogeneity (Design 1)}
\label{tab:design1_heterogeneity_test1}
\begin{tabular}{ccc c cccc c cccc c cccc}
\toprule
 &  &  &  &\multicolumn{4}{c}{Peer effects} &  &\multicolumn{4}{c}{Contextual effects} &  &\multicolumn{4}{c}{Direct effects} \\
\cmidrule(lr){5-8} \cmidrule(lr){10-13} \cmidrule(lr){15-18}
$m$ & Estimator & $n$ &  & Bias & SD & RMSE & IQR &  & Bias & SD & RMSE & IQR &  & Bias & SD & RMSE & IQR \\
\midrule
\multirow{9}{*}{10} & \multirow{3}{*}{OLS} & 50 &  & 0.355 & 0.045 & 0.358 & 0.070 &  & -1.103 & 0.880 & 1.411 & 1.363 &  & -0.449 & 0.617 & 0.763 & 0.975 \\
 &  & 100 &  & 0.348 & 0.030 & 0.349 & 0.045 &  & -1.087 & 0.579 & 1.232 & 0.939 &  & -0.442 & 0.420 & 0.609 & 0.639 \\
 &  & 200 &  & 0.345 & 0.021 & 0.346 & 0.033 &  & -1.096 & 0.405 & 1.168 & 0.641 &  & -0.471 & 0.290 & 0.553 & 0.439 \\
\noalign{\vskip 0.5em}
 & \multirow{3}{*}{G2SLS} & 50 &  & 0.687 & 0.337 & 0.765 & 0.317 &  & -1.548 & 2.303 & 2.774 & 3.109 &  & -0.562 & 1.117 & 1.250 & 1.566 \\
 &  & 100 &  & 0.759 & 0.288 & 0.812 & 0.290 &  & -2.182 & 1.851 & 2.861 & 2.566 &  & -0.846 & 0.851 & 1.199 & 1.215 \\
 &  & 200 &  & 0.804 & 0.289 & 0.855 & 0.301 &  & -2.474 & 1.593 & 2.942 & 2.225 &  & -1.020 & 0.740 & 1.260 & 1.021 \\
\noalign{\vskip 0.5em}
 & \multirow{3}{*}{G3SLS} & 50 &  & 0.062 & 1.008 & 1.010 & 1.287 &  & 0.339 & 6.628 & 6.635 & 7.889 &  & 0.509 & 1.622 & 1.699 & 2.429 \\
 &  & 100 &  & 0.022 & 0.789 & 0.789 & 1.050 &  & 0.215 & 4.100 & 4.104 & 5.341 &  & 0.556 & 1.086 & 1.219 & 1.661 \\
 &  & 200 &  & -0.032 & 0.582 & 0.582 & 0.821 &  & 0.259 & 2.473 & 2.486 & 3.472 &  & 0.575 & 0.722 & 0.923 & 1.094 \\
\midrule
\multirow{9}{*}{12} & \multirow{3}{*}{OLS} & 50 &  & 0.357 & 0.047 & 0.360 & 0.072 &  & -1.114 & 1.049 & 1.530 & 1.642 &  & -0.453 & 0.738 & 0.866 & 1.153 \\
 &  & 100 &  & 0.349 & 0.031 & 0.351 & 0.046 &  & -1.092 & 0.691 & 1.292 & 1.121 &  & -0.441 & 0.504 & 0.669 & 0.772 \\
 &  & 200 &  & 0.346 & 0.022 & 0.347 & 0.034 &  & -1.103 & 0.483 & 1.204 & 0.754 &  & -0.475 & 0.347 & 0.589 & 0.524 \\
\noalign{\vskip 0.5em}
 & \multirow{3}{*}{G2SLS} & 50 &  & 0.730 & 0.325 & 0.799 & 0.318 &  & -1.675 & 2.697 & 3.174 & 3.449 &  & -0.601 & 1.300 & 1.431 & 1.797 \\
 &  & 100 &  & 0.806 & 0.251 & 0.844 & 0.282 &  & -2.221 & 2.009 & 2.994 & 2.706 &  & -0.864 & 0.952 & 1.285 & 1.418 \\
 &  & 200 &  & 0.849 & 0.227 & 0.879 & 0.269 &  & -2.617 & 1.631 & 3.083 & 2.334 &  & -1.088 & 0.768 & 1.332 & 1.083 \\
\noalign{\vskip 0.5em}
 & \multirow{3}{*}{G3SLS} & 50 &  & 0.093 & 1.014 & 1.018 & 1.292 &  & 0.290 & 7.743 & 7.745 & 9.361 &  & 0.467 & 1.923 & 1.978 & 2.918 \\
 &  & 100 &  & 0.107 & 0.838 & 0.845 & 1.105 &  & -0.133 & 4.884 & 4.884 & 6.503 &  & 0.502 & 1.280 & 1.374 & 1.922 \\
 &  & 200 &  & 0.031 & 0.621 & 0.621 & 0.876 &  & 0.166 & 2.917 & 2.921 & 3.901 &  & 0.557 & 0.862 & 1.026 & 1.302 \\
\bottomrule
\end{tabular}
\begin{tablenotes}[para]
\footnotesize
\setstretch{1}
\item \textit{Notes:} This table reports Monte Carlo results under unobserved degree heterogeneity. The parameter $m$ controls the intensity of heterogeneity in node degrees and is common across individuals. Reported statistics include bias, standard deviation (SD), root mean squared error (RMSE), and inter-quantile range (IQR) for the peer effect ($\beta$=0.7), contextual ($\delta=1$), and direct ($\gamma=1$) Effects.
\end{tablenotes}
\end{threeparttable}
\normalsize
\end{table}
\end{landscape}

\begin{landscape}   
    \begin{table}[!htbp]
\centering
\small
\begin{threeparttable}
\caption{Estimator Performance under Unobserved Degree Heterogeneity (Design 1)}
\label{tab:design1_heterogeneity_test2}
\begin{tabular}{ccc c cccc c cccc c cccc}
\toprule
 &  &  &  &\multicolumn{4}{c}{Peer effects} &  &\multicolumn{4}{c}{Contextual effects} &  &\multicolumn{4}{c}{Direct effects} \\
\cmidrule(lr){5-8} \cmidrule(lr){10-13} \cmidrule(lr){15-18}
$m$ & Estimator & $n$ &  & Bias & SD & RMSE & IQR &  & Bias & SD & RMSE & IQR &  & Bias & SD & RMSE & IQR \\
\midrule
\multirow{9}{*}{10} & \multirow{3}{*}{OLS} & 50 &  & 0.363 & 0.047 & 0.366 & 0.073 &  & -1.120 & 0.873 & 1.420 & 1.329 &  & -0.450 & 0.618 & 0.764 & 0.989 \\
 &  & 100 &  & 0.356 & 0.031 & 0.357 & 0.050 &  & -1.105 & 0.580 & 1.248 & 0.917 &  & -0.441 & 0.415 & 0.606 & 0.648 \\
 &  & 200 &  & 0.352 & 0.022 & 0.353 & 0.034 &  & -1.128 & 0.405 & 1.199 & 0.625 &  & -0.471 & 0.287 & 0.551 & 0.435 \\
\noalign{\vskip 0.5em}
 & \multirow{3}{*}{G2SLS} & 50 &  & 0.684 & 0.300 & 0.747 & 0.294 &  & -1.581 & 2.173 & 2.687 & 2.857 &  & -0.570 & 1.079 & 1.220 & 1.446 \\
 &  & 100 &  & 0.732 & 0.259 & 0.776 & 0.259 &  & -2.117 & 1.778 & 2.764 & 2.346 &  & -0.811 & 0.803 & 1.141 & 1.122 \\
 &  & 200 &  & 0.778 & 0.251 & 0.817 & 0.260 &  & -2.430 & 1.450 & 2.830 & 2.139 &  & -0.973 & 0.672 & 1.182 & 0.894 \\
\noalign{\vskip 0.5em}
 & \multirow{3}{*}{G3SLS} & 50 &  & 0.119 & 0.976 & 0.982 & 1.257 &  & 0.502 & 6.845 & 6.861 & 8.589 &  & 0.520 & 1.570 & 1.653 & 2.428 \\
 &  & 100 &  & 0.009 & 0.764 & 0.764 & 1.023 &  & 0.189 & 4.134 & 4.137 & 5.388 &  & 0.551 & 1.039 & 1.176 & 1.543 \\
 &  & 200 &  & -0.014 & 0.546 & 0.546 & 0.794 &  & 0.199 & 2.489 & 2.496 & 3.451 &  & 0.544 & 0.716 & 0.899 & 1.112 \\
\midrule
\multirow{9}{*}{12} & \multirow{3}{*}{OLS} & 50 &  & 0.365 & 0.049 & 0.368 & 0.075 &  & -1.133 & 1.040 & 1.537 & 1.606 &  & -0.456 & 0.739 & 0.868 & 1.170 \\
 &  & 100 &  & 0.358 & 0.032 & 0.360 & 0.052 &  & -1.108 & 0.691 & 1.306 & 1.088 &  & -0.441 & 0.498 & 0.665 & 0.775 \\
 &  & 200 &  & 0.354 & 0.023 & 0.355 & 0.035 &  & -1.138 & 0.482 & 1.235 & 0.740 &  & -0.476 & 0.344 & 0.587 & 0.532 \\
\noalign{\vskip 0.5em}
 & \multirow{3}{*}{G2SLS} & 50 &  & 0.733 & 0.286 & 0.786 & 0.288 &  & -1.732 & 2.544 & 3.077 & 3.250 &  & -0.628 & 1.261 & 1.408 & 1.726 \\
 &  & 100 &  & 0.784 & 0.230 & 0.817 & 0.246 &  & -2.211 & 1.942 & 2.942 & 2.603 &  & -0.852 & 0.897 & 1.237 & 1.311 \\
 &  & 200 &  & 0.799 & 0.192 & 0.822 & 0.217 &  & -2.453 & 1.449 & 2.849 & 2.125 &  & -0.988 & 0.701 & 1.211 & 0.972 \\
\noalign{\vskip 0.5em}
 & \multirow{3}{*}{G3SLS} & 50 &  & 0.098 & 0.992 & 0.997 & 1.288 &  & 0.341 & 8.011 & 8.016 & 9.823 &  & 0.504 & 1.858 & 1.924 & 2.815 \\
 &  & 100 &  & 0.093 & 0.823 & 0.828 & 1.083 &  & 0.193 & 4.958 & 4.960 & 6.639 &  & 0.564 & 1.257 & 1.377 & 1.926 \\
 &  & 200 &  & 0.045 & 0.592 & 0.593 & 0.812 &  & 0.103 & 2.888 & 2.889 & 3.988 &  & 0.539 & 0.849 & 1.006 & 1.294 \\
\bottomrule
\end{tabular}
\begin{tablenotes}[para]
\footnotesize
\setstretch{1}
\item \textit{Notes:} This table reports Monte Carlo results under unobserved degree heterogeneity. The parameter $m$ controls the intensity of heterogeneity in node degrees and is common across individuals. Reported statistics include bias, standard deviation (SD), root mean squared error (RMSE), and inter-quantile range (IQR) for the peer effect ($\beta$=0.7), contextual ($\delta=1$), and direct ($\gamma=1$) Effects.
\end{tablenotes}
\end{threeparttable}
\normalsize
\end{table}
\end{landscape}

\begin{landscape}   
    \begin{table}[!htbp]
\centering
\small
\begin{threeparttable}
\caption{Estimator Robustness to Network Link Misclassification (Design 2)}
\label{tab:design2_misclassification_test1}
\begin{tabular}{ccc c cccc c cccc c cccc}
\toprule
 &  &  &  &\multicolumn{4}{c}{Peer effects} &  &\multicolumn{4}{c}{Contextual effects} &  &\multicolumn{4}{c}{Direct effects} \\
\cmidrule(lr){5-8} \cmidrule(lr){10-13} \cmidrule(lr){15-18}
$\tau$ & Estimator & $n$ &  & Bias & SD & RMSE & IQR &  & Bias & SD & RMSE & IQR &  & Bias & SD & RMSE & IQR \\
\midrule
\multirow{9}{*}{0.01} & \multirow{3}{*}{OLS} & 50 &  & -0.323 & 0.132 & 0.349 & 0.213 &  & -0.346 & 0.349 & 0.491 & 0.553 &  & 0.490 & 0.242 & 0.546 & 0.379 \\
 &  & 100 &  & -0.483 & 0.090 & 0.491 & 0.140 &  & -0.574 & 0.165 & 0.597 & 0.264 &  & 0.269 & 0.116 & 0.292 & 0.181 \\
 &  & 200 &  & -0.553 & 0.062 & 0.556 & 0.100 &  & -0.735 & 0.093 & 0.741 & 0.147 &  & 0.124 & 0.052 & 0.134 & 0.083 \\
\noalign{\vskip 0.5em}
 & \multirow{3}{*}{G2SLS} & 50 &  & -0.483 & 0.230 & 0.535 & 0.338 &  & -0.050 & 0.512 & 0.514 & 0.757 &  & 0.560 & 0.275 & 0.624 & 0.430 \\
 &  & 100 &  & -0.630 & 0.137 & 0.645 & 0.195 &  & -0.354 & 0.232 & 0.423 & 0.354 &  & 0.317 & 0.131 & 0.343 & 0.204 \\
 &  & 200 &  & -0.687 & 0.084 & 0.692 & 0.131 &  & -0.573 & 0.118 & 0.585 & 0.180 &  & 0.151 & 0.056 & 0.161 & 0.088 \\
\noalign{\vskip 0.5em}
 & \multirow{3}{*}{G3SLS} & 50 &  & 0.950 & 0.567 & 1.106 & 0.841 &  & 2.916 & 2.351 & 3.745 & 3.388 &  & 0.063 & 0.104 & 0.121 & 0.156 \\
 &  & 100 &  & 0.353 & 0.202 & 0.407 & 0.319 &  & 1.403 & 0.816 & 1.623 & 1.222 &  & 0.037 & 0.064 & 0.074 & 0.098 \\
 &  & 200 &  & 0.086 & 0.150 & 0.173 & 0.235 &  & 0.379 & 0.352 & 0.517 & 0.554 &  & 0.012 & 0.040 & 0.041 & 0.063 \\
\midrule
\multirow{9}{*}{0.05} & \multirow{3}{*}{OLS} & 50 &  & -0.323 & 0.132 & 0.349 & 0.213 &  & -0.346 & 0.349 & 0.491 & 0.553 &  & 0.490 & 0.242 & 0.546 & 0.379 \\
 &  & 100 &  & -0.483 & 0.090 & 0.491 & 0.140 &  & -0.574 & 0.165 & 0.597 & 0.264 &  & 0.269 & 0.116 & 0.292 & 0.181 \\
 &  & 200 &  & -0.553 & 0.062 & 0.556 & 0.100 &  & -0.735 & 0.093 & 0.741 & 0.147 &  & 0.124 & 0.052 & 0.134 & 0.083 \\
\noalign{\vskip 0.5em}
 & \multirow{3}{*}{G2SLS} & 50 &  & -0.483 & 0.230 & 0.535 & 0.338 &  & -0.050 & 0.512 & 0.514 & 0.757 &  & 0.560 & 0.275 & 0.624 & 0.430 \\
 &  & 100 &  & -0.630 & 0.137 & 0.645 & 0.195 &  & -0.354 & 0.232 & 0.423 & 0.354 &  & 0.317 & 0.131 & 0.343 & 0.204 \\
 &  & 200 &  & -0.687 & 0.084 & 0.692 & 0.131 &  & -0.573 & 0.118 & 0.585 & 0.180 &  & 0.151 & 0.056 & 0.161 & 0.088 \\
\noalign{\vskip 0.5em}
 & \multirow{3}{*}{G3SLS} & 50 &  & 0.852 & 0.572 & 1.026 & 0.817 &  & 3.116 & 2.433 & 3.953 & 3.455 &  & 0.097 & 0.127 & 0.159 & 0.198 \\
 &  & 100 &  & 0.282 & 0.222 & 0.359 & 0.350 &  & 1.509 & 0.852 & 1.733 & 1.269 &  & 0.060 & 0.072 & 0.093 & 0.113 \\
 &  & 200 &  & 0.049 & 0.163 & 0.170 & 0.250 &  & 0.429 & 0.363 & 0.562 & 0.567 &  & 0.020 & 0.042 & 0.047 & 0.068 \\
\bottomrule
\end{tabular}
\begin{tablenotes}[para]
\footnotesize
\setstretch{1}
\item \textit{Notes:} This table reports Monte Carlo results under network link misclassification. The parameter $\tau$ controls the intensity of misclassification and is common across individuals. Reported statistics include bias, standard deviation (SD), root mean squared error (RMSE), and inter-quantile range (IQR) for the peer effect ($\beta$=0.7), contextual ($\delta=1$), and direct ($\gamma=1$) Effects.
\end{tablenotes}
\end{threeparttable}
\normalsize
\end{table}
\end{landscape}

\begin{landscape}   
    \begin{table}[!htbp]
\centering
\small
\begin{threeparttable}
\caption{Estimator Robustness to Network Link Misclassification (Design 2)}
\label{tab:design2_misclassification_test2}
\begin{tabular}{ccc c cccc c cccc c cccc}
\toprule
 &  &  &  &\multicolumn{4}{c}{Peer effects} &  &\multicolumn{4}{c}{Contextual effects} &  &\multicolumn{4}{c}{Direct effects} \\
\cmidrule(lr){5-8} \cmidrule(lr){10-13} \cmidrule(lr){15-18}
$\tau$ & Estimator & $n$ &  & Bias & SD & RMSE & IQR &  & Bias & SD & RMSE & IQR &  & Bias & SD & RMSE & IQR \\
\midrule
\multirow{9}{*}{0.01} & \multirow{3}{*}{OLS} & 50 &  & -0.429 & 0.128 & 0.448 & 0.207 &  & -0.491 & 0.276 & 0.563 & 0.441 &  & 0.341 & 0.194 & 0.393 & 0.305 \\
 &  & 100 &  & -0.550 & 0.081 & 0.556 & 0.131 &  & -0.676 & 0.135 & 0.689 & 0.209 &  & 0.162 & 0.088 & 0.184 & 0.135 \\
 &  & 200 &  & -0.550 & 0.065 & 0.554 & 0.104 &  & -0.807 & 0.098 & 0.812 & 0.157 &  & 0.073 & 0.043 & 0.084 & 0.067 \\
\noalign{\vskip 0.5em}
 & \multirow{3}{*}{G2SLS} & 50 &  & -0.574 & 0.209 & 0.611 & 0.296 &  & -0.246 & 0.415 & 0.482 & 0.628 &  & 0.397 & 0.224 & 0.456 & 0.350 \\
 &  & 100 &  & -0.664 & 0.123 & 0.675 & 0.179 &  & -0.523 & 0.192 & 0.557 & 0.292 &  & 0.191 & 0.099 & 0.215 & 0.158 \\
 &  & 200 &  & -0.686 & 0.108 & 0.695 & 0.160 &  & -0.651 & 0.137 & 0.665 & 0.212 &  & 0.091 & 0.047 & 0.103 & 0.072 \\
\noalign{\vskip 0.5em}
 & \multirow{3}{*}{G3SLS} & 50 &  & 0.581 & 0.399 & 0.705 & 0.560 &  & 2.079 & 1.644 & 2.650 & 2.397 &  & 0.039 & 0.086 & 0.095 & 0.135 \\
 &  & 100 &  & 0.170 & 0.190 & 0.255 & 0.285 &  & 0.714 & 0.574 & 0.916 & 0.870 &  & 0.017 & 0.057 & 0.060 & 0.088 \\
 &  & 200 &  & 0.008 & 0.204 & 0.204 & 0.316 &  & 0.181 & 0.346 & 0.390 & 0.537 &  & 0.006 & 0.038 & 0.038 & 0.060 \\
\midrule
\multirow{9}{*}{0.05} & \multirow{3}{*}{OLS} & 50 &  & -0.429 & 0.128 & 0.448 & 0.207 &  & -0.491 & 0.276 & 0.563 & 0.441 &  & 0.341 & 0.194 & 0.393 & 0.305 \\
 &  & 100 &  & -0.550 & 0.081 & 0.556 & 0.131 &  & -0.676 & 0.135 & 0.689 & 0.209 &  & 0.162 & 0.088 & 0.184 & 0.135 \\
 &  & 200 &  & -0.550 & 0.065 & 0.554 & 0.104 &  & -0.807 & 0.098 & 0.812 & 0.157 &  & 0.073 & 0.043 & 0.084 & 0.067 \\
\noalign{\vskip 0.5em}
 & \multirow{3}{*}{G2SLS} & 50 &  & -0.574 & 0.209 & 0.611 & 0.296 &  & -0.246 & 0.415 & 0.482 & 0.628 &  & 0.397 & 0.224 & 0.456 & 0.350 \\
 &  & 100 &  & -0.664 & 0.123 & 0.675 & 0.179 &  & -0.523 & 0.192 & 0.557 & 0.292 &  & 0.191 & 0.099 & 0.215 & 0.158 \\
 &  & 200 &  & -0.686 & 0.108 & 0.695 & 0.160 &  & -0.651 & 0.137 & 0.665 & 0.212 &  & 0.091 & 0.047 & 0.103 & 0.072 \\
\noalign{\vskip 0.5em}
 & \multirow{3}{*}{G3SLS} & 50 &  & 0.501 & 0.434 & 0.662 & 0.604 &  & 2.223 & 1.711 & 2.805 & 2.512 &  & 0.065 & 0.105 & 0.123 & 0.162 \\
 &  & 100 &  & 0.119 & 0.204 & 0.236 & 0.324 &  & 0.784 & 0.602 & 0.989 & 0.918 &  & 0.028 & 0.062 & 0.068 & 0.104 \\
 &  & 200 &  & -0.025 & 0.210 & 0.211 & 0.323 &  & 0.224 & 0.354 & 0.419 & 0.540 &  & 0.010 & 0.039 & 0.040 & 0.062 \\
\bottomrule
\end{tabular}
\begin{tablenotes}[para]
\footnotesize
\setstretch{1}
\item \textit{Notes:} This table reports Monte Carlo results under network link misclassification. The parameter $\tau$ controls the intensity of misclassification and is common across individuals. Reported statistics include bias, standard deviation (SD), root mean squared error (RMSE), and inter-quantile range (IQR) for the peer effect ($\beta$=0.7), contextual ($\delta=1$), and direct ($\gamma=1$) Effects.
\end{tablenotes}
\end{threeparttable}
\normalsize
\end{table}
\end{landscape}

\begin{landscape}   
    \begin{table}[!htbp]
\centering
\small
\begin{threeparttable}
\caption{Estimator Performance under Unobserved Homophily (Design 3)}
\label{tab:design3_homophily_test1}
\begin{tabular}{ccc c cccc c cccc c cccc}
\toprule
 &  &  &  &\multicolumn{4}{c}{Peer effects} &  &\multicolumn{4}{c}{Contextual effects} &  &\multicolumn{4}{c}{Direct effects} \\
\cmidrule(lr){5-8} \cmidrule(lr){10-13} \cmidrule(lr){15-18}
$m$ & Estimator & $n$ &  & Bias & SD & RMSE & IQR &  & Bias & SD & RMSE & IQR &  & Bias & SD & RMSE & IQR \\
\midrule
\multirow{9}{*}{1} & \multirow{3}{*}{OLS} & 50 &  & 0.079 & 0.032 & 0.085 & 0.051 &  & -0.192 & 0.115 & 0.224 & 0.182 &  & -0.099 & 0.082 & 0.129 & 0.132 \\
 &  & 100 &  & 0.072 & 0.022 & 0.075 & 0.035 &  & -0.191 & 0.080 & 0.207 & 0.128 &  & -0.109 & 0.057 & 0.122 & 0.093 \\
 &  & 200 &  & 0.073 & 0.015 & 0.074 & 0.022 &  & -0.194 & 0.055 & 0.202 & 0.090 &  & -0.109 & 0.039 & 0.116 & 0.062 \\
\noalign{\vskip 0.5em}
 & \multirow{3}{*}{G2SLS} & 50 &  & 0.021 & 0.040 & 0.045 & 0.063 &  & -0.048 & 0.124 & 0.133 & 0.194 &  & -0.022 & 0.088 & 0.090 & 0.138 \\
 &  & 100 &  & 0.010 & 0.027 & 0.029 & 0.042 &  & -0.027 & 0.089 & 0.093 & 0.141 &  & -0.014 & 0.062 & 0.063 & 0.096 \\
 &  & 200 &  & 0.010 & 0.018 & 0.020 & 0.027 &  & -0.027 & 0.059 & 0.065 & 0.092 &  & -0.014 & 0.041 & 0.043 & 0.069 \\
\noalign{\vskip 0.5em}
 & \multirow{3}{*}{G3SLS} & 50 &  & -0.012 & 0.050 & 0.051 & 0.076 &  & 0.036 & 0.148 & 0.152 & 0.232 &  & 0.011 & 0.100 & 0.101 & 0.160 \\
 &  & 100 &  & -0.007 & 0.031 & 0.032 & 0.049 &  & 0.018 & 0.099 & 0.100 & 0.155 &  & 0.005 & 0.067 & 0.067 & 0.105 \\
 &  & 200 &  & -0.005 & 0.021 & 0.021 & 0.033 &  & 0.013 & 0.068 & 0.070 & 0.106 &  & 0.003 & 0.045 & 0.046 & 0.071 \\
\midrule
\multirow{9}{*}{3} & \multirow{3}{*}{OLS} & 50 &  & 0.121 & 0.040 & 0.127 & 0.063 &  & -0.295 & 0.147 & 0.329 & 0.226 &  & -0.151 & 0.125 & 0.196 & 0.201 \\
 &  & 100 &  & 0.100 & 0.026 & 0.103 & 0.039 &  & -0.264 & 0.097 & 0.281 & 0.149 &  & -0.152 & 0.078 & 0.171 & 0.124 \\
 &  & 200 &  & 0.101 & 0.018 & 0.102 & 0.027 &  & -0.269 & 0.066 & 0.277 & 0.107 &  & -0.151 & 0.055 & 0.161 & 0.089 \\
\noalign{\vskip 0.5em}
 & \multirow{3}{*}{G2SLS} & 50 &  & 0.060 & 0.052 & 0.079 & 0.082 &  & -0.140 & 0.168 & 0.219 & 0.264 &  & -0.067 & 0.136 & 0.151 & 0.214 \\
 &  & 100 &  & 0.033 & 0.033 & 0.047 & 0.052 &  & -0.086 & 0.111 & 0.140 & 0.166 &  & -0.050 & 0.089 & 0.102 & 0.138 \\
 &  & 200 &  & 0.032 & 0.022 & 0.039 & 0.034 &  & -0.086 & 0.075 & 0.114 & 0.119 &  & -0.047 & 0.061 & 0.077 & 0.098 \\
\noalign{\vskip 0.5em}
 & \multirow{3}{*}{G3SLS} & 50 &  & -0.017 & 0.075 & 0.077 & 0.109 &  & 0.052 & 0.220 & 0.226 & 0.340 &  & 0.026 & 0.140 & 0.142 & 0.208 \\
 &  & 100 &  & -0.007 & 0.041 & 0.041 & 0.062 &  & 0.023 & 0.129 & 0.131 & 0.198 &  & 0.007 & 0.086 & 0.086 & 0.130 \\
 &  & 200 &  & -0.004 & 0.028 & 0.028 & 0.043 &  & 0.012 & 0.090 & 0.091 & 0.138 &  & 0.005 & 0.059 & 0.059 & 0.093 \\
\bottomrule
\end{tabular}
\begin{tablenotes}[para]
\footnotesize
\setstretch{1}
\item \textit{Notes:} This table reports Monte Carlo results under unobserved homophily. The parameter $m$ controls the intensity of homophilous link formation and is common across individuals. Reported statistics include bias, standard deviation (SD), root mean squared error (RMSE), and inter-quantile range (IQR) for the peer effect ($\beta$=0.7), contextual ($\delta=1$), and direct ($\gamma=1$) Effects.
\end{tablenotes}
\end{threeparttable}
\normalsize
\end{table}
\end{landscape}

\begin{landscape}   
    \begin{table}[!htbp]
\centering
\small
\begin{threeparttable}
\caption{Estimator Performance under Unobserved Homophily (Design 3)}
\label{tab:design3_homophily_test2}
\begin{tabular}{ccc c cccc c cccc c cccc}
\toprule
 &  &  &  &\multicolumn{4}{c}{Peer effects} &  &\multicolumn{4}{c}{Contextual effects} &  &\multicolumn{4}{c}{Direct effects} \\
\cmidrule(lr){5-8} \cmidrule(lr){10-13} \cmidrule(lr){15-18}
$m$ & Estimator & $n$ &  & Bias & SD & RMSE & IQR &  & Bias & SD & RMSE & IQR &  & Bias & SD & RMSE & IQR \\
\midrule
\multirow{9}{*}{1} & \multirow{3}{*}{OLS} & 50 &  & 0.076 & 0.032 & 0.083 & 0.051 &  & -0.199 & 0.114 & 0.229 & 0.181 &  & -0.115 & 0.083 & 0.142 & 0.134 \\
 &  & 100 &  & 0.072 & 0.024 & 0.076 & 0.038 &  & -0.176 & 0.083 & 0.195 & 0.131 &  & -0.094 & 0.054 & 0.108 & 0.084 \\
 &  & 200 &  & 0.108 & 0.026 & 0.111 & 0.041 &  & -0.206 & 0.067 & 0.216 & 0.102 &  & -0.076 & 0.035 & 0.084 & 0.055 \\
\noalign{\vskip 0.5em}
 & \multirow{3}{*}{G2SLS} & 50 &  & 0.017 & 0.038 & 0.042 & 0.057 &  & -0.045 & 0.123 & 0.131 & 0.197 &  & -0.026 & 0.089 & 0.093 & 0.138 \\
 &  & 100 &  & 0.010 & 0.028 & 0.030 & 0.043 &  & -0.023 & 0.089 & 0.092 & 0.142 &  & -0.011 & 0.058 & 0.059 & 0.093 \\
 &  & 200 &  & 0.033 & 0.030 & 0.044 & 0.046 &  & -0.064 & 0.074 & 0.097 & 0.119 &  & -0.021 & 0.038 & 0.043 & 0.060 \\
\noalign{\vskip 0.5em}
 & \multirow{3}{*}{G3SLS} & 50 &  & -0.012 & 0.048 & 0.050 & 0.074 &  & 0.030 & 0.147 & 0.150 & 0.238 &  & 0.006 & 0.100 & 0.101 & 0.149 \\
 &  & 100 &  & -0.008 & 0.033 & 0.034 & 0.052 &  & 0.021 & 0.100 & 0.102 & 0.156 &  & 0.008 & 0.063 & 0.064 & 0.097 \\
 &  & 200 &  & -0.004 & 0.035 & 0.035 & 0.055 &  & 0.008 & 0.084 & 0.085 & 0.138 &  & 0.005 & 0.042 & 0.042 & 0.067 \\
\midrule
\multirow{9}{*}{3} & \multirow{3}{*}{OLS} & 50 &  & 0.118 & 0.038 & 0.124 & 0.059 &  & -0.302 & 0.144 & 0.335 & 0.228 &  & -0.178 & 0.126 & 0.218 & 0.199 \\
 &  & 100 &  & 0.102 & 0.028 & 0.105 & 0.045 &  & -0.247 & 0.097 & 0.265 & 0.151 &  & -0.132 & 0.075 & 0.152 & 0.119 \\
 &  & 200 &  & 0.155 & 0.031 & 0.159 & 0.048 &  & -0.298 & 0.081 & 0.309 & 0.126 &  & -0.109 & 0.049 & 0.119 & 0.077 \\
\noalign{\vskip 0.5em}
 & \multirow{3}{*}{G2SLS} & 50 &  & 0.054 & 0.050 & 0.074 & 0.077 &  & -0.134 & 0.172 & 0.218 & 0.257 &  & -0.082 & 0.144 & 0.166 & 0.227 \\
 &  & 100 &  & 0.036 & 0.035 & 0.051 & 0.054 &  & -0.085 & 0.110 & 0.139 & 0.173 &  & -0.045 & 0.083 & 0.094 & 0.130 \\
 &  & 200 &  & 0.097 & 0.037 & 0.103 & 0.057 &  & -0.184 & 0.092 & 0.206 & 0.142 &  & -0.065 & 0.054 & 0.085 & 0.082 \\
\noalign{\vskip 0.5em}
 & \multirow{3}{*}{G3SLS} & 50 &  & -0.017 & 0.069 & 0.071 & 0.105 &  & 0.048 & 0.219 & 0.224 & 0.332 &  & 0.016 & 0.147 & 0.148 & 0.232 \\
 &  & 100 &  & -0.009 & 0.044 & 0.045 & 0.067 &  & 0.027 & 0.132 & 0.134 & 0.198 &  & 0.012 & 0.082 & 0.082 & 0.124 \\
 &  & 200 &  & -0.007 & 0.048 & 0.048 & 0.075 &  & 0.015 & 0.116 & 0.117 & 0.187 &  & 0.007 & 0.056 & 0.057 & 0.089 \\
\bottomrule
\end{tabular}
\begin{tablenotes}[para]
\footnotesize
\setstretch{1}
\item \textit{Notes:} This table reports Monte Carlo results under unobserved homophily. The parameter $m$ controls the intensity of homophilous link formation and is common across individuals. Reported statistics include bias, standard deviation (SD), root mean squared error (RMSE), and inter-quantile range (IQR) for the peer effect ($\beta$=0.7), contextual ($\delta=1$), and direct ($\gamma=1$) Effects.
\end{tablenotes}
\end{threeparttable}
\normalsize
\end{table}
\end{landscape}

\subsection{Monte Carlo Performance - Step 2 (2SLS)}\label{2SLS_MC}

In this section, we present Monte Carlo evidence of the performance of the by-product estimator in Step 2 in Section \ref{est} for all three designs described in Section \ref{mc} in the main manuscript. In particular, Tables \ref{tab:design1_heterogeneity_2sls}, \ref{tab:design2_misclassification_2sls}, and \ref{tab:design3_homophily_2sls} reproduce Tables \ref{tab:design1_heterogeneity_main}, \ref{tab:design2_misclassification_main}, and \ref{tab:design3_homophily_main} in Section \ref{est} in the main manuscript, but with the efficient G3SLS estimator replaced by the 2SLS estimator in Step 2 for designs 1, 2, and 3 respectively.

As expected, the 2SLS seems to show numerical evidence of its consistency across designs, but it displays a larger Monte Carlo standard deviation than the proposed G3SLS for all parameters of interest in Designs 1 and 3 across all scenarios. As for Design 2, Table \ref{tab:design2_misclassification_2sls} below displays a lower Monte Carlo standard deviation as well as RMSE across all sample sizes for the contextual effect 2SLS estimator, in comparison to the corresponding G3SLS reported in Table \ref{tab:design2_misclassification_main}. 

\begin{landscape}   
    \begin{table}[!htbp]
\centering
\small
\begin{threeparttable}
\caption{Estimator Performance under Unobserved Degree Heterogeneity (Design 1)}
\label{tab:design1_heterogeneity_2sls}
\begin{tabular}{ccc c cccc c cccc c cccc}
\toprule
 &  &  &  &\multicolumn{4}{c}{Peer effects} &  &\multicolumn{4}{c}{Contextual effects} &  &\multicolumn{4}{c}{Direct effects} \\
\cmidrule(lr){5-8} \cmidrule(lr){10-13} \cmidrule(lr){15-18}
$m$ & Estimator & $n$ &  & Bias & SD & RMSE & IQR &  & Bias & SD & RMSE & IQR &  & Bias & SD & RMSE & IQR \\
\midrule
\multirow{9}{*}{10} & \multirow{3}{*}{OLS} & 50 &  & 0.359 & 0.046 & 0.361 & 0.072 &  & -1.119 & 0.875 & 1.420 & 1.357 &  & -0.449 & 0.616 & 0.762 & 0.983 \\
 &  & 100 &  & 0.352 & 0.031 & 0.353 & 0.049 &  & -1.104 & 0.576 & 1.245 & 0.897 &  & -0.441 & 0.417 & 0.607 & 0.643 \\
 &  & 200 &  & 0.348 & 0.022 & 0.349 & 0.033 &  & -1.110 & 0.404 & 1.181 & 0.633 &  & -0.472 & 0.287 & 0.552 & 0.442 \\
\noalign{\vskip 0.5em}
 & \multirow{3}{*}{G2SLS} & 50 &  & 0.691 & 0.317 & 0.760 & 0.306 &  & -1.612 & 2.233 & 2.754 & 2.917 &  & -0.588 & 1.087 & 1.235 & 1.502 \\
 &  & 100 &  & 0.742 & 0.268 & 0.789 & 0.278 &  & -2.143 & 1.786 & 2.789 & 2.500 &  & -0.816 & 0.811 & 1.150 & 1.158 \\
 &  & 200 &  & 0.810 & 0.272 & 0.854 & 0.289 &  & -2.532 & 1.551 & 2.969 & 2.203 &  & -1.034 & 0.714 & 1.256 & 0.957 \\
\noalign{\vskip 0.5em}
 & \multirow{3}{*}{2SLS} & 50 &  & 0.192 & 1.524 & 1.535 & 1.788 &  & 0.488 & 8.053 & 8.065 & 9.990 &  & 0.624 & 1.831 & 1.934 & 2.711 \\
 &  & 100 &  & 0.176 & 1.362 & 1.373 & 1.664 &  & 0.610 & 5.756 & 5.786 & 7.580 &  & 0.652 & 1.271 & 1.428 & 1.904 \\
 &  & 200 &  & -0.051 & 1.166 & 1.167 & 1.348 &  & 0.559 & 3.749 & 3.789 & 4.954 &  & 0.625 & 0.894 & 1.090 & 1.350 \\
\midrule
\multirow{9}{*}{12} & \multirow{3}{*}{OLS} & 50 &  & 0.361 & 0.047 & 0.364 & 0.074 &  & -1.132 & 1.043 & 1.539 & 1.628 &  & -0.454 & 0.736 & 0.865 & 1.163 \\
 &  & 100 &  & 0.354 & 0.031 & 0.355 & 0.050 &  & -1.110 & 0.687 & 1.305 & 1.078 &  & -0.441 & 0.501 & 0.667 & 0.776 \\
 &  & 200 &  & 0.350 & 0.022 & 0.351 & 0.034 &  & -1.118 & 0.481 & 1.217 & 0.749 &  & -0.476 & 0.344 & 0.588 & 0.533 \\
\noalign{\vskip 0.5em}
 & \multirow{3}{*}{G2SLS} & 50 &  & 0.732 & 0.304 & 0.793 & 0.305 &  & -1.770 & 2.619 & 3.160 & 3.341 &  & -0.627 & 1.266 & 1.413 & 1.758 \\
 &  & 100 &  & 0.795 & 0.236 & 0.829 & 0.266 &  & -2.199 & 1.952 & 2.940 & 2.632 &  & -0.840 & 0.914 & 1.241 & 1.371 \\
 &  & 200 &  & 0.827 & 0.204 & 0.852 & 0.240 &  & -2.543 & 1.541 & 2.974 & 2.249 &  & -1.044 & 0.732 & 1.275 & 0.997 \\
\noalign{\vskip 0.5em}
 & \multirow{3}{*}{2SLS} & 50 &  & 0.219 & 1.551 & 1.565 & 1.814 &  & 0.240 & 9.574 & 9.573 & 11.903 &  & 0.632 & 2.184 & 2.273 & 3.178 \\
 &  & 100 &  & 0.193 & 1.380 & 1.393 & 1.711 &  & 0.222 & 6.583 & 6.584 & 8.660 &  & 0.639 & 1.474 & 1.606 & 2.204 \\
 &  & 200 &  & 0.069 & 1.244 & 1.245 & 1.422 &  & 0.263 & 4.342 & 4.348 & 5.820 &  & 0.608 & 1.036 & 1.201 & 1.522 \\
\bottomrule
\end{tabular}
\begin{tablenotes}[para]
\footnotesize
\setstretch{1}
\item \textit{Notes:} This table reports Monte Carlo results under unobserved degree heterogeneity. The parameter $m$ controls the intensity of heterogeneity in node degrees and is common across individuals. Reported statistics include bias, standard deviation (SD), root mean squared error (RMSE), and inter-quantile range (IQR) for the peer effect ($\beta$=0.7), contextual ($\delta=1$), and direct ($\gamma=1$) Effects. 2SLS corresponds to the estimator of the second stage discussed in section \ref{est}.
\end{tablenotes}
\end{threeparttable}
\normalsize
\end{table}
\end{landscape}

\begin{landscape}   
    \begin{table}[!htbp]
\centering
\small
\begin{threeparttable}
\caption{Estimator Robustness to Network Link Misclassification (Design 2)}
\label{tab:design2_misclassification_2sls}
\begin{tabular}{ccc c cccc c cccc c cccc}
\toprule
 &  &  &  &\multicolumn{4}{c}{Peer effects} &  &\multicolumn{4}{c}{Contextual effects} &  &\multicolumn{4}{c}{Direct effects} \\
\cmidrule(lr){5-8} \cmidrule(lr){10-13} \cmidrule(lr){15-18}
$\tau$ & Estimator & $n$ &  & Bias & SD & RMSE & IQR &  & Bias & SD & RMSE & IQR &  & Bias & SD & RMSE & IQR \\
\midrule
\multirow{9}{*}{0.01} & \multirow{3}{*}{OLS} & 50 &  & -0.387 & 0.129 & 0.408 & 0.201 &  & -0.425 & 0.314 & 0.528 & 0.497 &  & 0.411 & 0.220 & 0.466 & 0.346 \\
 &  & 100 &  & -0.525 & 0.087 & 0.533 & 0.137 &  & -0.628 & 0.153 & 0.646 & 0.244 &  & 0.203 & 0.098 & 0.225 & 0.154 \\
 &  & 200 &  & -0.555 & 0.062 & 0.559 & 0.098 &  & -0.778 & 0.091 & 0.783 & 0.149 &  & 0.092 & 0.047 & 0.103 & 0.073 \\
\noalign{\vskip 0.5em}
 & \multirow{3}{*}{G2SLS} & 50 &  & -0.538 & 0.222 & 0.582 & 0.305 &  & -0.152 & 0.469 & 0.493 & 0.718 &  & 0.475 & 0.251 & 0.538 & 0.387 \\
 &  & 100 &  & -0.649 & 0.133 & 0.663 & 0.184 &  & -0.457 & 0.215 & 0.505 & 0.326 &  & 0.237 & 0.112 & 0.262 & 0.179 \\
 &  & 200 &  & -0.688 & 0.092 & 0.695 & 0.140 &  & -0.623 & 0.122 & 0.635 & 0.191 &  & 0.114 & 0.051 & 0.125 & 0.080 \\
\noalign{\vskip 0.5em}
 & \multirow{3}{*}{2SLS} & 50 &  & 0.730 & 0.496 & 0.882 & 0.736 &  & 2.422 & 1.989 & 3.134 & 2.780 &  & 0.054 & 0.104 & 0.117 & 0.155 \\
 &  & 100 &  & 0.262 & 0.205 & 0.333 & 0.329 &  & 0.974 & 0.656 & 1.174 & 0.969 &  & 0.019 & 0.061 & 0.064 & 0.096 \\
 &  & 200 &  & 0.055 & 0.184 & 0.192 & 0.287 &  & 0.217 & 0.346 & 0.408 & 0.544 &  & 0.005 & 0.039 & 0.040 & 0.061 \\
\midrule
\multirow{9}{*}{0.05} & \multirow{3}{*}{OLS} & 50 &  & -0.387 & 0.129 & 0.408 & 0.201 &  & -0.425 & 0.314 & 0.528 & 0.497 &  & 0.411 & 0.220 & 0.466 & 0.346 \\
 &  & 100 &  & -0.525 & 0.087 & 0.533 & 0.137 &  & -0.628 & 0.153 & 0.646 & 0.244 &  & 0.203 & 0.098 & 0.225 & 0.154 \\
 &  & 200 &  & -0.555 & 0.062 & 0.559 & 0.098 &  & -0.778 & 0.091 & 0.783 & 0.149 &  & 0.092 & 0.047 & 0.103 & 0.073 \\
\noalign{\vskip 0.5em}
 & \multirow{3}{*}{G2SLS} & 50 &  & -0.538 & 0.222 & 0.582 & 0.305 &  & -0.152 & 0.469 & 0.493 & 0.718 &  & 0.475 & 0.251 & 0.538 & 0.387 \\
 &  & 100 &  & -0.649 & 0.133 & 0.663 & 0.184 &  & -0.457 & 0.215 & 0.505 & 0.326 &  & 0.237 & 0.112 & 0.262 & 0.179 \\
 &  & 200 &  & -0.688 & 0.092 & 0.695 & 0.140 &  & -0.623 & 0.122 & 0.635 & 0.191 &  & 0.114 & 0.051 & 0.125 & 0.080 \\
\noalign{\vskip 0.5em}
 & \multirow{3}{*}{2SLS} & 50 &  & 0.665 & 0.516 & 0.842 & 0.753 &  & 2.524 & 2.004 & 3.222 & 2.862 &  & 0.075 & 0.125 & 0.145 & 0.188 \\
 &  & 100 &  & 0.210 & 0.219 & 0.303 & 0.345 &  & 1.039 & 0.678 & 1.241 & 1.068 &  & 0.032 & 0.068 & 0.075 & 0.108 \\
 &  & 200 &  & 0.019 & 0.189 & 0.190 & 0.303 &  & 0.265 & 0.357 & 0.444 & 0.559 &  & 0.011 & 0.041 & 0.042 & 0.063 \\
\bottomrule
\end{tabular}
\begin{tablenotes}[para]
\footnotesize
\setstretch{1}
\item \textit{Notes:} This table reports Monte Carlo results under network link misclassification. The parameter $\tau$ controls the intensity of misclassification and is common across individuals. Reported statistics include bias, standard deviation (SD), root mean squared error (RMSE), and inter-quantile range (IQR) for the peer effect ($\beta$=0.7), contextual ($\delta=1$), and direct ($\gamma=1$) Effects. 2SLS corresponds to the estimator of the second stage discussed in section \ref{est}.
\end{tablenotes}
\end{threeparttable}
\normalsize
\end{table}
\end{landscape}

\begin{landscape}   
    \begin{table}[!htbp]
\centering
\small
\begin{threeparttable}
\caption{Estimator Performance under Unobserved Homophily (Design 3)}
\label{tab:design3_homophily_2sls}
\begin{tabular}{ccc c cccc c cccc c cccc}
\toprule
 &  &  &  &\multicolumn{4}{c}{Peer effects} &  &\multicolumn{4}{c}{Contextual effects} &  &\multicolumn{4}{c}{Direct effects} \\
\cmidrule(lr){5-8} \cmidrule(lr){10-13} \cmidrule(lr){15-18}
$m$ & Estimator & $n$ &  & Bias & SD & RMSE & IQR &  & Bias & SD & RMSE & IQR &  & Bias & SD & RMSE & IQR \\
\midrule
\multirow{9}{*}{1} & \multirow{3}{*}{OLS} & 50 &  & 0.077 & 0.033 & 0.084 & 0.051 &  & -0.201 & 0.119 & 0.234 & 0.189 &  & -0.109 & 0.082 & 0.137 & 0.129 \\
 &  & 100 &  & 0.071 & 0.022 & 0.074 & 0.035 &  & -0.188 & 0.079 & 0.204 & 0.127 &  & -0.107 & 0.056 & 0.121 & 0.091 \\
 &  & 200 &  & 0.080 & 0.019 & 0.082 & 0.032 &  & -0.181 & 0.062 & 0.191 & 0.099 &  & -0.088 & 0.037 & 0.096 & 0.060 \\
\noalign{\vskip 0.5em}
 & \multirow{3}{*}{G2SLS} & 50 &  & 0.019 & 0.038 & 0.043 & 0.060 &  & -0.051 & 0.127 & 0.137 & 0.195 &  & -0.024 & 0.088 & 0.091 & 0.141 \\
 &  & 100 &  & 0.010 & 0.026 & 0.028 & 0.040 &  & -0.026 & 0.084 & 0.088 & 0.139 &  & -0.013 & 0.060 & 0.061 & 0.094 \\
 &  & 200 &  & 0.015 & 0.022 & 0.027 & 0.036 &  & -0.034 & 0.065 & 0.073 & 0.104 &  & -0.018 & 0.039 & 0.043 & 0.062 \\
\noalign{\vskip 0.5em}
 & \multirow{3}{*}{2SLS} & 50 &  & -0.011 & 0.053 & 0.054 & 0.082 &  & 0.028 & 0.159 & 0.161 & 0.247 &  & 0.009 & 0.105 & 0.105 & 0.169 \\
 &  & 100 &  & -0.004 & 0.032 & 0.032 & 0.050 &  & 0.013 & 0.101 & 0.101 & 0.153 &  & 0.003 & 0.069 & 0.069 & 0.109 \\
 &  & 200 &  & -0.005 & 0.029 & 0.029 & 0.045 &  & 0.010 & 0.077 & 0.078 & 0.124 &  & -0.001 & 0.045 & 0.045 & 0.073 \\
\midrule
\multirow{9}{*}{3} & \multirow{3}{*}{OLS} & 50 &  & 0.118 & 0.040 & 0.124 & 0.064 &  & -0.304 & 0.152 & 0.340 & 0.236 &  & -0.168 & 0.124 & 0.209 & 0.198 \\
 &  & 100 &  & 0.098 & 0.026 & 0.101 & 0.040 &  & -0.258 & 0.095 & 0.275 & 0.149 &  & -0.149 & 0.077 & 0.168 & 0.118 \\
 &  & 200 &  & 0.113 & 0.023 & 0.115 & 0.037 &  & -0.256 & 0.075 & 0.267 & 0.122 &  & -0.124 & 0.052 & 0.135 & 0.081 \\
\noalign{\vskip 0.5em}
 & \multirow{3}{*}{G2SLS} & 50 &  & 0.056 & 0.050 & 0.075 & 0.078 &  & -0.143 & 0.174 & 0.225 & 0.265 &  & -0.078 & 0.137 & 0.157 & 0.209 \\
 &  & 100 &  & 0.032 & 0.034 & 0.046 & 0.055 &  & -0.082 & 0.110 & 0.137 & 0.171 &  & -0.047 & 0.086 & 0.098 & 0.135 \\
 &  & 200 &  & 0.048 & 0.027 & 0.055 & 0.041 &  & -0.109 & 0.080 & 0.135 & 0.123 &  & -0.053 & 0.056 & 0.077 & 0.091 \\
\noalign{\vskip 0.5em}
 & \multirow{3}{*}{2SLS} & 50 &  & -0.011 & 0.076 & 0.077 & 0.116 &  & 0.028 & 0.232 & 0.234 & 0.347 &  & 0.012 & 0.145 & 0.146 & 0.232 \\
 &  & 100 &  & -0.003 & 0.042 & 0.042 & 0.064 &  & 0.016 & 0.133 & 0.134 & 0.205 &  & 0.002 & 0.086 & 0.086 & 0.136 \\
 &  & 200 &  & -0.005 & 0.038 & 0.038 & 0.058 &  & 0.011 & 0.101 & 0.102 & 0.157 &  & 0.000 & 0.059 & 0.059 & 0.093 \\
\bottomrule
\end{tabular}
\begin{tablenotes}[para]
\footnotesize
\setstretch{1}
\item \textit{Notes:} This table reports Monte Carlo results under unobserved homophily. The parameter $m$ controls the intensity of homophilous link formation and is common across individuals. Reported statistics include bias, standard deviation (SD), root mean squared error (RMSE), and inter-quantile range (IQR) for the peer effect ($\beta$=0.7), contextual ($\delta=1$), and direct ($\gamma=1$) Effects. 2SLS corresponds to the estimator of the second stage discussed in section \ref{est}.
\end{tablenotes}
\end{threeparttable}
\normalsize
\end{table}
\end{landscape}

\newpage
\renewcommand\thesection{Appendix \Alph{section}}
\setcounter{section}{4}
\section{Real Data Application Details\label{Appendix_E}}
\renewcommand\thesection{\Alph{section}}
\subsection{Multiplex Network Data}

\noindent The multiplex network composed of two different types of connections among authors and editors (scholars hereafter) can be constructed using their co-authorship information, research interests, education, and employment history. In particular, the \emph{Co-author} layer is made up of connections (edges) between scholars $l$ and $k$ if they co-authored a paper together. The \emph{Alumni} layer is made of edges between scholars $l$ and $k$ if both obtained their Ph.D. from the same institution during the same time window. Details of how these two types of academic ties among scholars are constructed can be found in \ref{scholar_network_constructions} below.

Given that the identities of article authors are known, these two categories of professional connections can be merged and analyzed at the article level. This means articles $i$ and $j$ are connected in networks $\mathbf{W}$ or $\mathbf{W}_{0}$ if at least one of the authors of article $i$ shares a co-authorship or alumni connection with at least one of the authors of article $j$. Table \ref{empt2} presents relevant network statistics for the implied articles' \emph{Co-authors} ($\mathbf{W}=[\mathbf{w}_{i,j}]$) and \emph{Alumni} ($\mathbf{W}_{0}=[\mathbf{w}_{0;i,j}]$) networks. These networks are formed cumulatively; e.g., the 459 articles (nodes) in 2001 include the 221 articles published in 2000 and so on until all 1,628 articles are accounted for in 2006.

The two layers display low densities, which is a common feature of empirical social networks \citep{Paula2017}, with a somewhat small number of connected components for the \emph{Alumni} network and a large number for the \emph{Co-authors} network. Although these networks are formed at the article level, and direct comparisons with classic collaboration network analysis, as in \cite{Goyal2006}, are not possible, they still display \textit{small-world} properties; i.e., the levels of clustering (transitivity) are high, while the average distance (average shortest path) is short \citep[see i.e., ][ for a formal definition of a \textit{small-world}]{Humphries2008}.

\begin{table}
\centering
\small
\caption{Network Statistics}
\begin{tabular}{lrrrrrrr}
\hline
               Statistic &  2000 &   2001 &   2002 &   2003 &    2004 &    2005 &    2006 \\
\hline
 Nodes &   221 &    459 &    729 &    961 &    1,187 &    1,412 &    1,628 \\ \hline
       \textbf{Co-authors Network} &       &        &        &        &         &         &         \\
                   Edges &    28 &    189 &    674 &   1,217 &    2,120 &    3,302 &    4,732 \\
          Average Degree &  0.25 &   0.82 &   1.85 &   2.53 &    3.57 &    4.68 &    5.81 \\
                  Density &  0.00 &   0.00 &   0.00 &   0.00 &     0.00 &    0.00 &   0.00 \\
            Transitivity &  0.82 &   0.83 &   0.71 &   0.66 &    0.65 &    0.63 &     0.6 \\
        Average Distance &  1.04 &    1.10 &   1.25 &   1.57 &    1.92 &    2.56 &    3.32 \\
              Components &   199 &    351 &    478 &    549 &     611 &     651 &     682 \\ \hline
      \textbf{Alumni Network} &       &        &        &        &         &         &         \\
                   Edges &   795 &   3,392 &   8,838 &  15,424 &   22,923 &   33,491 &   45,276 \\
          Average Degree &  7.19 &  14.78 &  24.25 &   32.1 &   38.62 &   47.44 &   55.62 \\
                  Density &  0.03 &   0.03 &   0.03 &   0.03 &    0.03 &    0.03 &    0.03 \\
            Transitivity &  0.57 &   0.57 &   0.55 &   0.55 &    0.55 &    0.54 &    0.53 \\
        Average Distance &  3.14 &   3.25 &   2.95 &    2.80 &    2.81 &    2.73 &     2.70 \\
              Components &    49 &     44 &     48 &     54 &      56 &      62 &      58 \\
\hline
\end{tabular}

\vspace{0.4cm}

\begin{minipage}{0.9\textwidth}
Note: `Nodes' refers to the total number of articles including all previous years since 2000. `Edges' refers to the total number of pair-wise connections among nodes. `Average Degree' represents the average number of edges connected to each node, while `Transitivity' presents the fraction of all possible triangles present in each network. `Density' is defined as the ratio of the total number of observed edges to the total number of all possible edges in these networks. `Components' displays the total number of connected components (subgraphs), while `Average Distance' refers to the average number of steps along the shortest paths for all possible pairs of network nodes.
\end{minipage}
\label{empt2}
\end{table}

\subsubsection{Scholars Network Constructions}\label{scholar_network_constructions}\smallskip

\noindent \textbf{\emph{Co-authors}} -- A co-authorship connection is established between a scholar and all individuals with whom they have co-authored at least one article during the seven-year observation window. For instance, suppose a researcher co-authors a paper in the \emph{American Economic Review} with four collaborators in 2003 and later publishes another paper in the \emph{Quarterly Journal of Economics} in 2004 together with a different researcher who did not participate in the previous publication.  These connections remain active for the years following each publication within the observation window, meaning that earlier collaborations continue to generate links as long as they fall within the seven-year timeframe considered.  

\noindent \textbf{\emph{Alumni}} -- This type of link connects scholars who obtained their doctoral degrees from the same institution within a three-year period. For instance, two economists who both earned their Ph.D. at Yale University around 1992 would be considered to share an alumni connection according to this criterion.

Further details on how connections are formed and additional examples can be found in \cite{netivreg_g3sls} and \cite{EstradaDingRosales2025}.

\begin{table}[H]
\centering
\small
\caption{Estimation Results for Social and Direct Effects using the OLS Estimator}

\begin{tabular}{lccccc}
\hline
& \multicolumn{5}{c}{\emph{Co-author Network}}  \\

\cline{2-6}

& 2002 & 2003 & 2004 & 2005 & 2006  \\
\hline
     Peer Effects ($\widehat{\beta}$) &    0.338*** &    0.396*** &    0.462*** &    0.458*** &    0.462***    \\
                   &  (0.069)      &  (0.054)      &  (0.049)      &  (0.046)       &   (0.040)    \\ \hline
 Contextual Effects ($\boldsymbol{\widehat{\delta}}$)& & & &  \\
 
 \hspace{0.2cm} \texttt{Editor} &    0.178      &    0.083       &       -0.001      &   -0.069       &    -0.030    \\
                   &   (0.280)       &  (0.176)       &  (0.154)     &  (0.148)      &  (0.134) \\
     \hspace{0.2cm}  \texttt{Different Gender} &   -0.125      &    0.014       &   -0.068      &   -0.015       &   -0.035 \\
                   &  (0.207)       &  (0.139)       &  (0.125)       &  (0.118)      &  (0.108)    \\ \hline
                   
Direct Effects ($\boldsymbol{\widehat{\gamma}}$) & & & & \\

   \hspace{0.2cm} \texttt{Editor} &    0.023 &   -0.036       &   -0.032      &    0.039      &     0.070 \\
                   &  (0.141)      &  (0.123)       &  (0.129)       &  (0.128)      &  (0.122)   \\
       \hspace{0.2cm}  \texttt{Different Gender} &    0.232* &     0.190* &     0.190** &    0.143* &    0.132  \\
                   &   (0.130)       &   (0.110)      &  (0.091)      &  (0.081)       &   (0.080)    \\
           \hspace{0.2cm} \texttt{Number of Pages}  &    0.028*** &    0.025*** &    0.021*** &    0.018*** &    0.017***  \\
                   &  (0.004)       &  (0.004)      &  (0.003)      &  (0.003)       &  (0.003)    \\
         \hspace{0.2cm}   \texttt{Number of Authors} &    0.076       &    0.091* &    0.077* &    0.086** &    0.067** \\
                   &  (0.057)       &  (0.047)     &  (0.041)       &  (0.036)      &   (0.030)  \\
      \hspace{0.2cm}   \texttt{Number of References} &     0.010*** &     0.010*** &    0.009*** &     0.010*** &    0.011***  \\
                   &  (0.002)      &  (0.002)       &  (0.002)      &  (0.002)       &  (0.001)   \\
 \hspace{0.2cm}  \texttt{Isolated} &    0.975*** &    1.192*** &    1.319*** &    1.276*** &    1.284***  \\
                   &  (0.129)&       (0.108)     &  (0.099)      &  (0.088)     &  (0.084)   \\
                   
\hline

                 $n$ &      729       &      961      &     1187       &     1412      &     1628     \\
                $R^{2}$ &    0.304      &    0.328       &    0.337       &    0.319       &    0.314 \\
\hline
\end{tabular}

\vspace{0.4cm}

\begin{minipage}{0.9\textwidth}
Note: Estimation results using the OLS estimator. Standard errors are in parenthesis and are clustered at the specific network's components. Stars follow the key: * $p$ $<$ 0.10, ** $p$ $<$ 0.05, and *** $p$ $<$ 0.01, where $p$ stands for $p$-values. $R^2$ are calculated as the squared of the sample correlation coefficients between the observed outcomes and their fitted values.  All specifications include indicator variables for Journal, Year and Alumni Network Components.
\end{minipage}
\label{empt4}
\end{table}

\begin{table}[H]
\centering
\small
\caption{Estimation Results for Social and Direct Effects using the G2SLS Estimator}

\begin{tabular}{lccccc}
\hline
& \multicolumn{5}{c}{\emph{Co-author Network}}  \\

\cline{2-6}

                   & 2002 & 2003 & 2004 & 2005 & 2006 \\
\hline
     Peer Effects ($\widehat{\beta}$) &    0.665*** &    0.604*** &    0.456*** &    0.505*** &      0.600***  \\ 
     &  (0.163)       &  (0.113)       &  (0.079)       &  (0.085)       &  (0.112) \\
     \hline
 Contextual Effects ($\boldsymbol{\widehat{\delta}}$) & & & & &  \\
 \hspace{0.2cm} \texttt{Editor} &    0.061      &     0.020      &    0.001      &   -0.077      &   -0.056    \\
                   &  (0.295)      &  (0.189)       &  (0.153)      &   (0.150)      &  (0.137)  \\
     \hspace{0.2cm}  \texttt{Different Gender} &   -0.198      &   -0.032      &   -0.067      &   -0.028       &   -0.075  \\
                   &  (0.217)      &   (0.150)       &  (0.131)     &  (0.123)       &  (0.121)   \\ \hline
Direct Effects ($\boldsymbol{\widehat{\gamma}}$)  & & & & &  \\

   \hspace{0.2cm} \texttt{Editor} &   -0.032      &   -0.041      &   -0.032      &    0.038       &     0.070 \\
                   &  (0.156)       &  (0.126)      &  (0.129)      &  (0.129)     &  (0.123)     \\
   \hspace{0.2cm}  \texttt{Different Gender} &    0.244* &    0.192* &     0.190** &    0.143* &    0.134* \\
                   &  (0.131)       &  (0.111)      &  (0.091)       &  (0.081)      &  (0.081)   \\
 \hspace{0.2cm} \texttt{Number of Pages} &    0.027*** &    0.024*** &    0.021*** &    0.018*** &    0.016*** \\
                   &  (0.005)       &  (0.004)      &  (0.003)       &  (0.003)      &  (0.003)    \\
\hspace{0.2cm}   \texttt{Number of Authors} &    0.082      &    0.092** &    0.077* &    0.085** &    0.065** \\
                   &  (0.056)       &  (0.046)       &  (0.041)      &  (0.036)       &   (0.030)    \\
    \hspace{0.2cm}   \texttt{Number of References} &    0.008*** &    0.009*** &    0.009*** &     0.010*** &     0.010***  \\
                   &  (0.003)       &  (0.002)       &  (0.002)       &  (0.002)      &  (0.001)   \\
  \hspace{0.2cm}  \texttt{Isolated} &    2.139*** &    1.936*** &    1.299*** &    1.447*** &    1.788***  \\
                   &  (0.131)      &  (0.109)      &  (0.099)      &  (0.089)      &  (0.087)    \\

\hline
                 $n$ &      729      &      961     &     1187      &     1412      &     1628     \\
                $R^{2}$ &    0.286      &     0.320      &    0.337      &    0.319       &    0.311  \\
\hline
\end{tabular}

\vspace{0.4cm}

\begin{minipage}{0.9\textwidth}
Note: Standard errors are in parenthesis and are clustered at the specific network's components. Stars follow the key: * $p$ $<$ 0.10, ** $p$ $<$ 0.05, and *** $p$ $<$ 0.01, where $p$ stands for $p$-values. $R^2$ are calculated as the squared of the sample correlation coefficients between the observed outcomes and their fitted values. All specifications include indicator variables for Journal, Year and Alumni Network Components.
\end{minipage}
\label{empt5}
\end{table}

\subsection{Assessing Assumptions}\label{validate_test}

Since equation \eqref{E1}, Assumption \ref{A1}, and the multiplex data structure together imply an exclusion restriction on the adjacency matrix $\mathbf{W}_{0}$ in our empirical application, we perform an empirical assessment of this assumption. Because the exclusion restriction is fundamentally untestable, we provide heuristic evidence by estimating a linear projection of the outcome onto the set of regressors, including the endogenous peer-outcome term $\mathbf{W}\mathbf{y}$, augmented with the proposed instruments $\mathbf{W}_{0}\mathbf{X}$ and $\mathbf{W}_{0}^{2}\mathbf{X}$ using Ordinary Least Squares (OLS). Namely, we estimate the following linear projection:

\begin{eqnarray}\label{eqn_exrest}
   \texttt{log(citation8)} & = & \beta\textbf{W}\text{ }\texttt{log(citation8)}  +  \delta_1\textbf{W}\text{ }\texttt{Editor}   \\
   \nonumber
    & &{}+{} \delta_2\textbf{W}\text{ }\texttt{Different\hspace{0.8mm}Gender} +  \gamma_1\texttt{Editor} \\
    \nonumber
    & &{}+{} \gamma_2\text{ }\texttt{Different\hspace{0.8mm}Gender} +\gamma_3\text{ }\texttt{Number\hspace{0.8mm}Pages} +\gamma_4\text{ }\texttt{Number\hspace{0.8mm}Authors}   \\
    \nonumber
    & &{}+{} \gamma_5\text{ }\texttt{Number\hspace{0.8mm}References}\text{ }+
   \gamma_6\text{ }\texttt{Isolated}   \\
   \nonumber
    & &{}+{} \theta_1\textbf{W}_{0}\texttt{Editor} + \theta_2\textbf{W}_{0 }\texttt{Different\hspace{0.8mm}Gender}  \\ \nonumber
    & &{}+{} \theta_3\textbf{W}^2_{0}\text{ }\texttt{Editor} +\theta_4\textbf{W}^2_{0}\text{ }\texttt{Different\hspace{0.8mm}Gender} \\ \nonumber
    & &{}+{} \lambda_{r} + \lambda_{t} + \lambda_{0}+\texttt{error}\text{,}
   \nonumber
\end{eqnarray}

\noindent where $\mathbf{W}$ denotes the row-normalized \emph{Co-authorship} adjacency matrix and $\mathbf{W}_{0}$ denotes the row-normalized \emph{Alumni} adjacency matrix. 

It is important to emphasize that because $\mathbf{W}\mathbf{y}$ is endogenous, OLS yields inconsistent estimates of the true structural parameters ($\beta, \boldsymbol{\delta}, \boldsymbol{\gamma}$). Consequently, the estimated coefficients in equation \eqref{eqn_exrest} do not correspond to causal effects and must be interpreted strictly as linear projection pseudo-parameters. The sole purpose of this exercise is to assess empirically whether the instruments $\mathbf{W}_{0}\mathbf{X}$ and $\mathbf{W}_{0}^{2}\mathbf{X}$ possess significant residual explanatory power for the outcome once the other variables are linearly partialled out.

The estimation results for the linear projection in equation \eqref{eqn_exrest} are shown in Table \ref{tab:ols_coauthor_2002_2006}. The pseudo-parameter associated with the peer effect, $\mathbf{W}\times \texttt{log(citation8)}$, is positive and highly statistically significant across all years, with magnitudes that remain stable as the cumulative sample expands.

Among the direct effects, the pseudo-parameters for the number of pages, references, and the isolation indicator display consistently positive and statistically significant coefficients. The projection coefficient for \texttt{Different\hspace{0.8mm}Gender} is positive and significant in earlier samples, declining slightly as the sample size increases, while the coefficient on \texttt{Editor} is not statistically different from zero. Turning to the contextual variables constructed with the co-author network, the projection coefficients on $\mathbf{W}\times \texttt{Editor}$ and $\mathbf{W}\times \texttt{Different\hspace{0.8mm}Gender}$ are generally small and statistically insignificant. 

Regarding the proposed instruments, the pseudo-parameters associated with $\mathbf{W}_{0}\times\texttt{Editor}$, $\mathbf{W}_{0}\times\texttt{Different\hspace{0.8mm}Gender}$, $\mathbf{W}_{0}^2\times\texttt{Editor}$, and $\mathbf{W}_{0}^2\times\texttt{Different\hspace{0.8mm}Gender}$ are not statistically different from zero in most years. Although $\mathbf{W}_{0}^2\times\texttt{Different\hspace{0.8mm}Gender}$ is marginally significant in the earliest cumulative sample (2002), this significance vanishes as the sample expands from $n=729$ to $n=1628$. This pattern suggests that the initial rejection is not robust and is likely driven by higher sampling variability in the smaller sample, rather than by a stable correlation between $\texttt{log(citation8)}$ and the instruments. Empirically, the lack of significant residual explanatory power from these instruments in the linear projection provides heuristic support for the exclusion restriction of the alumni network in our data.

\subsection*{First-Stage Results}

To validate Assumption~\ref{A2}, equation~\eqref{E3} 
is estimated via OLS applied to the empirical setting 
described in Section~\ref{emp}. Again, $\mathbf{W}$ denotes the row-normalized \emph{Co-authorship} adjacency matrix, $\mathbf{W}_{0}$ denotes the 
row-normalized \emph{Alumni} adjacency matrix, and $\mathbf{S}$ contains only the outcome variables and those regressors for which contextual effects are calculated, i.e., $\mathbf{S}= 
[\texttt{log(citation8)},\ \texttt{Editor},\ \texttt{Different\hspace{0.8mm}Gender}]$ 
is the $n \times 3$ matrix collecting the outcome variable and the 
covariates. Equation~\eqref{E3} then specializes to

\begin{equation}\label{eqn:firststage}
    \mathbf{W} \mathbf{S} 
    = \mathbf{W}_{0}\mathbf{S}\,\widehat{\boldsymbol{\Pi}} 
    + \widehat{\mathbf{U}},
\end{equation}

\noindent where $\widehat{\boldsymbol{\Pi}}$ is the $3 \times 3$ matrix 
of estimated projection coefficients. Note that $\mathrm{rank}(\boldsymbol{\Pi}) = 3$ requires the two 
network layers to be sufficiently correlated so that the \emph{Alumni} network 
$\mathbf{W}_{0}$ provides relevant variation for the endogenous 
\emph{Co-authorship} network $\mathbf{W}$ across all columns of $\mathbf{S}$.

Since $\widehat{\boldsymbol{\Pi}}$ is obtained column by column via 
OLS, \eqref{eqn:firststage} amounts to three separate OLS regressions, one for each of $\mathbf{W}\times\texttt{log(citation8)}$, 
$\mathbf{W}\times\texttt{Editor}$, and 
$\mathbf{W}\times\texttt{DifferentGender}$. In all three cases, the 
regressors are identical, namely $\mathbf{W}_{0}\times\texttt{log(citation8)}$, 
$\mathbf{W}_{0}\times\texttt{Editor}$, and 
$\mathbf{W}_{0}\times\texttt{DifferentGender}$, corresponding to the 
columns of $\mathbf{W}_{0}\mathbf{S}$. Each regression is estimated 
over an expanding window for $t = 2002, 2003, 2004, 2005,$ and $2006$ 
using the cumulative sample described in Section~\ref{emp}, and standard errors clustered at the co-author network component level are 
reported throughout.

Table~\ref{tab:first_stage_G3SLS_coauthor} presents the results. The coefficients on $\mathbf{W}_{0}\times\texttt{log(citation8)}$ are positive and statistically significant across all years and all three panels. Furthermore, the $F$-statistics consistently reject the null hypothesis that all slope coefficients are jointly equal to zero. These findings provide empirical support for Assumption \ref{A2} holding in our data, i.e., the predetermined \emph{Alumni} network $\mathbf{W}_{0}$ is sufficiently correlated with the endogenous \emph{Co-authorship} network $\mathbf{W}$, so that $\mathrm{rank}(\boldsymbol{\Pi}) = 3$ holds in the sample, empirically satisfying the relevance condition required for the consistency of the proposed G3SLS estimator.

\newpage

\begin{landscape}   
    \begin{table}[!htbp]\centering
{
\renewcommand{\arraystretch}{1.3}
\caption{OLS Estimation results of equation (\ref{est_eq}) augmented with $\mathbf{W}_0\textbf{X}$ and $\mathbf{W}_0^2\textbf{X}$ as regressors, ($\mathbf{W}:$ \emph{Co-authorship}, $\mathbf{W}_0:$ \emph{Alumni})}
\label{tab:ols_coauthor_2002_2006}
\begin{tabular}{lcccccccccc}
\toprule
\textbf{Variables} & \multicolumn{2}{c}{\textbf{2002}} & \multicolumn{2}{c}{\textbf{2003}} & \multicolumn{2}{c}{\textbf{2004}} & \multicolumn{2}{c}{\textbf{2005}} & \multicolumn{2}{c}{\textbf{2006}} \\
\cmidrule(lr){1-1}\cmidrule(lr){2-3}\cmidrule(lr){4-5}\cmidrule(lr){6-7}\cmidrule(lr){8-9}\cmidrule(lr){10-11}
 & Coef. & SE & Coef. & SE & Coef. & SE & Coef. & SE & Coef. & SE \\
\midrule
$\mathbf{W}\times \texttt{log(citation8)}$ & 0.339*** & 0.071 & 0.395*** & 0.055 & 0.460*** & 0.049 & 0.457*** & 0.047 & 0.460*** & 0.040 \\
$\texttt{Editor}$ & -0.037 & 0.144 & -0.026 & 0.128 & -0.039 & 0.140 & 0.033 & 0.137 & 0.064 & 0.125 \\
$\texttt{Different Gender}$ & 0.362*** & 0.134 & 0.262** & 0.112 & 0.209** & 0.095 & 0.154* & 0.085 & 0.135* & 0.081 \\
$\texttt{Pages}$ & 0.027*** & 0.004 & 0.025*** & 0.004 & 0.021*** & 0.003 & 0.018*** & 0.003 & 0.017*** & 0.003 \\
$\texttt{Authors}$ & 0.076 & 0.056 & 0.094** & 0.047 & 0.077* & 0.041 & 0.085** & 0.036 & 0.067** & 0.030 \\
$\texttt{References}$ & 0.011*** & 0.003 & 0.010*** & 0.002 & 0.009*** & 0.002 & 0.010*** & 0.002 & 0.011*** & 0.001 \\
$\texttt{Isolated}\_{\texttt{coauthor}}$ & 0.996*** & 0.281 & 1.204*** & 0.221 & 1.320*** & 0.194 & 1.277*** & 0.186 & 1.285*** & 0.163 \\
$\mathbf{W}\times \texttt{Editor}$ & 0.119 & 0.294 & 0.090 & 0.185 & -0.009 & 0.167 & -0.077 & 0.149 & -0.039 & 0.142 \\
$\mathbf{W}\times \texttt{Different Gender}$ & -0.025 & 0.211 & 0.084 & 0.143 & -0.049 & 0.128 & -0.004 & 0.121 & -0.030 & 0.110 \\
$\mathbf{W}_0\times \texttt{Editor}$ & 0.468 & 0.462 & 0.525 & 0.392 & 0.295 & 0.414 & 0.143 & 0.440 & 0.404 & 0.417 \\
$\mathbf{W}_0\times \texttt{Different Gender}$ & 0.141 & 0.349 & 0.062 & 0.302 & 0.002 & 0.243 & -0.097 & 0.218 & -0.235 & 0.244 \\
$\mathbf{W}_0^2\times \texttt{Editor}$ & 0.655 & 0.676 & -0.168 & 0.575 & -0.007 & 0.596 & 0.056 & 0.601 & -0.005 & 0.635 \\
$\mathbf{W}_0^2\times \texttt{Different Gender}$ & -1.536** & 0.689 & -1.029* & 0.584 & -0.334 & 0.518 & -0.173 & 0.425 & 0.001 & 0.528 \\
$\texttt{Constant}$ & 1.599*** & 0.311 & 1.512*** & 0.254 & 1.494*** & 0.228 & 1.569*** & 0.217 & 1.553*** & 0.191 \\
\midrule
Journal dummies & Yes &  & Yes &  & Yes &  & Yes &  & Yes &  \\
Year fixed effects & Yes &  & Yes &  & Yes &  & Yes &  & Yes &  \\
Alumni fixed effects & Yes &  & Yes &  & Yes &  & Yes &  & Yes &  \\
\midrule
$n$ & 729 &  & 961 &  & 1187 &  & 1412 &  & 1628 &  \\
$R^2$ & 0.3134 &  & 0.3327 &  & 0.3381 &  & 0.3195 &  & 0.3153 &  \\
\bottomrule
\end{tabular}
}
\begin{flushleft}\footnotesize \hspace{0.5cm} Note: Standard errors (SE) are clustered at the \emph{Co-author} network.  $^{*}\,p<0.10$, $^{**}\,p<0.05$, $^{***}\,p<0.01$.\end{flushleft}
\end{table}
\end{landscape}

\newpage
\begin{landscape} 
\begin{table}[!htbp]\centering
\small
{
\renewcommand{\arraystretch}{1.1}
\caption{First-Stage OLS: $\mathbf{W} \mathbf{S}$ on $\mathbf{W}_{0}\mathbf{S}$, ($\mathbf{W}:$ \emph{Co-authorship}, $\mathbf{W}_0:$ \emph{Alumni})}
\label{tab:first_stage_G3SLS_coauthor}
\begin{tabular}{lcccccccccc}
\toprule
\textbf{Variables} & \multicolumn{2}{c}{\textbf{2002}} & \multicolumn{2}{c}{\textbf{2003}} & \multicolumn{2}{c}{\textbf{2004}} & \multicolumn{2}{c}{\textbf{2005}} & \multicolumn{2}{c}{\textbf{2006}} \\
\cmidrule(lr){1-1}\cmidrule(lr){2-3}\cmidrule(lr){4-5}\cmidrule(lr){6-7}\cmidrule(lr){8-9}\cmidrule(lr){10-11}
 & Coef. & SE & Coef. & SE & Coef. & SE & Coef. & SE & Coef. & SE \\
\midrule
\multicolumn{11}{l}{Panel A: Dependent variable: $\mathbf{W}\times \texttt{log(citation8)}$} \\
\midrule
$\mathbf{W}_{0}\times \texttt{log(citation8)}$ & 0.496*** & 0.038 & 0.582*** & 0.036 & 0.624*** & 0.036 & 0.678*** & 0.037 & 0.717*** & 0.042 \\
$\mathbf{W}_{0}\times \texttt{Editor}$ & 0.425 & 0.937 & 1.391 & 1.134 & 2.162* & 1.115 & 1.94* & 1.01 & 2.277** & 0.961 \\
$\mathbf{W}_{0}\times \texttt{Different\hspace{0.8mm}Gender}$ & 0.137 & 0.574 & -0.024 & 0.476 & -0.151 & 0.491 & -0.532 & 0.438 & -0.723 & 0.455 \\
\midrule
$n$ & 729 &  & 961 &  & 1187 &  & 1412 &  & 1628 &  \\
$F$ & 87.78 &  & 128.99 &  & 145.23 &  & 127.24 &  & 103.55 &  \\
\midrule
\multicolumn{11}{l}{Panel B: Dependent variable: $\mathbf{W}\times \texttt{Editor}$} \\
\midrule
$\mathbf{W}_{0}\times \texttt{log(citation8)}$ & 0.009** & 0.004 & 0.013*** & 0.004 & 0.013*** & 0.004 & 0.016*** & 0.004 & 0.014*** & 0.003 \\
$\mathbf{W}_{0}\times \texttt{Editor}$ & 0.137 & 0.098 & 0.186* & 0.108 & 0.232** & 0.103 & 0.168** & 0.079 & 0.192*** & 0.071 \\
$\mathbf{W}_{0}\times \texttt{Different\hspace{0.8mm}Gender}$ & 0.033 & 0.046 & -0.051 & 0.041 & -0.029 & 0.038 & -0.043 & 0.029 & -0.033 & 0.03 \\
\midrule
$n$ & 729 &  & 961 &  & 1187 &  & 1412 &  & 1628 &  \\
$F$ & 9.82 &  & 13.00 &  & 12.58 &  & 10.09 &  & 10.39 &  \\
\midrule
\multicolumn{11}{l}{Panel C: Dependent variable: $\mathbf{W}\times \texttt{Different\hspace{0.8mm}Gender}$} \\
\midrule
$\mathbf{W}_{0}\times \texttt{log(citation8)}$ & 0.013*** & 0.003 & 0.018*** & 0.004 & 0.021*** & 0.003 & 0.024*** & 0.003 & 0.026*** & 0.003 \\
$\mathbf{W}_{0}\times \texttt{Editor}$ & 0.103 & 0.081 & 0.04 & 0.088 & 0.144 & 0.099 & 0.174* & 0.096 & 0.159* & 0.088 \\
$\mathbf{W}_{0}\times \texttt{Different\hspace{0.8mm}Gender}$ & 0.142** & 0.058 & 0.172*** & 0.062 & 0.111** & 0.054 & 0.044 & 0.046 & 0.085* & 0.049 \\
\midrule
$n$ & 729 &  & 961 &  & 1187 &  & 1412 &  & 1628 &  \\
$F$ & 18.85 &  & 37.56 &  & 47.12 &  & 45.98 &  & 84.54 &  \\
\bottomrule
\end{tabular}
}
\vspace{0.05cm}
\begin{flushleft}\footnotesize \hspace{1cm}Note: Standard errors (SE) are clustered at the \emph{Co-author} network. $^{*}p<0.10$, $^{**}p<0.05$, $^{***}p<0.01$.\end{flushleft}
\end{table}
\end{landscape}

\putbib[typ]
\end{bibunit}


\begin{thebibliography}{41}
\newcommand{\enquote}[1]{``#1''}
\providecommand{\natexlab}[1]{#1}
\providecommand{\url}[1]{\texttt{#1}}
\providecommand{\urlprefix}{URL }

\bibitem[{Aldasoro and Alves(2018)}]{Aldasoro_Alves_JFinStab}
Aldasoro, \protect{I\~{n}aki} and \protect{Iv\'{a}n} Alves. 2018.
\newblock \enquote{Multiplex Interbank Networks and Systemic Importance: An
  Application to European Data.}
\newblock \emph{Journal of Financial Stability} 35:17 -- 37.

\bibitem[{Ammermueller and \protect{J\"{o}rn-Steffen}
  Pischke(2009)}]{Ammermueller_Pischke_JLO}
Ammermueller, Andreas and \protect{J\"{o}rn-Steffen} Pischke. 2009.
\newblock \enquote{Peer Effects in European Primary Schools: Evidence from the
  Progress in International Reading Literacy Study.}
\newblock \emph{Journal of Labor Economics} 27~(3):315--348.

\bibitem[{An et~al.(2026)An, Estrada, Estrada, and
  Jacho-Chavez}]{An_et_al_2026}
An, Weihua, Pablo Estrada, Juan Estrada, and David Jacho-Chavez. 2026.
\newblock \enquote{Estimating Peer Influence in Multilayer Networks.}
\newblock \emph{Social Networks} 86:252--265.

\bibitem[{Atkisson et~al.(2020)Atkisson, G\'{o}rski, Jackson, Ho{\l{}}yst, and
  D'Souza}]{Jackson_Multiplex}
Atkisson, Curtis, Piotr~J. G\'{o}rski, Matthew~O. Jackson, Janusz~A.
  Ho{\l{}}yst, and Raissa~M. D'Souza. 2020.
\newblock \enquote{Why Understanding Multiplex Social Network Structuring
  Processes Will Help Us Better Understand the Evolution of Human Behavior.}
\newblock \emph{Evolutionary Anthropology} 29~(3):102--107.

\bibitem[{Baltagi and Deng(2015)}]{BD:2015}
Baltagi, Badi~H. and Ying Deng. 2015.
\newblock \enquote{EC3SLS Estimator for a Simultaneous System of Spatial
  Autoregressive Equations with Random Effects.}
\newblock \emph{Econometric Reviews} 34~(6-10):659--694.

\bibitem[{Baltagi and Liu(2011)}]{BL:2011}
Baltagi, Badi~H. and Long Liu. 2011.
\newblock \enquote{Instrumental Variable Estimation of a Spatial Autoregressive
  Panel Model with Random Effects.}
\newblock \emph{Economics Letters} 111~(2):135--137.

\bibitem[{Boccaletti et~al.(2014)Boccaletti, Bianconi, Criado, del Genio,
  G{\'{o}}mez-Garde{\~{n}}es, Romance, Sendi{\~{n}}a-Nadal, Wang, and
  Zanin}]{Boccaletti2014}
Boccaletti, S., G.~Bianconi, R.~Criado, C.~I. del Genio,
  J.~G{\'{o}}mez-Garde{\~{n}}es, M.~Romance, I.~Sendi{\~{n}}a-Nadal, Z.~Wang,
  and M.~Zanin. 2014.
\newblock \enquote{The Structure and Dynamics of Multilayer Networks.}
\newblock \emph{Physics Reports} 544:1--122.

\bibitem[{Bramoull{\'{e}}, Djebbari, and Fortin(2009)}]{Bramoulle2009}
Bramoull{\'{e}}, Yann, Habiba Djebbari, and Bernard Fortin. 2009.
\newblock \enquote{Identification of Peer Effects through Social Networks.}
\newblock \emph{Journal of Econometrics} 150~(1):41--55.

\bibitem[{Carrell, Sacerdote, and West(2013)}]{carrell2013}
Carrell, Scott~E, Bruce~I Sacerdote, and James~E West. 2013.
\newblock \enquote{From Natural Variation to Optimal Policy? The Importance of
  Endogenous Peer Group Formation.}
\newblock \emph{Econometrica} 81~(3):855--882.

\bibitem[{Chan et~al.(2024)Chan, Estrada, Huynh, Jacho-Chavez, Lam, and
  Sanchez-Aragon}]{Chan_et_al_social_effects}
Chan, TszKin~Julian, Juan Estrada, Kim Huynh, David Jacho-Chavez, Chungsang~Tom
  Lam, and Leonardo Sanchez-Aragon. 2024.
\newblock \enquote{Estimating Social Effects with Randomized and Observational
  Network Data.}
\newblock \emph{Journal of Econometric Methods} 13~(2):205--224.

\bibitem[{Chandrasekhar and Lewis(2016)}]{Chandrasekhar_unpub_2016}
Chandrasekhar, Arun~G. and Randall Lewis. 2016.
\newblock \enquote{Econometrics of Sampled Networks.}
\newblock Unpublished Manuscript.

\bibitem[{Colussi(2018)}]{Colussi2018}
Colussi, Tommaso. 2018.
\newblock \enquote{Social Ties in Academia: A Friend Is a Treasure.}
\newblock \emph{The Review of Economics and Statistics} 100~(1):45--50.

\bibitem[{de~Paula(2017)}]{Paula2017}
de~Paula, {\'{A}}ureo. 2017.
\newblock \emph{Econometrics of Network Models}, \emph{Econometric Society
  Monographs}, vol.~1.
\newblock Cambridge University Press, 268--323.

\bibitem[{Erd\"{o}s and R\'{e}nyi(1959)}]{Erdos1959}
Erd\"{o}s, P and A~R\'{e}nyi. 1959.
\newblock \enquote{On Random Graphs.}
\newblock \emph{Publicationes Mathematicae (Debrecen)} 6:290--297.

\bibitem[{Estrada(2022)}]{netivreg_g3sls}
Estrada, Juan. 2022.
\newblock \emph{Causal Inference in Multilayered Networks}.
\newblock Ph.d. dissertation, Emory University.
\newblock Available at
  \href{https://etd.library.emory.edu/concern/etds/3r074w158}{https://etd.library.emory.edu/concern/etds/3r074w158}.

\bibitem[{Estrada, Ding, and Rosales-Castillo(2025)}]{EstradaDingRosales2025}
Estrada, Juan, Cheng Ding, and Cristhian Rosales-Castillo. 2025.
\newblock \enquote{Social Interactions in Multilayer Observational Networks.}
\newblock Unpublished manuscript.

\bibitem[{Estrada et~al.(2025)Estrada, Estrada, Huynh, Jacho-Chavez, and
  Sanchez-Aragon}]{netivreg_stata_journal}
Estrada, Pablo, Juan Estrada, Kim Huynh, David Jacho-Chavez, and Leonardo
  Sanchez-Aragon. 2025.
\newblock \enquote{netivreg: Estimation of Peer Effects in Endogenous Social
  Networks.}
\newblock \emph{The Stata Journal} 25~(2):344--373.

\bibitem[{Falk and Ichino(2006)}]{falk2006}
Falk, Armin and Andrea Ichino. 2006.
\newblock \enquote{Clean Evidence on Peer Effects.}
\newblock \emph{Journal of labor economics} 24~(1):39--57.

\bibitem[{Fruehwirth(2014)}]{Fruehwirth_RevStat}
Fruehwirth, Jane~Cooley. 2014.
\newblock \enquote{Can Achievement Peer Effect Estimates Inform Policy? A View
  from Inside the Black Box.}
\newblock \emph{The Review of Economics and Statistics} 96~(3):514--523.

\bibitem[{Goldsmith-Pinkham and Imbens(2013)}]{Goldsmith-Pinkham2013}
Goldsmith-Pinkham, Paul and Guido~W. Imbens. 2013.
\newblock \enquote{Social Networks and the Identification of Peer Effects.}
\newblock \emph{Journal of Business and Economic Statistics} 31~(3):253--264.

\bibitem[{Graham(2017)}]{Graham2017}
Graham, Bryan~S. 2017.
\newblock \enquote{An Econometric Model of Network Formation With Degree
  Heterogeneity.}
\newblock \emph{Econometrica} 85~(4):1033--1063.

\bibitem[{Ji and Jin(2016)}]{AoAS_coauthorship}
Ji, Pengsheng and Jiashun Jin. 2016.
\newblock \enquote{Coauthorship and Citation Networks for Statisticians.}
\newblock \emph{Annals of Applied Statistics} 10~(4):1779--1812.

\bibitem[{Johnsson and Moon(2021)}]{Johnsson2019}
Johnsson, Ida and Hyungsik~Roger Moon. 2021.
\newblock \enquote{Estimation of Peer Effects in Endogenous Social Networks:
  Control Function Approach.}
\newblock \emph{The Review of Economics and Statistics} 103~(2):328--345.

\bibitem[{Kelejian and Piras(2014)}]{Kelejian_et_al_2014}
Kelejian, Harry~H and Gianfranco Piras. 2014.
\newblock \enquote{Estimation of spatial models with endogenous weighting
  matrices, and an application to a demand model for cigarettes.}
\newblock \emph{Regional Science and Urban Economics} 46:140--149.

\bibitem[{Kelejian and Prucha(1998)}]{Kelejian1998}
Kelejian, Harry~H. and Ingmar~R. Prucha. 1998.
\newblock \enquote{A Generalized Spatial Two-Stage Least Squares Procedure for
  Estimating a Spatial Autoregressive Model with Autoregressive Disturbances.}
\newblock \emph{Journal of Real Estate Finance and Economics} 17~(1):99--121.

\bibitem[{Kelejian and Prucha(1999)}]{Kelejian_Prucha_1999_ER}
---{}---{}---. 1999.
\newblock \enquote{A Generalized Moments Estimator for the Autoregressive
  Parameter in a Spatial Model.}
\newblock \emph{International Economic Review} 40~(2):509--533.

\bibitem[{Kivela et~al.(2014)Kivela, Arenas, Barthelemy, Gleeson, Moreno, and
  Porter}]{Kivela_multilayer_network_2014}
Kivela, M., A.~Arenas, M.~Barthelemy, J.~P. Gleeson, Y.~Moreno, and M.~A.
  Porter. 2014.
\newblock \enquote{Multilayer Networks.}
\newblock \emph{Journal of Complex Networks} 2~(3):203–271.

\bibitem[{K{\"o}nig, Liu, and Zenou(2019)}]{Konig_et_al_2019}
K{\"o}nig, Michael, Xiaodong Liu, and Yves Zenou. 2019.
\newblock \enquote{R\&D networks: Theory, empirics, and policy implications.}
\newblock \emph{The Review of Economics and Statistics} 101~(3):476--491.

\bibitem[{Lee(2003)}]{Lee2003}
Lee, Lung~Fei. 2003.
\newblock \enquote{Best Spatial Two-Stage Least Squares Estimators for a
  Spatial Autoregressive Model with Autoregressive Disturbances.}
\newblock \emph{Econometric Reviews} 22~(4):307--335.

\bibitem[{Lee(2007)}]{Lee2007}
---{}---{}---. 2007.
\newblock \enquote{Identification and Estimation of Econometric Models with
  Group Interactions, Contextual Factors and Fixed Effects.}
\newblock \emph{Journal of Econometrics} 140~(2):333--374.

\bibitem[{Lee et~al.(2021)Lee, Liu, Patacchini, and Zenou}]{Lee_et_al_2021}
Lee, Lung-Fei, Xiaodong Liu, Eleonora Patacchini, and Yves Zenou. 2021.
\newblock \enquote{Who is the key player? A network analysis of juvenile
  delinquency.}
\newblock \emph{Journal of Business \& Economic Statistics} 39~(3):849--857.

\bibitem[{Lewbel, Qu, and Tang(2023)}]{Lewbel_Qu_Tang}
Lewbel, Arthur, Xi~Qu, and Xun Tang. 2023.
\newblock \enquote{Social Networks with Unobserved Links.}
\newblock \emph{Journal of Political Economy} 131~(4):898--946.

\bibitem[{Liu, Patacchini, and Zenou(2014)}]{liu2014}
Liu, Xiaodong, Eleonora Patacchini, and Yves Zenou. 2014.
\newblock \enquote{Endogenous Peer Effects: Local Aggregate or Local Average?}
\newblock \emph{Journal of Economic Behavior \& Organization} 103:39 -- 59.

\bibitem[{Manta et~al.(2022)Manta, Ho, Huynh, and Jacho-Chavez}]{manta2021}
Manta, Alexandra, Anson~T.Y. Ho, Kim~P. Huynh, and David~T. Jacho-Chavez. 2022.
\newblock \enquote{Estimating Social Effects in a Multilayered Linear-in-Means
  Model with Network Data.}
\newblock \emph{Statistics \& Probability Letters} 183:109331.

\bibitem[{Mariano(2001)}]{mariano2001simultaneous}
Mariano, Roberto~S. 2001.
\newblock \enquote{Simultaneous Equation Model Estimators: Statistical
  Properties and Practical Implications.}
\newblock In \emph{A Companion to Theoretical Econometrics}, edited by Badi~H.
  Baltagi, chap.~6. Oxford: Blackwell Publishers, 122--143.

\bibitem[{Newman(2004)}]{Newman_2004a}
Newman, Mark E.~J. 2004.
\newblock \enquote{Coauthorship Networks and Patterns of Scientific
  Collaboration.}
\newblock \emph{Proceedings of the National Academy of Sciences}
  101~(90001):5200--5205.

\bibitem[{Pomeroy, Dasandi, and Mikhaylov(2019)}]{PoliSci_Multiplex}
Pomeroy, Caleb, Niheer Dasandi, and Slava~Jankin Mikhaylov. 2019.
\newblock \enquote{Multiplex Communities and the Emergence of International
  Conflict.}
\newblock \emph{PLoS One} 14~(10):1--17.

\bibitem[{Reza, Manchanda, and Chong(2021)}]{Reza2019}
Reza, Sadat, Puneet Manchanda, and Juin-Kuan Chong. 2021.
\newblock \enquote{Identification and Estimation of Endogenous Peer Effects in
  the Presence of Multiple Reference Groups.}
\newblock \emph{Management Science} ~(8):5070--5105.

\bibitem[{Rodriguez et~al.(2024)Rodriguez, Huynh, Jacho-Ch\'avez, and
  S\'anchez-Arag\'on}]{RHJS:2024}
Rodriguez, Belicia, Kim~P. Huynh, David~T. Jacho-Ch\'avez, and Leonardo
  S\'anchez-Arag\'on. 2024.
\newblock \enquote{Abstract Readability: Evidence from Top-5 Economics
  Journals.}
\newblock \emph{Economics Letters} 235:111541.

\bibitem[{Sacerdote(2001)}]{Sacerdote_QJE}
Sacerdote, Bruce. 2001.
\newblock \enquote{Peer Effects with Random Assignment: Results for Dartmouth
  Roommates.}
\newblock \emph{Quarterly Journal of Economics} 116~(2):681--704.

\bibitem[{Sheng and Sun(2025)}]{Sheng_et_al_2025}
Sheng, Shuyang and Xiaoxia Sun. 2025.
\newblock \enquote{Social interactions in endogenous groups.}
\newblock \emph{arXiv preprint arXiv:2306.01544}
  \urlprefix\url{https://arxiv.org/abs/2306.01544}.

\end{thebibliography}


\begin{thebibliography}{5}
\newcommand{\enquote}[1]{``#1''}
\providecommand{\natexlab}[1]{#1}
\providecommand{\url}[1]{\texttt{#1}}
\providecommand{\urlprefix}{URL }

\bibitem[{de~Paula(2017)}]{Paula2017}
de~Paula, {\'{A}}ureo. 2017.
\newblock \emph{Econometrics of Network Models}, \emph{Econometric Society
  Monographs}, vol.~1.
\newblock Cambridge University Press, 268--323.

\bibitem[{Estrada(2022)}]{netivreg_g3sls}
Estrada, Juan. 2022.
\newblock \emph{Causal Inference in Multilayered Networks}.
\newblock Ph.d. dissertation, Emory University.
\newblock Available at
  \href{https://etd.library.emory.edu/concern/etds/3r074w158}{https://etd.library.emory.edu/concern/etds/3r074w158}.

\bibitem[{Estrada, Ding, and Rosales-Castillo(2025)}]{EstradaDingRosales2025}
Estrada, Juan, Cheng Ding, and Cristhian Rosales-Castillo. 2025.
\newblock \enquote{Social Interactions in Multilayer Observational Networks.}
\newblock Unpublished manuscript.

\bibitem[{Goyal, van~der Leij, and Moraga-Gonz{\'{a}}lez(2006)}]{Goyal2006}
Goyal, Sanjeev, Marco~J. van~der Leij, and Jos{\'{e}}~Luis
  Moraga-Gonz{\'{a}}lez. 2006.
\newblock \enquote{Economics: An Emerging Small World.}
\newblock \emph{Journal of Political Economy} 114~(2):403--412.

\bibitem[{Humphries and Gurney(2008)}]{Humphries2008}
Humphries, {Mark D.} and Kevin Gurney. 2008.
\newblock \enquote{Network `small-world-ness': A quantitative method for
  determining canonical network equivalence.}
\newblock \emph{PLoS One} 3~(4).

\end{thebibliography}
\end{document}